\documentclass[aps,prx,twocolumn,superscriptaddress,nofootinbib,10pt]{revtex4-2}
\usepackage[utf8]{inputenc}
\usepackage[english]{babel}
\usepackage[T1]{fontenc}
\usepackage{amsmath}
\usepackage{amsfonts}
\usepackage{bbm}
\usepackage{amssymb}
\usepackage{amsthm}
\usepackage{braket}
\usepackage[dvipsnames, table]{xcolor}
\usepackage{hyperref}
\hypersetup{
  colorlinks  = true, 
  urlcolor    = Blue, 
  linkcolor   = BrickRed, 
  citecolor   = PineGreen 
}
\usepackage{graphicx}
\usepackage{diagbox}
\usepackage{tikz}
\usetikzlibrary{tikzmark}

\usepackage{enumitem}

\usepackage{chngcntr} 

\usepackage{array}
\newcolumntype{x}[1]{>{\centering\let\newline\\\arraybackslash\hspace{0pt}}p{#1}}

\newcommand{\tsf}[1]{\texttt{#1}}

\newcommand{\plus}{\text{+}}

\newcommand{\id}{\textsf{id}}
\newcommand{\X}{{\textsf{X}}}
\newcommand{\Z}{{\textsf{Z}}}
\newcommand{\Os}{{\textsf{O}}}
\def\paddedtext#1#2{\leavevmode\hbox to#1{#2\hss}\ignorespaces}

\newcommand{\magic}[9]{
  \begin{tabular}{x{3mm}|x{3mm}|x{3mm}}
    {#1} & {#2} & {#3} \\
    \hline
    {#4} & {#5} & {#6} \\
    \hline
    {#7} & {#8} & {#9}
  \end{tabular}
}
\newcommand{\magicc}[9]{
  \begin{tabular}{c|c|c}
    {#1} & {#2} & {#3} \\
    \hline
    {#4} & {#5} & {#6} \\
    \hline
    {#7} & {#8} & {#9}
  \end{tabular}
}
\newcommand{\magicp}[9]{
  \begin{tabular}{c|c|c}
    \paddedtext{3mm}{{#1}} & \paddedtext{3mm}{{#2}} & \paddedtext{3mm}{{#3}} \\
    \hline
    \paddedtext{3mm}{{#4}} & \paddedtext{3mm}{{#5}} & \paddedtext{3mm}{{#6}} \\
    \hline
    \paddedtext{3mm}{{#7}} & \paddedtext{3mm}{{#8}} & \paddedtext{3mm}{{#9}}
  \end{tabular}
}
\newcommand{\magicsf}[9]{
  \begin{tabular}{c|c|c}
    \tsf{#1} & \tsf{#2} & \tsf{#3} \\
    \hline
    \tsf{#4} & \tsf{#5} & \tsf{#6} \\
    \hline
    \tsf{#7} & \tsf{#8} & \tsf{#9}
  \end{tabular}
}

\newtheorem{theorem}{Theorem}
\newtheorem{definition}{Definition}

\newtheorem{lemma}[theorem]{Lemma}

\begin{document}

\title{Competing automorphisms and disordered Floquet codes}

\author{Cory T. Aitchison}
\affiliation{DAMTP, University of Cambridge, Wilberforce Road, Cambridge, CB3
0WA, UK}
\author{Benjamin Béri}
\affiliation{DAMTP, University of Cambridge, Wilberforce Road, Cambridge, CB3
0WA, UK}
\affiliation{T.C.M. Group, Cavendish Laboratory, University of Cambridge, J.J. Thomson Avenue, Cambridge, CB3 0HE, UK\looseness=-1}

\begin{abstract}
  Topological order is a promising basis for quantum error correction, a key
  milestone towards large-scale quantum computing. Floquet codes provide a
  dynamical scheme for this while also exhibiting Floquet-enriched topological
  order (FET) where anyons periodically undergo a measurement-induced
  automorphism that acts uniformly in space. We study disordered Floquet codes
  where automorphisms have a spatiotemporally heterogeneous distribution---the
  automorphisms ``compete''. We characterize the effect of this competition,
  showing how key features of the purification dynamics of mixed codestates can
  be inferred from anyon and automorphism properties for any Abelian topological
  order. This perspective can explain the protection or measurement of logical
  information in a dynamic automorphism (DA) code when subjected to a noise
  model of missing measurements. We demonstrate this using a DA color code with
  perturbed measurement sequences. The framework of competing automorphisms
  captures essential features of Floquet codes and robustness to noise, and may
  elucidate key mechanisms involving topological order, automorphisms, and
  fault-tolerance.
\end{abstract}

\maketitle

\section{Introduction}
Quantum error-correcting (QEC) codes are crucial for the effective, scalable
operation of current and future quantum computers \cite{steaneError1996,
gottesmanStabilizer1997, kitaevQuantum1997, kitaevQuantum1997a, knillTheory1997,
preskillFaultTolerant1998, shorFaulttolerant1997, knillTheory2000,
schusterPolynomialtime2024}. These codes envision a smaller number of protected
logical qubits encoded within a larger Hilbert space of physical qubits.
Similarly, quantum systems with long-range entanglement and topological order
(TO) display robustness to local perturbations, host excitations with fractional
statistics known as anyons, and support a topology-dependent ground-state
degeneracy \cite{wenVacuum1989, wittenQuantum1989}. Because of their inherent
robustness, TOs are promising candidates for QEC codes.

These topological QEC codes have historically been static: the code properties,
such as the stabilizer group in stabilizer codes \cite{gottesmanStabilizer1997,
poulinStabilizer2005}, are fixed through time. A recent class of codes,
so-called dynamical codes, forgo this notion. They are inherently time-evolving
and this can improve their characteristics or enable novel behaviors
\cite{fuError2024, davydovaFloquet2023, zhuQubit2023, zhuNishimoris2023}. The
first such example was the honeycomb code \cite{hastingsDynamically2021}, a form
of dynamical code called a Floquet code due to its time-periodic evolution. By
measuring only two-qubit Pauli operators in a particular sequence, a stabilizer
group emerges that enables the detection and correction of errors
\cite{vuillotPlanar2021, haahBoundaries2022, ellisonFloquet2023}. Each stage in
the sequence generates a TO equivalent to a static toric code ($TC$)
\cite{bravyiQuantum1998, dennisTopological2002, kitaevFaulttolerant2003,
kitaevAnyons2006}. 

The close relationship between TOs and QEC has informed our understanding of
both quantum matter and error-correction protocols. A prominent example is that
of topological defects such as transparent domain walls, which permute the
labels of anyon worldlines crossing the wall \cite{kesselringBoundaries2018,
kesselringAnyon2024}. These permutations, or ``automorphisms'', are elements of
the symmetry group of the anyons that preserves fusion and braiding data. In
$(2+1)$D these domain walls terminate in point-like defects known as twists,
which exhibit exotic behavior including anyon localization and enable
topological quantum computation through their non-Abelian fusion and braiding
\cite{bombinTopological2010a, brownTopological2013}. 

A key feature of the honeycomb Floquet code is that its measurement sequence
implements a global automorphism every period that exchanges two of the $TC$
anyons. The time-periodic nature of these automorphisms extend the TO to a
time-crystalline-like phase \cite{aasenAdiabatic2022}. This Floquet-enriched TO
(FET) arises in dynamical codes through measurements, while FETs were originally
proposed in unitarily driven $(2+1)$D TO systems \cite{potterDynamically2017,
poRadical2017}. Dynamic automorphism (DA) codes extend this notion further by
using measurement-induced automorphisms to construct quantum logic gates
\cite{davydovaQuantum2024}.

An important question is whether these FETs or codes are robust against
perturbations or disorder. For the honeycomb code, recent results suggest
competitive levels of tolerance to fabrication defects in realistic physical
devices \cite{aasenFaultTolerant2023, mclauchlanAccommodating2024}, and that its
FET persists amid random modifications to the measurement sequence, such as
omitting or splitting up the two-qubit Pauli operators \cite{vuStable2024,
sriramTopology2023}. Characterizing the effects of disorder in more general
Floquet codes is, however, an open question. Previous results have largely
focused on homogeneous cases concerning FETs with automorphisms acting globally
\cite{kesselringAnyon2024, davydovaFloquet2023}. Analyses of topological defects
in Floquet codes have primarily considered spatial domain walls
\cite{ellisonFloquet2023, haahBoundaries2022}. Perturbations to temporal domain
walls---fundamentally modifying the Floquet evolution---were analyzed in terms
of microscopic measurements in the honeycomb code~\cite{vuStable2024,
sriramTopology2023}. However, a framework capturing general aspects that follow
from anyon data, i.e., a topological quantum field theory (TQFT) description of
the dynamics, is missing.

Here we study Abelian TOs with measurement-induced automorphisms, and ask how do
they evolve when perturbations cause the temporal domain walls to be
spatiotemporally heterogeneous. These multiple domain walls, or ``competing
automorphisms'', nontrivially influence the codespace. We find that their
evolution and purification dynamics are fundamentally decided by the homology
classes of the domain walls' boundary segments, and the properties of anyons
that localize at the boundaries' spatiotemporal twist defects or are invariant
under the corresponding transition maps (that we shall define in
Section~\ref{sec:background}). We show that such perturbed measurement-induced
FETs can have rich phenomenology and form a new part of the landscape of quantum
matter emerging from disordered measurements \cite{liQuantum2018,
  liMeasurementdriven2019, choiQuantum2020, jianMeasurementinduced2020,
  ippolitiEntanglement2021, fisherRandom2023, skinnerMeasurementInduced2019,
  baoTheory2020, ippolitiPostselectionFree2021, liCross2023, liStatistical2021,
  gullansDynamical2020, nahumEntanglement2020, nahumMeasurement2021,
zabaloCritical2020, sommersDynamically2024, behrendsSurface2024,
sriramTopology2023, lavasaniMeasurementinduced2021}.

We also show how the framework of competing automorphisms can be used to
characterize disordered dynamical QEC codes. To illustrate this, we explore how
perturbations to the circuit protocol in the form of a noise model with missing
measurements affects dynamic automorphism (DA) codes.  Under this noise model,
we show that the code's ability to protect logical information can be readily
understood via its competing automorphisms. We formulate necessary and
sufficient conditions for different implementations of the DA color code to be
unaffected by missing measurements, i.e., for a logical Hilbert subspace to
remain unmeasured despite the disorder. Parameterizing the strength of the
perturbations, we identify connections between the automorphisms of the code,
and characterize the topological space of FETs in this exotic dynamical quantum
matter. Our approach is generalizable to other dynamical codes and TOs
understandable through their automorphisms.

The rest of the paper is organized as follows: in Section~\ref{sec:background},
we provide background on FETs and the DA color code---we use this code as a
motivating example and a concrete illustration for measurement-induced
automorphisms. In Section~\ref{sec:competing}, we introduce competing
automorphisms in dynamical quantum systems. In Section~\ref{sec:disordermodel}
we then introduce a disordered DA color code and analyze its behavior using
competing automorphisms. In Section~\ref{sec:conclusion} we conclude and discuss
some future directions. Appendix~\ref{app:additional} includes additional
background material, while Appendix~\ref{app:disordered} discusses in more
detail the disordered DA color code. Finally, Appendix~\ref{app:critical}
analytically and numerically examines the phases and critical behavior of the DA
color code under our noise models.

\section{\label{sec:background}Background}

\begin{figure}
  \begin{center}
    \def\svgwidth{0.7\columnwidth}
    \begingroup \makeatletter \providecommand\color[2][]{\errmessage{(Inkscape) Color is used for the text in Inkscape, but the package 'color.sty' is not loaded}\renewcommand\color[2][]{}}\providecommand\transparent[1]{\errmessage{(Inkscape) Transparency is used (non-zero) for the text in Inkscape, but the package 'transparent.sty' is not loaded}\renewcommand\transparent[1]{}}\providecommand\rotatebox[2]{#2}\newcommand*\fsize{\dimexpr\f@size pt\relax}\newcommand*\lineheight[1]{\fontsize{\fsize}{#1\fsize}\selectfont}\ifx\svgwidth\undefined \setlength{\unitlength}{767.99990845bp}\ifx\svgscale\undefined \relax \else \setlength{\unitlength}{\unitlength * \real{\svgscale}}\fi \else \setlength{\unitlength}{\svgwidth}\fi \global\let\svgwidth\undefined \global\let\svgscale\undefined \makeatother \begin{picture}(1,1)\lineheight{1}\setlength\tabcolsep{0pt}\put(0,0){\includegraphics[width=\unitlength,page=1]{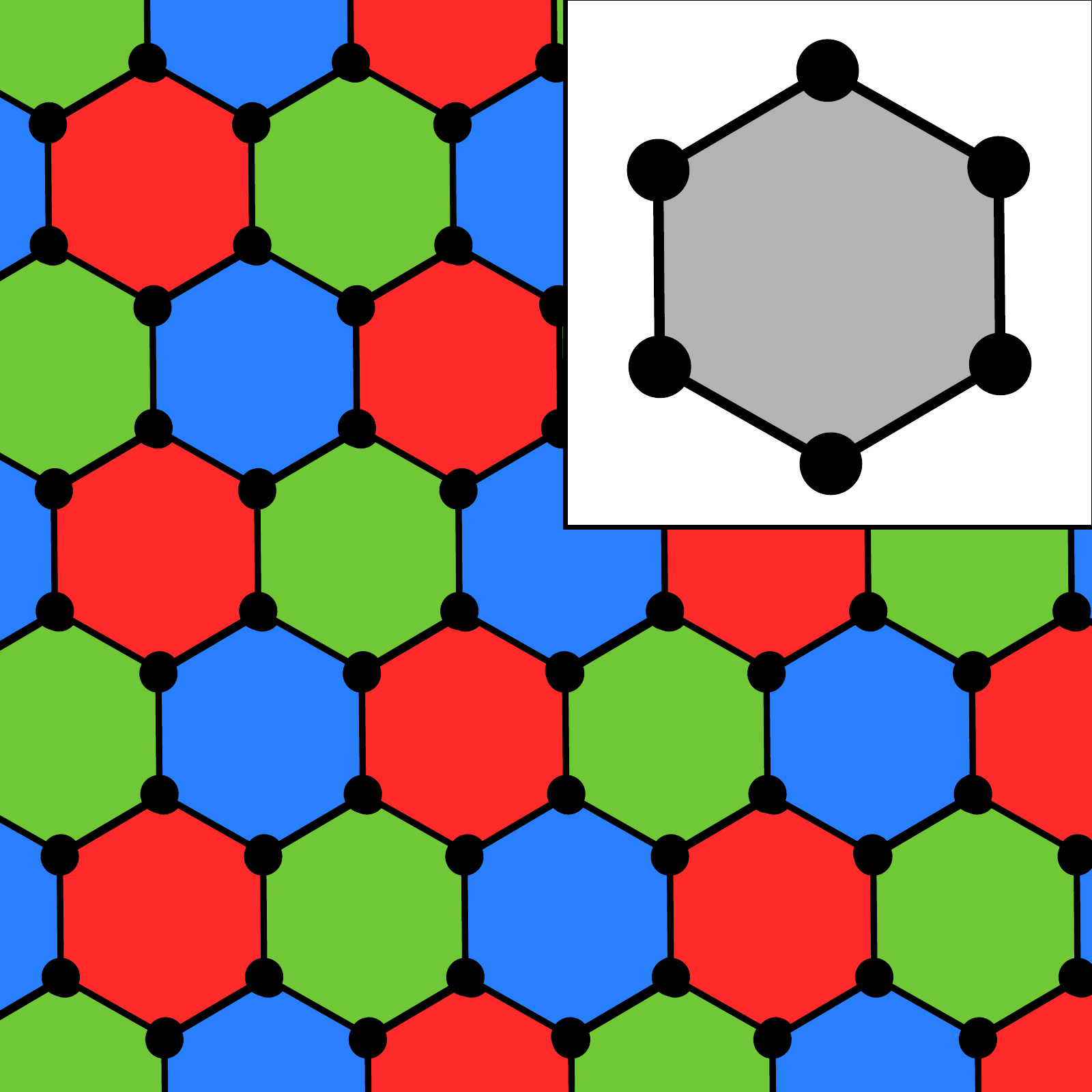}}\put(0.5877707,0.64663914){\color[rgb]{1,1,1}\makebox(0,0)[lt]{\lineheight{1.25}\smash{\begin{tabular}[t]{l}\X\end{tabular}}}}\put(0.533094,0.94847361){\color[rgb]{0,0,0}\makebox(0,0)[lt]{\lineheight{1.25}\smash{\begin{tabular}[t]{l}(a)\end{tabular}}}}\put(0.74130996,0.92013356){\color[rgb]{1,1,1}\makebox(0,0)[lt]{\lineheight{1.25}\smash{\begin{tabular}[t]{l}\X\end{tabular}}}}\put(0.90040365,0.64891064){\color[rgb]{1,1,1}\makebox(0,0)[lt]{\lineheight{1.25}\smash{\begin{tabular}[t]{l}\X\end{tabular}}}}\put(0.89845051,0.82941849){\color[rgb]{1,1,1}\makebox(0,0)[lt]{\lineheight{1.25}\smash{\begin{tabular}[t]{l}\X\end{tabular}}}}\put(0.58678201,0.82746536){\color[rgb]{1,1,1}\makebox(0,0)[lt]{\lineheight{1.25}\smash{\begin{tabular}[t]{l}\X\end{tabular}}}}\put(0.74584148,0.55787033){\color[rgb]{1,1,1}\makebox(0,0)[lt]{\lineheight{1.25}\smash{\begin{tabular}[t]{l}\X\end{tabular}}}}\put(0,0){\includegraphics[width=\unitlength,page=2]{honeycomb.pdf}}\put(0.22794977,0.24078585){\color[rgb]{1,1,1}\makebox(0,0)[lt]{\lineheight{1.25}\smash{\begin{tabular}[t]{l}\Z\end{tabular}}}}\put(0.30982503,0.29442002){\color[rgb]{1,1,1}\makebox(0,0)[lt]{\lineheight{1.25}\smash{\begin{tabular}[t]{l}\tsf{bz}\end{tabular}}}}\put(0.1223668,0.30520226){\color[rgb]{1,1,1}\makebox(0,0)[lt]{\lineheight{1.25}\smash{\begin{tabular}[t]{l}\tsf{gz}\end{tabular}}}}\put(0.21751911,0.1319036){\color[rgb]{1,1,1}\makebox(0,0)[lt]{\lineheight{1.25}\smash{\begin{tabular}[t]{l}\tsf{rz}\end{tabular}}}}\put(0.02066858,0.43350294){\color[rgb]{0,0,0}\makebox(0,0)[lt]{\lineheight{1.25}\smash{\begin{tabular}[t]{l}(b)\end{tabular}}}}\end{picture}\endgroup    \end{center}
  \caption{The $6$-$6$-$6$ honeycomb lattice formed by hexagonal plaquettes,
  each assigned a color: red ($r$), green ($g$), or blue ($b$). Links are
  colored by their terminating plaquettes; a link connecting two red plaquettes
  is also red, for example. A qubit (black circle) occupies each lattice site.
  {Inset (a).} The weight-$6$ operator $P_\X$ constructed from single-qubit
  Pauli $\X$ terms, used in the Hamiltonian. {Inset (b).} A single-qubit $\Z$
  excites $\tsf{bz}$, $\tsf{gz}$, and $\tsf{rz}$ anyons in neighboring
plaquettes.} 
  \label{fig:honeycomb}
\end{figure}

The color code topological order ($CC$) is defined on a $2$D three-colorable
lattice \cite{bombinTopological2006, bombinGauge2015, kesselringBoundaries2018,
yoshidaTopological2015}. In our work, we focus on the $6$-$6$-$6$ honeycomb
lattice formed by tessellating hexagonal plaquettes shown in
Fig.~\ref{fig:honeycomb}. We color each plaquette red ($r$), green ($g)$, or
blue ($b$) such that adjacent plaquettes are of different colors; a link that
connects two plaquettes of the same color by convention adopts that color too. A
qubit is placed on each lattice site and the Hamiltonian contains two species of
commuting operators, $P_\X$ and $P_\Z$, formed by the weight-$6$ Pauli $\X$ and
Pauli $\Z$ tensor products on the corners of each hexagonal plaquette, with the
identity elsewhere:

\begin{equation}
  H_{CC} = -\sum_{\text{plaquette }p} P_{\X, p} - \sum_{\text{plaquette }p}
  P_{\Z,p}.
  \label{eq:hamiltonian}
\end{equation}

In the ground state of this Hamiltonian, a single-qubit Pauli operator at a
lattice site excites three anyons in the three adjacent plaquettes, which we
label by $\tsf{c}\sigma$ indicating both the hosting plaquette color $c \in
\{r,g,b\}$ and the Pauli flavor $\sigma \in \{x,y,z\}$ that created it. A
single-qubit $\X$ therefore creates $\tsf{rx}$, $\tsf{gx}$, and $\tsf{bx}$
anyons, for example. An $\X$ applied to the two qubits of a red link creates a
pair of $\tsf{rx}$ anyons on the plaquettes at its endpoints.

These anyons and their behavior fundamentally define a TO. Self-statistics
capture the topological or spin-statistics of a particle, while the mutual
statistics encode the phase acquired upon braiding one anyon around another. In
particular, bosons have trivial self-exchange statistics, fermions a $-1$
statistic, while anyons can accumulate a different phase in $\text{U}(1)$.
Mutual-semions accumulate a phase of $-1$ when braided. The vacuum particle,
$\tsf 1$, is a boson and braids trivially with all other anyons. We have already
encountered the $9$ other, nontrivial, bosons of $CC$, formed by the combination
of $r,g,b$ colors and $x,y,z$ flavors; two of these bosons that share the same
color or flavor braid trivially, but otherwise are mutual-semions. There are
also $6$ fermions formed by pairs of mutual-semions; these are listed in
Appendix~\ref{app:fermions}. $CC$ is isomorphic to two copies of $TC$, written
as $CC \equiv TC \boxtimes TC$, with $\boxtimes$ denoting a product between two
TOs; any anyon in $CC$ can be written as a corresponding product of two anyons
from $TC$ \cite{bombinUniversal2012, kubicaUnfolding2015,
haahClassification2021}.\footnote{The toric code is characterized by four
species of anyons: the vacuum $\tsf{1}$, bosons $\tsf{e}$ and $\tsf{m}$, and
fermion $\tsf{f}$. There are $4^2$ anyon species in $CC$ and $4$ in $TC$.}

The anyon theory of a TO is also characterized by fusion rules (whereby two
anyons in proximity to each other are equivalent to a third)
\cite{simonTopological2023}. It is often convenient to represent the braiding
and fusion rules of the color code by the ``Mermin-Peres magic square''
(hereafter referred to as the magic square) \cite{merminSimple1990,
peresTwo1991}:
\begin{equation}
  \magicsf{rx}{ry}{rz}{gx}{gy}{gz}{bx}{by}{bz}
  \label{eq:magicsq}
\end{equation}
such that bosons in the same row or column braid trivially while those that are
not are mutual-semions. The fusion rules are such that two bosons in the same
row or column fuse to make the third, and two anyons of the same type fuse
(annihilate) to the vacuum. We write 
\begin{equation}
  \tsf{rx}\times\tsf{rz} = \tsf{ry},\quad
\tsf{rz}\times\tsf{gz}=\tsf{bz},\quad \tsf{gy}\times\tsf{gy} = \tsf{1},
\end{equation}
for example.

The automorphisms of a TO are maps between its anyon theories that preserve the
statistics and fusion rules. For $CC$, these form a $72$-element symmetry group
$\text{Aut}[CC]$ that is in correspondence to a subgroup of the permutations of
the magic square: $\text{Aut}[CC]$ can be decomposed\footnote{The $S_2$ subgroup
is not closed under conjugation, and hence we require the semidirect product.}
as $(S_3 \times S_3) \rtimes S_2$ such that we can write any automorphism as the
product of one of $6=3!$ row or color permutations (the symmetry group $S_3$),
$6$ column or flavor permutations ($S_3$), and $2$ color-flavor swapping
reflections about the diagonal of the magic square ($S_2$)
\cite{davydovaQuantum2024}. Appendix~\ref{app:theory} provides a more detailed
description of the relevant group-theoretic concepts of the automorphism group;
we summarize some key information here. We denote elements of $\text{Aut}[CC]$
using the cycle notation of the permutation group $S_6$, indicating the
transformation of the 6 anyon labels $r,g,b,$ and $x,y,z$. Since all bosons are
composed of one color and one flavor label, cycles must always be formed of
either disjoint color and flavor cycles, or alternating color and flavor.
$(rgx)(yz)$, for example, is not a member of $\text{Aut}[CC]$, but
$(rx)(gy)(bz)$ and $(rg)(xyz)$ are. We write the identity map as $\id$.
Composition of two elements of $\text{Aut}[CC]$ are written as
$\varphi_2\varphi_1$, evaluated as $(\varphi_2\varphi_1)(\tsf a) =
\varphi_2(\varphi_1(\tsf a))$ on some anyon $\tsf a$. The ``separation'' between
two automorphisms $\varphi_A, \varphi_B$ is quantified by the transition map
\begin{equation} 
  \tau_{BA} = \varphi_B\varphi_A^{-1}. 
\end{equation} 
It links $\varphi_B$ and $\varphi_A$ by $\varphi_B = \tau_{BA} \varphi_A$ and
satisfies $\tau_{AB} = \tau_{BA}^{-1}$. $\text{Aut}[CC]$ can be partitioned by
cycle type into $9$ conjugacy classes (sets that are linked by conjugation with
some element in $\text{Aut}[CC]$), which are subsets of the conjugacy classes of
$S_6$. For example, $(rg)(xyz)$ and $(rgb)(xy)$ both contain one
$2$-cycle\footnote{A $k$-cycle is a cycle with $k$ labels. It equivalently has
order $k$: if $\phi$ is a $k$-cycle then $k\geq1$ is minimal such that $\phi^k =
\id$. } and one $3$-cycle and thus both have cycle type $[2^13^1]$. We denote
their conjugacy class $\mathcal C\{(ccc)(\sigma\sigma)\}$. All $9$ conjugacy
classes are listed in Table~\ref{tbl:conjugacyclasses}. $\tau_{BA}$ and
$\tau_{AB}$ are in the same conjugacy class because permutations and their
inverses are always of the same cycle type.

\begin{table*}
  \caption{\label{tbl:conjugacyclasses}Conjugacy classes of the automorphism
  group, $\text{Aut}[CC]$ (see Appendix~\ref{app:theory}). Cycle type states the
number of $k$-cycles that form the automorphisms, with $[3^2]$ indicating two
$3$-cycles, for example. $\mathcal D^2$ is the square of the quantum dimension
of a twist, equal to the number of anyon species that localize at that twist
(see Section~\ref{sec:anyonlocalization}). IMS indicates the number of invariant
anyons that form mutual-semion pairs $\tsf a$ and $\tsf b$, such that
$\varphi(\tsf a) =\tsf a$, $\varphi(\tsf b) = \tsf b$, with $\tsf a$ and $\tsf
b$ having $-1$-mutual statistics, and trivial statistics with all other pairs.}
  \begin{ruledtabular}
    \begin{tabular}{ccccccc}
      Conjugacy Class & Cycle Type & Example & Parity on
      $S_3\times S_3$ & $\log_2\mathcal D^2$ & IMS &
      Number of Elements \\
      \hline
      $\mathcal C\{\id\}$ & $[1^6]$ & $\id$ & even & $0$ & $4$ & $1$ \\
      $\mathcal C\{(c\sigma)(c\sigma)(c\sigma)\}$ & $[2^3]$ & $(rx)(gy)(bz)$ &
      even & $1$ & $2$ & $6$ \\
      $\mathcal C\{(ccc)(\sigma\sigma\sigma)\}$ & $[3^2]$ & $(rgb)(xyz)$ & even
      & $2$ & $2$ & $4$ \\ 
      $\mathcal C\{(cc)(\sigma\sigma)\}$ & $[1^22^2]$ & $(rg)(xy)$ & even & $2$
      & $0$ & $9$
      \\ 
      $\mathcal C\{(c\sigma c\sigma c\sigma)\}$ & $[6^1]$ & $(rxgybz)$ & even &
      $3$ & $0$ & $12$ \\ 
      $\mathcal C\{(ccc)\}$ & $[1^33^1]$ & $(rgb)$ & even & $4$ & $0$ & $4$ \\ 
      $\mathcal C\{(cc)\}$ & $[1^42^1]$ & $(rg)$ & odd & $2$ & $0$ & $6$ \\ 
      $\mathcal C\{(c\sigma c\sigma)(c\sigma)\}$ & $[2^14^2]$ & $(rxgy)(bz)$ &
      odd & $3$ & $0$ & $18$ \\
      $\mathcal C\{(ccc)(\sigma\sigma)\}$ & $[1^12^13^1]$ & $(rgb)(xy)$ & odd &
      $4$ & $0$ & $12$ \\ 
      \hline 
      Total & & & & & & $72$
    \end{tabular}
  \end{ruledtabular}
\end{table*}

\subsection{\label{sec:anyonlocalization}Anyon Localization}
A useful characterization for anyons is localization at domain wall endpoints
(or ``twists'' \cite{bombinTopological2010a}): if anyon $\tsf c$ arises from
fusion as $\tsf c = \tsf b \times \varphi(\bar{\tsf b})$, where $\varphi$ is the
domain wall automorphism for an anyon encircling a twist anticlockwise, it is
said to localize at that twist by the process in Fig.~\ref{fig:localization}
\cite{kesselringBoundaries2018, barkeshliSymmetry2019}. At a
$\varphi_A$-$\varphi_B$ boundary between two domain walls, anyon localization of
$\tsf c$ occurs if (cf. Fig.~\ref{fig:localization2}) 
\begin{equation}\label{eq:anyonlocalization}
\tsf c = \tsf b \times \tau_{BA}(\bar{\tsf b}),\end{equation}
where $\tau_{BA} = \varphi_B\varphi_A^{-1}$ is their transition map.

\begin{figure}
  \vskip -0.5cm
  \begin{center}
    \def\svgwidth{\linewidth}
    \begingroup \makeatletter \providecommand\color[2][]{\errmessage{(Inkscape) Color is used for the text in Inkscape, but the package 'color.sty' is not loaded}\renewcommand\color[2][]{}}\providecommand\transparent[1]{\errmessage{(Inkscape) Transparency is used (non-zero) for the text in Inkscape, but the package 'transparent.sty' is not loaded}\renewcommand\transparent[1]{}}\providecommand\rotatebox[2]{#2}\newcommand*\fsize{\dimexpr\f@size pt\relax}\newcommand*\lineheight[1]{\fontsize{\fsize}{#1\fsize}\selectfont}\ifx\svgwidth\undefined \setlength{\unitlength}{600bp}\ifx\svgscale\undefined \relax \else \setlength{\unitlength}{\unitlength * \real{\svgscale}}\fi \else \setlength{\unitlength}{\svgwidth}\fi \global\let\svgwidth\undefined \global\let\svgscale\undefined \makeatother \begin{picture}(1,0.5)\lineheight{1}\setlength\tabcolsep{0pt}\put(0,0){\includegraphics[width=\unitlength,page=1]{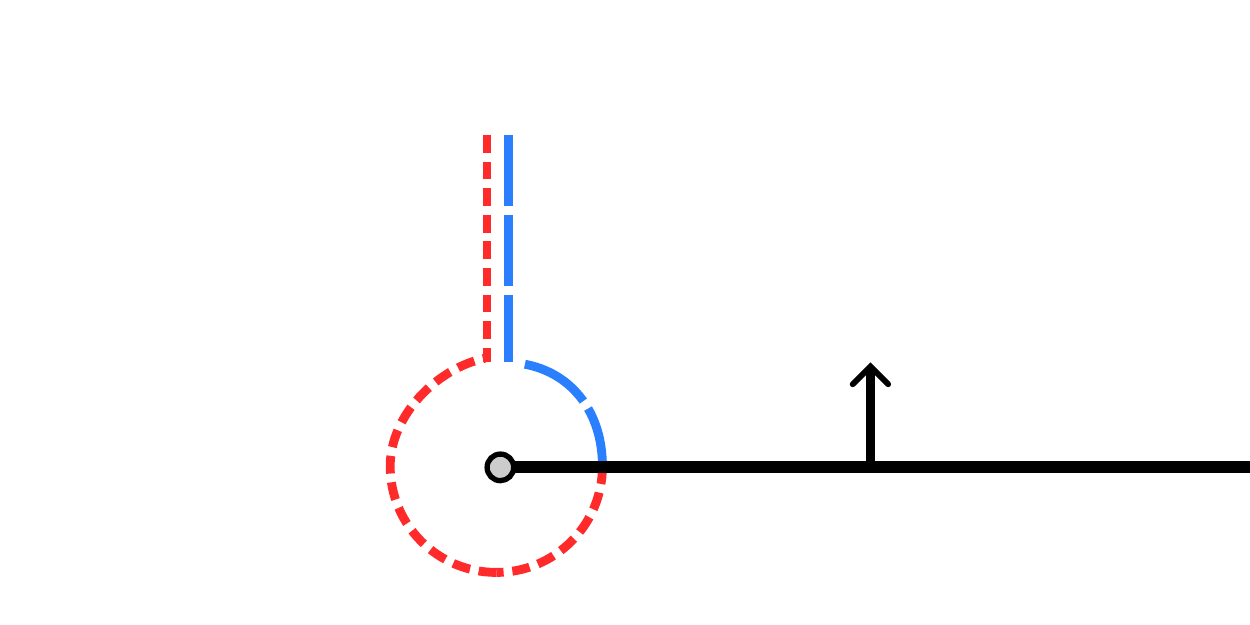}}\put(0.6843395,0.23052167){\color[rgb]{0,0,0}\makebox(0,0)[lt]{\lineheight{1.25}\smash{\begin{tabular}[t]{l}$\varphi$\end{tabular}}}}\put(0.38590616,0.40546535){\color[rgb]{0,0,0}\makebox(0,0)[lt]{\lineheight{1.25}\smash{\begin{tabular}[t]{l}$\tsf c$\end{tabular}}}}\put(0.47852341,0.19107359){\color[rgb]{0,0,0}\makebox(0,0)[lt]{\lineheight{1.25}\smash{\begin{tabular}[t]{l}$\varphi(\bar{\tsf b})$\end{tabular}}}}\put(0.30529486,0.19107359){\color[rgb]{0,0,0}\makebox(0,0)[lt]{\lineheight{1.25}\smash{\begin{tabular}[t]{l}$\tsf b$\end{tabular}}}}\put(0,0){\includegraphics[width=\unitlength,page=2]{twists.pdf}}\end{picture}\endgroup    \end{center}
  \vskip -0.5cm
  \caption{An anyon $\tsf c$ decomposes into two anyons $\tsf b$ and
  $\varphi(\bar{\tsf b})$ in the vicinity of the endpoint (twist) of a
  domain wall. Moving anticlockwise around the endpoint enacts the automorphism
  $\varphi$. Since $\tsf c$ can emanate or be absorbed at the twist, we say that
  $\tsf c$ localizes at the $\varphi$ twist. This picture applies to both
temporal and spatial domain walls, and hence a time arrow is not indicated.}
  \label{fig:localization}
\end{figure}

\begin{figure}
  \vskip -0.5cm
  \begin{center}
    \def\svgwidth{\linewidth}
    \begingroup \makeatletter \providecommand\color[2][]{\errmessage{(Inkscape) Color is used for the text in Inkscape, but the package 'color.sty' is not loaded}\renewcommand\color[2][]{}}\providecommand\transparent[1]{\errmessage{(Inkscape) Transparency is used (non-zero) for the text in Inkscape, but the package 'transparent.sty' is not loaded}\renewcommand\transparent[1]{}}\providecommand\rotatebox[2]{#2}\newcommand*\fsize{\dimexpr\f@size pt\relax}\newcommand*\lineheight[1]{\fontsize{\fsize}{#1\fsize}\selectfont}\ifx\svgwidth\undefined \setlength{\unitlength}{600bp}\ifx\svgscale\undefined \relax \else \setlength{\unitlength}{\unitlength * \real{\svgscale}}\fi \else \setlength{\unitlength}{\svgwidth}\fi \global\let\svgwidth\undefined \global\let\svgscale\undefined \makeatother \begin{picture}(1,0.5)\lineheight{1}\setlength\tabcolsep{0pt}\put(0,0){\includegraphics[width=\unitlength,page=1]{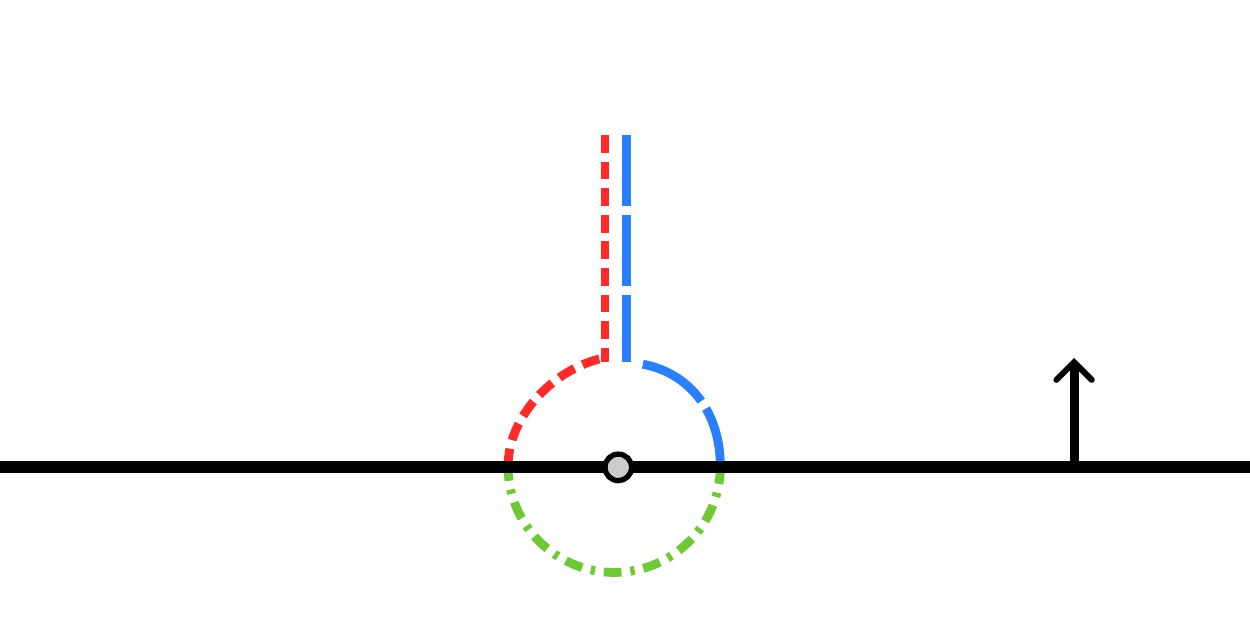}}\put(0.84556212,0.22366114){\color[rgb]{0,0,0}\makebox(0,0)[lt]{\lineheight{1.25}\smash{\begin{tabular}[t]{l}$\varphi_B$\end{tabular}}}}\put(0.48023854,0.40546535){\color[rgb]{0,0,0}\makebox(0,0)[lt]{\lineheight{1.25}\smash{\begin{tabular}[t]{l}$\tsf c$\end{tabular}}}}\put(0.57285579,0.19107359){\color[rgb]{0,0,0}\makebox(0,0)[lt]{\lineheight{1.25}\smash{\begin{tabular}[t]{l}$(\varphi_B\varphi_A^{-1})(\bar{\tsf b})$\end{tabular}}}}\put(0.39962724,0.19107359){\color[rgb]{0,0,0}\makebox(0,0)[lt]{\lineheight{1.25}\smash{\begin{tabular}[t]{l}$\tsf b$\end{tabular}}}}\put(0,0){\includegraphics[width=\unitlength,page=2]{twists2.pdf}}\put(0.13035119,0.22366114){\color[rgb]{0,0,0}\makebox(0,0)[lt]{\lineheight{1.25}\smash{\begin{tabular}[t]{l}$\varphi_A$\end{tabular}}}}\put(0.57457093,0.04700231){\color[rgb]{0,0,0}\makebox(0,0)[lt]{\lineheight{1.25}\smash{\begin{tabular}[t]{l}$\varphi_A^{-1}(\bar{\tsf b})$\end{tabular}}}}\put(0,0){\includegraphics[width=\unitlength,page=3]{twists2.pdf}}\end{picture}\endgroup    \end{center}
  \vskip -0.5cm
  \caption{A $\varphi_A$ and a $\varphi_B$ domain wall separated by a boundary
  (indicated by the point). A $\tsf c$ anyon that has fusion rule
  $\tsf c = \tsf b \times (\varphi_B\varphi_A^{-1})(\bar{\tsf b})$ localizes
  at this boundary.}
  \label{fig:localization2}
\end{figure}

The number of anyons that can localize at a twist equals $\mathcal D^2$, the
square of the quantum dimension $\mathcal D$ of the twist.\footnote{This
  relation applies only for Abelian anyon theories; in non-Abelian theories, one
  must also consider the quantum dimension of localizing anyons
\cite{bombinTopological2010, kesselringBoundaries2018, barkeshliSymmetry2019}.}
$\mathcal D$ tracks the increase in dimension of the Hilbert space when twists
are introduced. If $\tsf c$ localizes at $\tau_{BA}$, then for any automorphism
$\varphi$, $\varphi(\tsf c)$ localizes at $\varphi\tau_{BA}\varphi^{-1}$. Hence,
the quantum dimension of an automorphism's twist is a characteristic of its
conjugacy class. Table~\ref{tbl:conjugacyclasses} lists  $\log_2\mathcal D^2$
for each conjugacy class in $\text{Aut}[CC]$. 

In Section~\ref{sec:competing}, we examine how twists at the boundaries of
temporal domain walls affect the TO. In comparison to existing works such as
Refs.~\onlinecite{kesselringBoundaries2018, kesselringAnyon2024,
ellisonFloquet2023, davydovaFloquet2023}, which considered homogeneous or
spatial domain walls, we focus on heterogeneous temporal domain walls. We show
that anyon localization in this context fundamentally leads to logical
measurements of the degenerate codespace when the FET is operated as a QEC code.

\subsection{\label{sec:anyoncondensation}Anyon Condensation}
Anyon condensation is the process of relating two topologically-ordered systems
by identifying a set of condensed bosons in the ``parent'' TO with the vacuum
particle in the ``child'' TO \cite{baisCondensateinduced2009, huAnyon2022,
eliensDiagrammatics2014, eliensDiagrammatics2014, kesselringAnyon2024,
ellisonPauli2023, kongAnyon2014, baisTheory2009}; the process has similarities
to its namesake Bose-condensations in other physical systems \cite{huAnyon2022}.

Condensing a nontrivial boson, $\tsf c\sigma$, in a $CC$ parent phase realizes a
child theory equivalent to the toric code TO, denoted as $TC(\tsf c\sigma)$
\cite{kesselringAnyon2024}. By condensing different bosons at
different times, a system transitions between different child theories. Indeed,
the honeycomb Floquet code is equivalent to a dynamical transition through the
TOs of $TC(\tsf{rx})\rightarrow TC(\tsf{gy}) \rightarrow TC(\tsf{bz})\rightarrow
TC(\tsf{rx})\rightarrow\cdots$ \cite{hastingsDynamically2021}. We
diagrammatically represent this sequence using the magic square notation of
Eq.~\eqref{eq:magicsq}, 
\begin{equation*}
  \magic{$\bullet$}{}{}{}{}{}{}{}{} \rightarrow 
  \magic{}{}{}{}{$\bullet$}{}{}{}{} \rightarrow 
  \magic{}{}{}{}{}{}{}{}{$\bullet$} \rightarrow 
  \magic{$\bullet$}{}{}{}{}{}{}{}{} \rightarrow \cdots,
\end{equation*}
such that the $\bullet$ indicates the condensed boson.

Importantly, two child theories $TC(\tsf c\sigma_1)$ and $TC(\tsf c\sigma_2)$
are compatible if and only if $\tsf c\sigma_1$ and $\tsf c\sigma_2$ are
mutual-semions \cite{aasenMeasurement2023, burnellAnyon2018,
davydovaQuantum2024}. This ensures that two regions of $TC(\tsf c\sigma_1)$ and
$TC(\tsf c\sigma_2$) in spacetime share an invertible domain wall: if we start
with $TC(\tsf c\sigma_1)$ and condense the $\tsf c\sigma_2$ boson, the quantum
state of the TO is preserved. Any anyon in $TC(\tsf c\sigma_1)$ can move across
the domain wall and be mapped onto another anyon in $TC(c\sigma_2)$ without
modifying information about the particle (such as its statistics with other
anyons). This process of pairing up consecutive compatible child theories
$TC(\tsf c\sigma_1) \rightarrow TC(\tsf c\sigma_2)$ is called a ``reversible
transition''. In the magic square notation, reversible transitions require that
consecutive stages condense bosons that share neither the same row nor the same
column. 

It is also possible to construct a TO where anyon condensation results in a
child $CC$ theory. One such example is the ``dynamic automorphism'' (DA) color
code from Ref.~\onlinecite{davydovaQuantum2024}, using a parent $CC\boxtimes CC$
theory of two color code models; this can be envisaged as the honeycomb lattice
with two qubits at each lattice site (or equivalently, two layers of honeycomb
lattices) each hosting an independent $CC$ phase. The Hamiltonian is equivalent
to Eq.~\eqref{eq:hamiltonian}, except there are now two of each $P_\X$ and
$P_\Z$ that act only on the first or second layers. Condensing the anyons
$\tsf{rz}_1\tsf{rz}_2$, $\tsf{gz}_1\tsf{gz}_2$, and thus $\tsf{bz}_1\tsf{bz}_2$
(where the subscripts indicate the layer) produces a child theory
$\widetilde{CC}$ equivalent to the color code.\footnote{This is only one such
condensation choice; we could condense $x$-flavored bosons, for example, and
achieve an equivalent theory.} The anyons of this theory have representatives
\begin{equation}
  \magicc{$\tsf{rx}_1\tsf{rx}_2$}{$\tsf{ry}_1\tsf{rx}_2$}{$\tsf{rz}_1$}{$\tsf{gx}_1\tsf{gx}_2$}{$\tsf{gy}_1\tsf{gx}_2$}{$\tsf{gz}_1$}{$\tsf{bx}_1\tsf{bx}_2$}{$\tsf{by}_1\tsf{bx}_2$}{$\tsf{bz}_1$}
  \quad \sim \quad 
  \magicc{$\tsf{ry}_1\tsf{ry}_2$}{$\tsf{rx}_1\tsf{ry}_2$}{$\tsf{rz}_2$}{$\tsf{gy}_1\tsf{gy}_2$}{$\tsf{gx}_1\tsf{gy}_2$}{$\tsf{gz}_2$}{$\tsf{by}_1\tsf{by}_2$}{$\tsf{bx}_1\tsf{by}_2$}{$\tsf{bz}_2$}
\end{equation}
in correspondence to the $CC$ anyons of Eq.~\eqref{eq:magicsq}, with $\sim$
indicating equivalence up to fusion with the condensed bosons. For example,
$\tsf{rx}_1\tsf{rx}_2\times\tsf{rz}_1\tsf{rz}_2 = \tsf{ry}_1\tsf{ry}_2$. When
referring to anyons of $\widetilde{CC}$, if the subscripts are omitted then we
are denoting them by their equivalent sectors in $CC$; that is, $\tsf{rx}$
refers to $\tsf{rx}_1\tsf{rx}_2$ or $\tsf{ry}_1\tsf{ry}_2$.

Condensing an individual anyon from each $CC$ layer alternatively creates a
child theory of two decoupled toric codes, denoted $TC(\tsf{c}\sigma_1)\boxtimes
TC(\tsf{c}\sigma_2)$ with $\boxtimes$ again indicating the tensor product.
Reversible transitions now occur in two ways: (1) $TC\boxtimes TC
\leftrightarrow TC\boxtimes TC$ are reversible iff the individual $TC_1$ and
$TC_2$ transitions are reversible; and (2) $\widetilde{CC} \leftrightarrow
TC\boxtimes TC$ are reversible iff the two condensed anyons of the $TC\boxtimes
TC$ are of different colors and neither are $z$-flavored\footnote{This condition
arises due to the choice of $z$-flavored condensations leading to the child
theory $\widetilde{CC}$.} \cite{davydovaQuantum2024}. 

\begin{figure}
  \begin{center}
    \def\svgwidth{\columnwidth}
    \begingroup \makeatletter \providecommand\color[2][]{\errmessage{(Inkscape) Color is used for the text in Inkscape, but the package 'color.sty' is not loaded}\renewcommand\color[2][]{}}\providecommand\transparent[1]{\errmessage{(Inkscape) Transparency is used (non-zero) for the text in Inkscape, but the package 'transparent.sty' is not loaded}\renewcommand\transparent[1]{}}\providecommand\rotatebox[2]{#2}\newcommand*\fsize{\dimexpr\f@size pt\relax}\newcommand*\lineheight[1]{\fontsize{\fsize}{#1\fsize}\selectfont}\ifx\svgwidth\undefined \setlength{\unitlength}{900bp}\ifx\svgscale\undefined \relax \else \setlength{\unitlength}{\unitlength * \real{\svgscale}}\fi \else \setlength{\unitlength}{\svgwidth}\fi \global\let\svgwidth\undefined \global\let\svgscale\undefined \makeatother \begin{picture}(1,0.5)\lineheight{1}\setlength\tabcolsep{0pt}\put(0.4446177,0.23741795){\color[rgb]{0,0,0}\makebox(0,0)[lt]{\lineheight{1.25}\smash{\begin{tabular}[t]{l}$\widetilde{CC}$\end{tabular}}}}\put(0,0){\includegraphics[width=\unitlength,page=1]{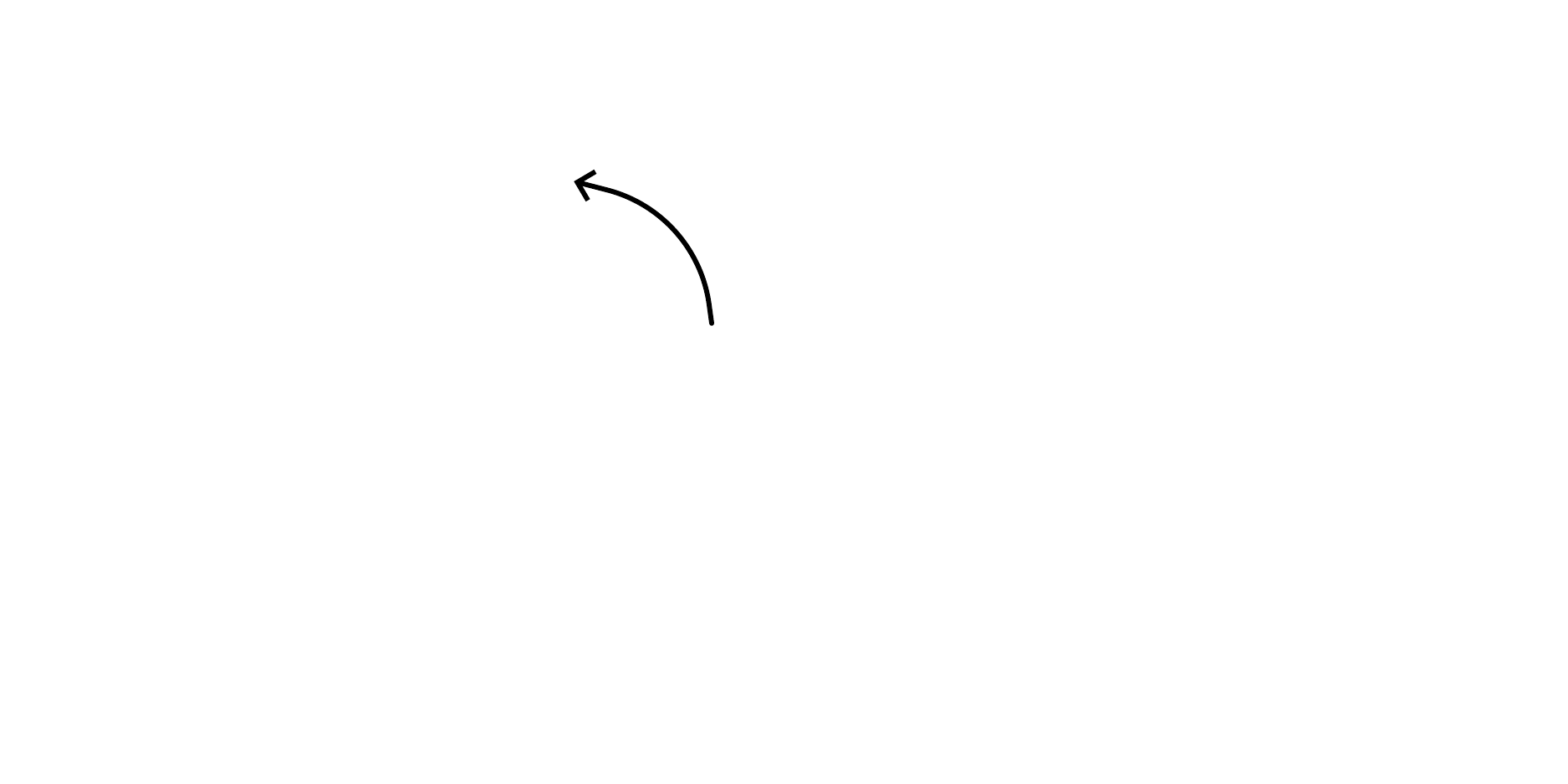}}\put(0.15564937,0.37561572){\color[rgb]{0,0,0}\makebox(0,0)[lt]{\lineheight{1.25}\smash{\begin{tabular}[t]{l}$(TC\boxtimes TC)_1$\end{tabular}}}}\put(0,0){\includegraphics[width=\unitlength,page=2]{automorphism.pdf}}\put(0.05097144,0.23741795){\color[rgb]{0,0,0}\makebox(0,0)[lt]{\lineheight{1.25}\smash{\begin{tabular}[t]{l}$\vdots$\end{tabular}}}}\put(0,0){\includegraphics[width=\unitlength,page=3]{automorphism.pdf}}\put(0.1539827,0.10952768){\color[rgb]{0,0,0}\makebox(0,0)[lt]{\lineheight{1.25}\smash{\begin{tabular}[t]{l}$(TC\boxtimes TC)_k$\end{tabular}}}}\put(0,0){\includegraphics[width=\unitlength,page=4]{automorphism.pdf}}\put(0.5949659,0.37561572){\color[rgb]{0,0,0}\makebox(0,0)[lt]{\lineheight{1.25}\smash{\begin{tabular}[t]{l}$(TC\boxtimes TC)_1$\end{tabular}}}}\put(0.87528111,0.19615522){\color[rgb]{0,0,0}\makebox(0,0)[lt]{\lineheight{1.25}\smash{\begin{tabular}[t]{l}$\vdots$\end{tabular}}}}\put(0,0){\includegraphics[width=\unitlength,page=5]{automorphism.pdf}}\put(0.57996589,0.10952768){\color[rgb]{0,0,0}\makebox(0,0)[lt]{\lineheight{1.25}\smash{\begin{tabular}[t]{l}$(TC\boxtimes TC)_{k'}$\end{tabular}}}}\put(0,0){\includegraphics[width=\unitlength,page=6]{automorphism.pdf}}\put(0.78494621,0.2991583){\color[rgb]{0,0,0}\makebox(0,0)[lt]{\lineheight{1.25}\smash{\begin{tabular}[t]{l}$(TC\boxtimes TC)_3$\end{tabular}}}}\put(0,0){\includegraphics[width=\unitlength,page=7]{automorphism.pdf}}\put(0.65320281,0.23969142){\color[rgb]{0,0,0}\makebox(0,0)[lt]{\lineheight{1.25}\smash{\begin{tabular}[t]{l}$\varphi_B$\end{tabular}}}}\put(0.2383102,0.23969142){\color[rgb]{0,0,0}\makebox(0,0)[lt]{\lineheight{1.25}\smash{\begin{tabular}[t]{l}$\varphi_A$\end{tabular}}}}\end{picture}\endgroup    \end{center}
  \vskip -1cm
  \caption{A color code topological order ($\widetilde{CC}$) can be mapped
    through a series of doubled toric code topological orders ($TC\boxtimes TC$)
    via anyon condensation. Upon returning to the original $\widetilde{CC}$, an
    automorphism $\varphi_A$ permutes the anyons of the model, creating a
    Floquet-enriched topological order (FET). Modifying the sequence, such as by
    skipping $(TC\boxtimes TC)_2$, may result in a different automorphism,
  $\varphi_B$.}
  \label{fig:automorphism}
\end{figure}

Automorphisms can be implemented by transitioning through a series of
condensations, cf. Fig.~\ref{fig:automorphism}. Davydova et al.
\cite{davydovaQuantum2024} showed how sequences starting and ending at
$\widetilde{CC}$ can transition the TO in such a way that any of the $72$
$\text{Aut}[CC]$ automorphisms can be enacted using at most $4$ intermediary
$TC\boxtimes TC$ condensations. For example, this is a
sequence that enacts an $(rgb)$ automorphism,
\begin{eqnarray}
    \widetilde{CC} \rightarrow
    \magic{1}{}{}{}{}{}{2}{}{} \rightarrow 
    \magic{}{}{}{}{}{2}{}{}{1} \rightarrow
    \magic{}{2}{}{}{1}{}{}{}{} \rightarrow
    \widetilde{CC}
    \label{eq:rgb}
\end{eqnarray}
where the $1$, $2$ labels indicate the condensed boson in the two $CC$ layers,
using the magic square notation from Eq.~\eqref{eq:magicsq}. That is, this
sequence of anyon condensation maps between the TOs of $\widetilde{CC}
\rightarrow TC(\tsf{rx}_1)\boxtimes TC(\tsf{bx}_2) \rightarrow$
$TC(\tsf{bz}_1)\boxtimes TC(\tsf{gz}_2) \rightarrow TC(\tsf{gy}_1)\boxtimes
TC(\tsf{ry}_2) \rightarrow \widetilde{CC}$. 
Appendix~\ref{app:automorphisms} explains how to compute the automorphism from
any given sequence, or construct a sequence to realize any given automorphism.
By repeatedly cycling through a sequence of condensates such as
Eq.~\eqref{eq:rgb} similarly to a driven quantum system, we thus create an
evolving phase that exhibits time-periodic, Floquet-enriched topological order
(FET) \cite{potterDynamically2017}. Anyons present will periodically have their
labels permuted. Multiple measurement sequences can realize the same
automorphism each Floquet period, and therefore multiple different systems can
exhibit the same FET.

\subsection{Dynamical Codes}
We may also interpret these FETs as dynamical QEC codes capable of encoding and
storing quantum information. The first dynamical code was the honeycomb Floquet
code from Ref.~\onlinecite{hastingsDynamically2021}. Other dynamical codes have
since been proposed, such as the CSS honeycomb code \cite{kesselringAnyon2024,
davydovaFloquet2023}, the automorphism code \cite{aasenAdiabatic2022}, the
dynamic automorphism color code \cite{davydovaQuantum2024}, the x$+$y Floquet
code \cite{bauerFloquet2024}, the XYZ ruby code \cite{delafuenteXYZ2024},
$(3+1)$D Floquet codes \cite{duaEngineering2024}, or the hyperbolic Floquet code
\cite{higgottConstructions2023}. Theoretical studies of dynamical codes include
perspectives such as subsystem codes \cite{davydovaFloquet2023,
bombinTopological2010}, quantum cellular automata \cite{aasenMeasurement2023},
twist-defect networks \cite{sullivanFloquet2023}, adiabatic paths of
Hamiltonians \cite{aasenAdiabatic2022}, or fixed-point path integrals
\cite{bauerTopological2024, bauerLowoverhead2024,
townsend-teagueFloquetifying2023, bombinUnifying2024}, and aspects of Floquet
code phenomenology were also linked to symmetry topological field theory
\cite{motamarriSymTFT2024}. 

To describe such codes, we use here the stabilizer formalism
\cite{shorFaulttolerant1997, gottesmanStabilizer1997, gottesmanHeisenberg1998,
poulinStabilizer2005}: taking the $P_\X$ and $P_\Z$ plaquette Hamiltonian terms
from Eq.~\eqref{eq:hamiltonian}, we promote them to generators of an Abelian
``stabilizer'' group $\mathcal S$.\footnote{A stabilizer group is specifically
any subgroup of the $n$-qubit Pauli group that does not contain $-1$ and where
all the elements commute. It ``stabilizes'' a code in the sense that any element
of the stabilizer group acts trivially on the logical subspace.} The
simultaneous $+1$-eigenspace of $\mathcal S$ defines the codespace $C$ such that
$\forall S \in \mathcal S$ and $\ket\psi \in C$ we have $S\ket\psi = \ket\psi$.
$C$ coincides with the ground space of our Hamiltonian. Excited states
$\ket{\psi'}$, for which $S\ket{\psi'} = -\ket{\psi'}$ for some $S \in \mathcal
S$, are not in the codespace; the ``excited'' stabilizers indicate that the
system has suffered an error. These excited plaquettes are equivalent to the
locations of anyon excitations in the Hamiltonian picture. Logical operators map
between states within the degenerate codespace $C$; these are denoted as
$\bar\X, \bar\Z$ and act with the same algebra on the (logical) Hilbert space of
$C$ as $\X$, $\Z$ act on single qubits. The centralizer of $\mathcal S$ is the
group $C(\mathcal S)$ of Pauli operators that commute with every stabilizer; the
nontrivial logicals are those in $C(\mathcal S)$ but not in (a phase times)
$\mathcal S$ itself, forming the set $\mathcal L = C(\mathcal S) \setminus
\langle i\openone,\mathcal S\rangle$.

To perform anyon condensation, we use projective measurements of the hopping
operators for the condensed boson; these are (typically) weight-$2$ Pauli
operators that correspond to moving an excitation through the lattice
\cite{kesselringAnyon2024}. For a $\tsf{rx}$ in the color code, for example, it
is the $2$-qubit $\X$ operator on the ends of a red link. Measuring these
throughout a system in the codespace of $CC$ causes a $CC \rightarrow
TC(\tsf{rx})$ transition \cite{kesselringAnyon2024}. In doing so, the stabilizer
group updates as measured operators are added and anticommuting operators
are combined or removed; since this group is constantly changing in a dynamical
code, we refer to the current state as the instantaneous stabilizer group (ISG)
\cite{hastingsDynamically2021}. The ISG updates during each condensation stage.

In our work, we focus on the $CC\boxtimes CC$ model of the DA color code. In
this context, the external tensor product denotes that the stabilizer group of
$CC \boxtimes CC$ can be factored (up to a unitary transformation) into two
independent copies of $CC$ stabilizer groups \cite{davydovaQuantum2024}. The
initial stabilizer group is thus comprised of $P_\X$ and $P_\Z$ plaquettes on
both layers of honeycomb lattices. Forming the child theory $\widetilde{CC}$
requires projective measurements of $\Z_1\Z_2$ on each lattice site, where the
subscripts indicate the two layers. $TC\boxtimes TC$ child theories are
analogous to the previous discussions, with hopping operators measured on each
layer separately for the respective condensed anyons.

\begin{figure}
  \begin{center}
    \def\svgwidth{0.75\columnwidth}
    \begingroup \makeatletter \providecommand\color[2][]{\errmessage{(Inkscape) Color is used for the text in Inkscape, but the package 'color.sty' is not loaded}\renewcommand\color[2][]{}}\providecommand\transparent[1]{\errmessage{(Inkscape) Transparency is used (non-zero) for the text in Inkscape, but the package 'transparent.sty' is not loaded}\renewcommand\transparent[1]{}}\providecommand\rotatebox[2]{#2}\newcommand*\fsize{\dimexpr\f@size pt\relax}\newcommand*\lineheight[1]{\fontsize{\fsize}{#1\fsize}\selectfont}\ifx\svgwidth\undefined \setlength{\unitlength}{767.99990845bp}\ifx\svgscale\undefined \relax \else \setlength{\unitlength}{\unitlength * \real{\svgscale}}\fi \else \setlength{\unitlength}{\svgwidth}\fi \global\let\svgwidth\undefined \global\let\svgscale\undefined \makeatother \begin{picture}(1,1)\lineheight{1}\setlength\tabcolsep{0pt}\put(0,0){\includegraphics[width=\unitlength,page=1]{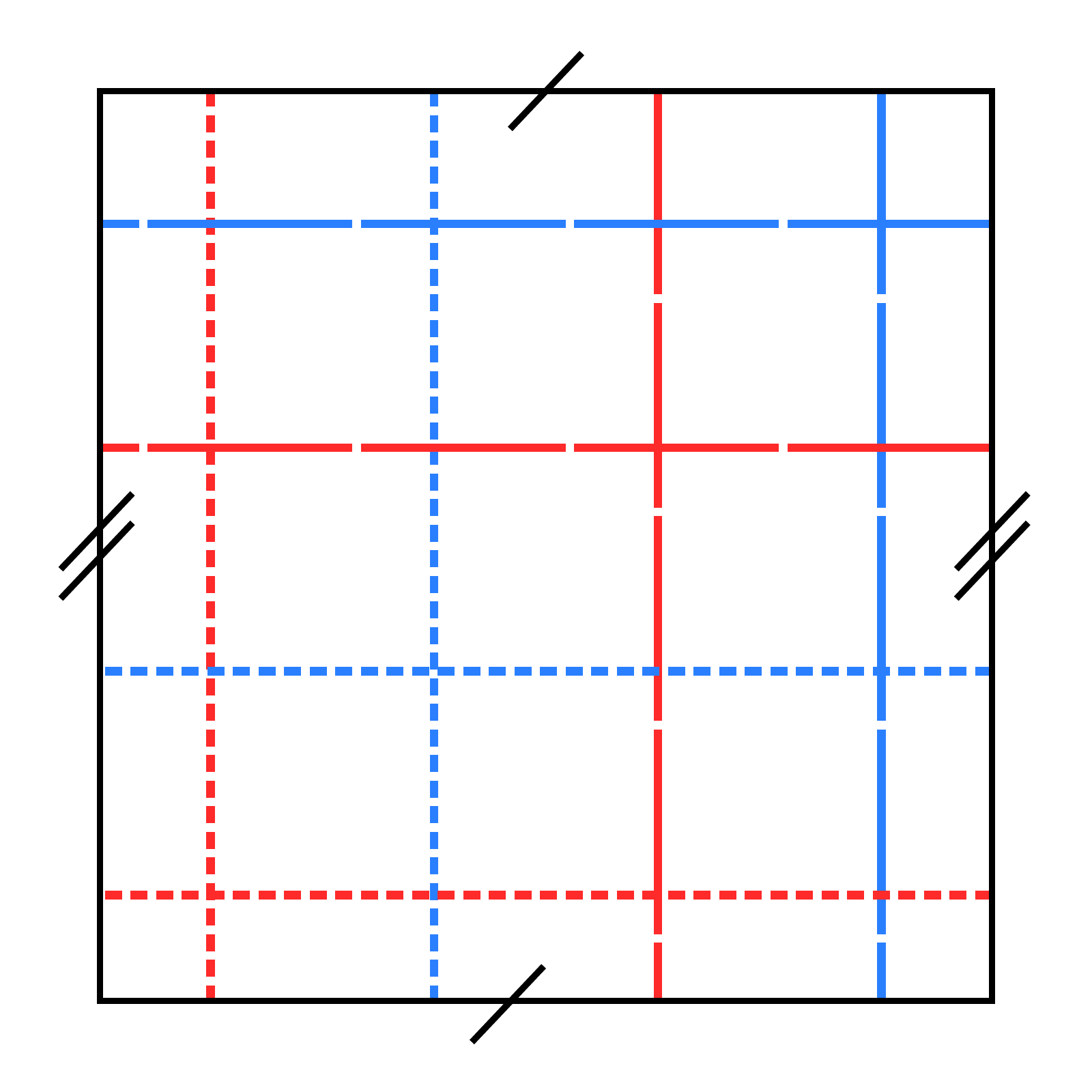}}\put(0.17403576,0.92923714){\color[rgb]{0,0,0}\makebox(0,0)[lt]{\lineheight{1.25}\smash{\begin{tabular}[t]{l}$\bar\X_1$\end{tabular}}}}\put(0.02674509,0.78119078){\color[rgb]{0,0,0}\makebox(0,0)[lt]{\lineheight{1.25}\smash{\begin{tabular}[t]{l}$\bar\Z_1$\end{tabular}}}}\put(0.02674509,0.58239467){\color[rgb]{0,0,0}\makebox(0,0)[lt]{\lineheight{1.25}\smash{\begin{tabular}[t]{l}$\bar\Z_2$\end{tabular}}}}\put(0.02910783,0.37754828){\color[rgb]{0,0,0}\makebox(0,0)[lt]{\lineheight{1.25}\smash{\begin{tabular}[t]{l}$\bar\X_3$\end{tabular}}}}\put(0.02674509,0.17270196){\color[rgb]{0,0,0}\makebox(0,0)[lt]{\lineheight{1.25}\smash{\begin{tabular}[t]{l}$\bar\X_4$\end{tabular}}}}\put(0.37988529,0.92923714){\color[rgb]{0,0,0}\makebox(0,0)[lt]{\lineheight{1.25}\smash{\begin{tabular}[t]{l}$\bar\X_2$\end{tabular}}}}\put(0.58573482,0.92923714){\color[rgb]{0,0,0}\makebox(0,0)[lt]{\lineheight{1.25}\smash{\begin{tabular}[t]{l}$\bar\Z_3$\end{tabular}}}}\put(0.78685886,0.92923714){\color[rgb]{0,0,0}\makebox(0,0)[lt]{\lineheight{1.25}\smash{\begin{tabular}[t]{l}$\bar\Z_4$\end{tabular}}}}\put(0.16817639,0.04349529){\color[rgb]{0,0,0}\makebox(0,0)[lt]{\lineheight{1.25}\smash{\begin{tabular}[t]{l}$\tsf{rx}$\end{tabular}}}}\put(0.37402591,0.04349529){\color[rgb]{0,0,0}\makebox(0,0)[lt]{\lineheight{1.25}\smash{\begin{tabular}[t]{l}$\tsf{bx}$\end{tabular}}}}\put(0.57987541,0.04349529){\color[rgb]{0,0,0}\makebox(0,0)[lt]{\lineheight{1.25}\smash{\begin{tabular}[t]{l}$\tsf{rz}$\end{tabular}}}}\put(0.78099944,0.04349529){\color[rgb]{0,0,0}\makebox(0,0)[lt]{\lineheight{1.25}\smash{\begin{tabular}[t]{l}$\tsf{bz}$\end{tabular}}}}\put(0.92330657,0.7879313){\color[rgb]{0,0,0}\makebox(0,0)[lt]{\lineheight{1.25}\smash{\begin{tabular}[t]{l}$\tsf{bz}$\end{tabular}}}}\put(0.92330657,0.58913519){\color[rgb]{0,0,0}\makebox(0,0)[lt]{\lineheight{1.25}\smash{\begin{tabular}[t]{l}$\tsf{rz}$\end{tabular}}}}\put(0.92566929,0.38428886){\color[rgb]{0,0,0}\makebox(0,0)[lt]{\lineheight{1.25}\smash{\begin{tabular}[t]{l}$\tsf{bx}$\end{tabular}}}}\put(0.92330657,0.17944255){\color[rgb]{0,0,0}\makebox(0,0)[lt]{\lineheight{1.25}\smash{\begin{tabular}[t]{l}$\tsf{rx}$\end{tabular}}}}\end{picture}\endgroup    \end{center}
  \vspace{-12pt}
  \caption{Diagram of the logical operators of the ${CC}$ model on a $2$-torus.
  Shown are the $4$ pairs of anticommuting $\bar\X$ (short-dashed) and $\bar\Z$
  (long-dashed) operators from Eq.~\eqref{eq:logical1}-\eqref{eq:logical2},
  supported on noncontractible cycles of the $2$-torus. The honeycomb lattice
  structure is ignored for simplicity; the string operators follow the red and
  blue links of Fig.~\ref{fig:honeycomb}.}
  \label{fig:logicals}
\end{figure}

Our implementation of the DA color code uses a $2$-torus; future work may find
it fruitful to consider other topologies or open boundary conditions. We employ
a logical algebra with $4$ pairs of anticommuting logical operators constructed
out of $\tsf{rx}$, $\tsf{bz}$, $\tsf{bx}$, and $\tsf{rz}$ effective anyon
strings. Two operators are equivalent (act equivalently on the codestates) if
they are related modulo multiplication with operators in the stabilizer group
\cite{gottesmanStabilizer1997}. This means that multiple representatives of each
logical operator exist; on the $2$-torus these are supported on homologous
noncontractible cycles around the periodic boundaries.\footnote{Two cycles are
  homologous if one can be smoothly deformed into the other without breaking the
chain. A cycle is noncontractible if it is not homologous to a loop with zero
area, i.e. a point.} Let $\bar{\Os}[\tsf a]_h$ and $\bar\Os[\tsf a]_v$ represent
the equivalence class of logical operators, forming the quotient group
$C(\mathcal S)/\mathcal S$. The subscript indicates that the string wraps around
the horizontal or vertical direction respectively (using the orientation of
Fig.~\ref{fig:honeycomb}). An italicized $\bar O[\tsf a]_h$ indicates a
particular representative of the equivalence class. We use the logical algebra 
\begin{eqnarray}
  \bar \X_1 &=& \bar\Os[\tsf{rx}]_v, \quad \bar \Z_1 = \bar\Os[\tsf{bz}]_h,
  \label{eq:logical1} \\
  \bar \X_2 &=& \bar\Os[\tsf{bx}]_v, \quad \bar \Z_2 = \bar\Os[\tsf{rz}]_h, \\
  \bar \X_3 &=& \bar\Os[\tsf{bx}]_h, \quad \bar \Z_3 = \bar\Os[\tsf{rz}]_v, \\
  \bar \X_4 &=& \bar\Os[\tsf{rx}]_h, \quad \bar \Z_4 = \bar\Os[\tsf{bz}]_v.
  \label{eq:logical2}
\end{eqnarray}
Figure~\ref{fig:logicals} shows sketches of these operators. An automorphism
that permutes the anyons now also permutes the logical operators. $(rb)$, for
example, swaps $\bar\X_1$ and $\bar\X_2$, and is indeed equivalent to a
$\texttt{SWAP}_{12}\texttt{SWAP}_{34}$ gate on the four qubits. The $72$
automorphisms furnish a subgroup of the $4$-qubit Clifford group
\cite{davydovaQuantum2024}. This choice of logical algebra is not unique.
Indeed, we are also not restricted to just using bosons; any choice of 4 anyons
that form two pairs of mutual-semions but otherwise have trivial mutual
statistics (hence commuting string operators) can form a valid logical algebra.
Section~\ref{sec:logically}, for example, describes an alternative logical
algebra using fermions.

\section{\label{sec:competing}Competing Automorphisms}

In this section, we introduce systems of disordered Floquet codes where the
temporal domain walls are spatially heterogeneous: acting on disjoint subregions
of the manifold with different automorphisms. As the realization of such
subregions is random, we can interpret this as two (or more) automorphisms
``competing'' for occupancy of the system. We first describe the effect of one
realization of this competition on Abelian-anyon TOs, before extending our
results to measurement-induced FETs where different realizations of the
automorphisms occur at every discrete timestep. We use $CC$ as an example TO, in
preparation for Section~\ref{sec:disordermodel} where we examine a disordered DA
color code from this perspective.

Consider a $2$-torus (although our results are readily generalizable to other
topologies with different genera) and assume at time $t=0$ the system is in the
ground-state of some local Hamiltonian $H=-\sum_j S_j$ with local generators
$S_j$ of a stabilizer group $\mathcal S$. The manifold is randomly partitioned,
and each contiguous region labeled $A$ or $B$. At time $0<t_\varphi<1$,
automorphisms $\varphi_A$ and $\varphi_B$ ($\varphi_A \neq \varphi_B$, and one
of which may be trivial) are simultaneously applied to their respective
subregions. We may equivalently view this as $\varphi_A$ applied everywhere and
$B$-subregions additionally enacting $\tau_{BA}$. 

Global automorphisms implemented via homogeneous temporal domain walls map
ground states of $H$ to ground states. This is not true of heterogeneous domain
walls, however, as the automorphism boundaries may bisect the local stabilizer
terms. In this case, the resulting system may no longer be a ground state (or
eigenstate) of $H$. If the system is then projected back into an eigenstate of
$H$, this may result in a nontrivial (potentially non-unitary) mapping. This
projection is precisely that expected when performing QEC on a stabilizer code,
as error-detection measurements ensure that a system returns to stabilizer
eigenstates. \footnote{In this analysis, for ease of presentation, we imagine
automorphisms and stabilizer measurements occurring at different times. In
dynamical code implementations, such as the DA color code, these two processes
typically occur concurrently.}

\begin{figure}
  \begin{center}
    \def\svgwidth{\linewidth}
    \begingroup \makeatletter \providecommand\color[2][]{\errmessage{(Inkscape) Color is used for the text in Inkscape, but the package 'color.sty' is not loaded}\renewcommand\color[2][]{}}\providecommand\transparent[1]{\errmessage{(Inkscape) Transparency is used (non-zero) for the text in Inkscape, but the package 'transparent.sty' is not loaded}\renewcommand\transparent[1]{}}\providecommand\rotatebox[2]{#2}\newcommand*\fsize{\dimexpr\f@size pt\relax}\newcommand*\lineheight[1]{\fontsize{\fsize}{#1\fsize}\selectfont}\ifx\svgwidth\undefined \setlength{\unitlength}{767.99990845bp}\ifx\svgscale\undefined \relax \else \setlength{\unitlength}{\unitlength * \real{\svgscale}}\fi \else \setlength{\unitlength}{\svgwidth}\fi \global\let\svgwidth\undefined \global\let\svgscale\undefined \makeatother \begin{picture}(1,0.70312508)\lineheight{1}\setlength\tabcolsep{0pt}\put(0,0){\includegraphics[width=\unitlength,page=1]{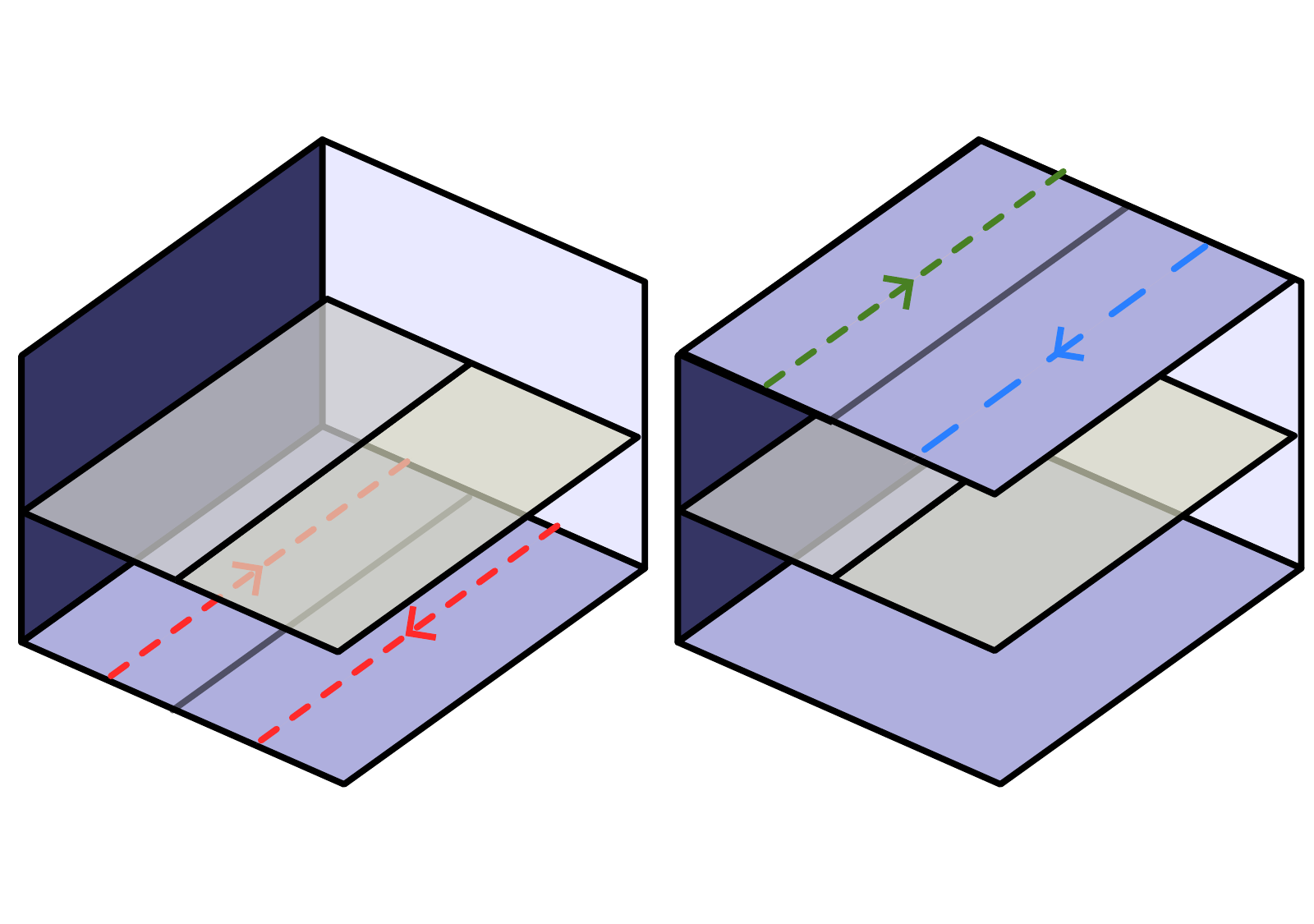}}\put(0.17619525,0.03177889){\color[rgb]{0,0,0}\makebox(0,0)[lt]{\lineheight{1.25}\smash{\begin{tabular}[t]{l}$t=0$\end{tabular}}}}\put(0.36042424,0.35915404){\color[rgb]{0,0,0}\makebox(0,0)[lt]{\lineheight{1.25}\smash{\begin{tabular}[t]{l}$\varphi_A$\end{tabular}}}}\put(0,0){\includegraphics[width=\unitlength,page=2]{domain_walls5.pdf}}\put(0.50919986,0.65552774){\color[rgb]{0,0,0}\makebox(0,0)[lt]{\lineheight{1.25}\smash{\begin{tabular}[t]{l}$t$\end{tabular}}}}\put(0.72430062,0.03177889){\color[rgb]{0,0,0}\makebox(0,0)[lt]{\lineheight{1.25}\smash{\begin{tabular}[t]{l}$t=1$\end{tabular}}}}\put(0.25501937,0.40627622){\color[rgb]{0,0,0}\makebox(0,0)[lt]{\lineheight{1.25}\smash{\begin{tabular}[t]{l}$\varphi_B$\end{tabular}}}}\put(0.04296951,0.08510138){\color[rgb]{1,0.16470588,0.16470588}\makebox(0,0)[lt]{\lineheight{1.25}\smash{\begin{tabular}[t]{l}$W[\varphi_A^{-1}(\bar{\texttt a})]$\end{tabular}}}}\put(-0.00291259,0.13470366){\color[rgb]{1,0.16470588,0.16470588}\makebox(0,0)[lt]{\lineheight{1.25}\smash{\begin{tabular}[t]{l}$W[\varphi_B^{-1}(\texttt a)]$\end{tabular}}}}\put(0.90480942,0.53648225){\color[rgb]{0.16470588,0.49803922,1}\makebox(0,0)[lt]{\lineheight{1.25}\smash{\begin{tabular}[t]{l}$W[\bar{\texttt a}]$\end{tabular}}}}\put(0.81304517,0.58484448){\color[rgb]{0.28235294,0.50196078,0.1372549}\makebox(0,0)[lt]{\lineheight{1.25}\smash{\begin{tabular}[t]{l}$W[\texttt{a}]$\end{tabular}}}}\end{picture}\endgroup    \end{center}
  \vskip -0.5cm
  \caption{Spacetime illustration of a section of an Abelian TO, with time
    flowing upwards, where $\varphi_A$ and $\varphi_B$ domain walls act
    concurrently on disjoint regions of the lattice between times $t=0$ and $t=1$.
    The operator $O_0 = W[\varphi_B^{-1}(\tsf a)]W[\varphi_A^{-1}(\bar{\tsf a})]$
    with support straddling a (closed) segment of the $\varphi_A$-$\varphi_B$
    boundary evolves to the product $O_1 = W[\tsf a]W[\bar{\tsf a}]$ at $t=1$.
    This is a stabilizer.}
  \label{fig:domain_walls5} 
\end{figure}

They key consequence of the competing automorphisms is that logical operators
and stabilizers interchange, which in QEC codes can lead to measurement of
encoded information. To see this, consider a closed segment of the boundary
between $\varphi_A$ and $\varphi_B$ domain walls. Let $W[\tsf a]$ denote a
Wilson loop operator for some anyon $\tsf a$ along this segment. Take the
operator 
\begin{equation}
    O_0 = W[\varphi_B^{-1}(\tsf a)] W[\varphi_A^{-1}(\bar{\tsf a})],
\end{equation}
cf. Fig.~\ref{fig:domain_walls5}. If the $A$-$B$ boundary is noncontractible and
\begin{equation}
    \varphi_B^{-1}(\tsf a) \times \varphi_A^{-1}(\bar{\tsf a}) \neq \tsf 1,
\end{equation}
or equivalently 
\begin{equation}
  \tsf d \equiv \tsf a \times \tau_{BA}(\bar{\tsf a}) \neq \tsf 1,
    \label{eq:d_anyon}
\end{equation}
then $O_0$ is a logical operator. At $t=t_\varphi$, this operator will evolve to
\begin{equation}
    O_1 = W[\tsf a]W[\bar{\tsf a}],
\end{equation}
which is a stabilizer in $\mathcal S$ as $\tsf a \times \bar{\tsf a} = \tsf 1$.
This means that if the system was in a superposition of eigenstates of the
logical operator $O_0$, it is now by $t=1$ in the same superposition of
eigenstates of $O_1$, a stabilizer. Moreover, if this were a QEC code, this
$O_1$ will be measured during syndrome detection, and eventually be corrected
into a $+1$-eigenstate by QEC. Information encoded by the logical operator $O_0$
will thus be measured out.

\begin{figure}
  \begin{center}
    \def\svgwidth{\linewidth}
    \begingroup \makeatletter \providecommand\color[2][]{\errmessage{(Inkscape) Color is used for the text in Inkscape, but the package 'color.sty' is not loaded}\renewcommand\color[2][]{}}\providecommand\transparent[1]{\errmessage{(Inkscape) Transparency is used (non-zero) for the text in Inkscape, but the package 'transparent.sty' is not loaded}\renewcommand\transparent[1]{}}\providecommand\rotatebox[2]{#2}\newcommand*\fsize{\dimexpr\f@size pt\relax}\newcommand*\lineheight[1]{\fontsize{\fsize}{#1\fsize}\selectfont}\ifx\svgwidth\undefined \setlength{\unitlength}{767.99990845bp}\ifx\svgscale\undefined \relax \else \setlength{\unitlength}{\unitlength * \real{\svgscale}}\fi \else \setlength{\unitlength}{\svgwidth}\fi \global\let\svgwidth\undefined \global\let\svgscale\undefined \makeatother \begin{picture}(1,0.70312508)\lineheight{1}\setlength\tabcolsep{0pt}\put(0,0){\includegraphics[width=\unitlength,page=1]{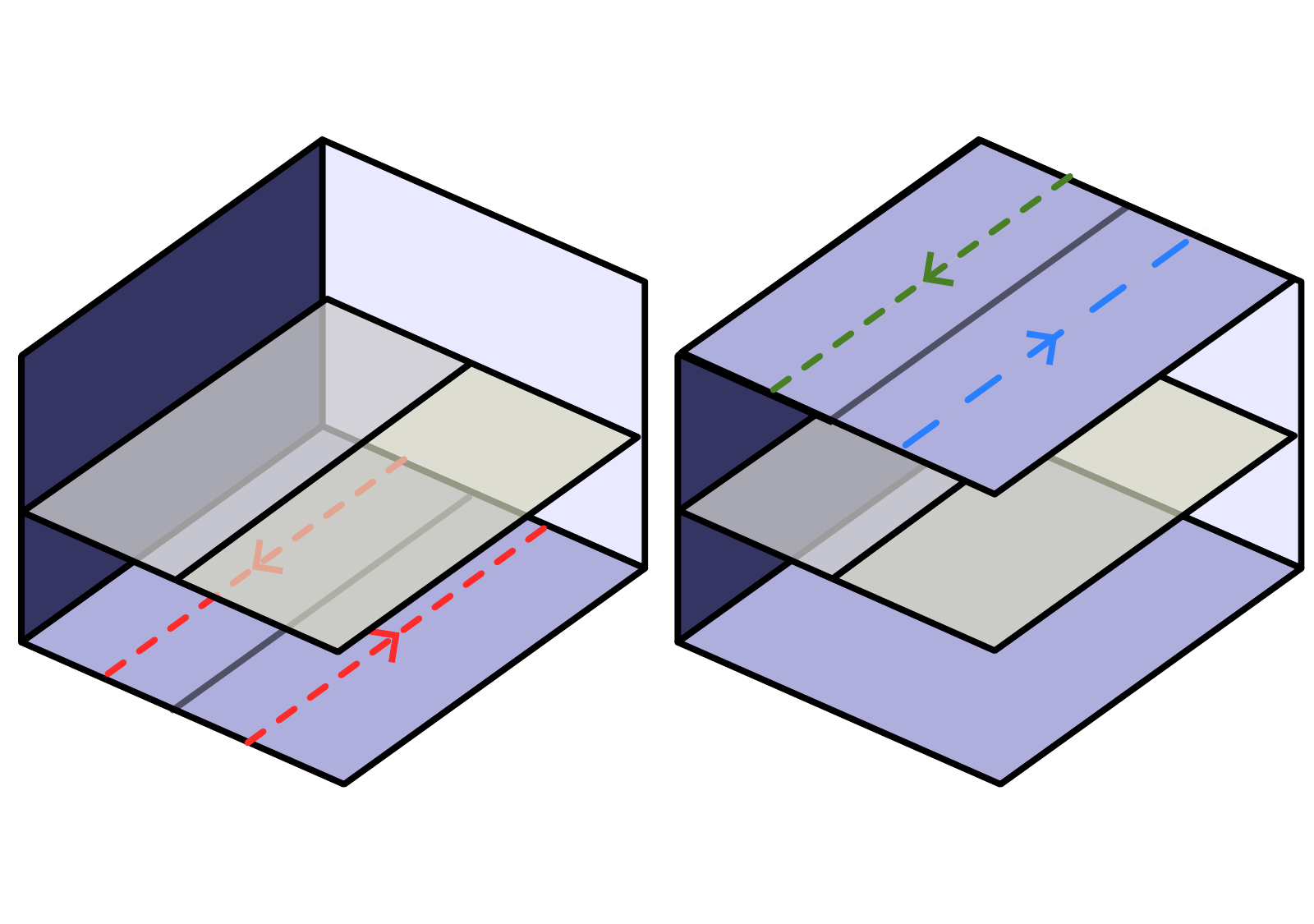}}\put(0.17619525,0.03177889){\color[rgb]{0,0,0}\makebox(0,0)[lt]{\lineheight{1.25}\smash{\begin{tabular}[t]{l}$t=0$\end{tabular}}}}\put(0.36042424,0.35915404){\color[rgb]{0,0,0}\makebox(0,0)[lt]{\lineheight{1.25}\smash{\begin{tabular}[t]{l}$\varphi_A$\end{tabular}}}}\put(0,0){\includegraphics[width=\unitlength,page=2]{domain_walls4.pdf}}\put(0.50919986,0.65552774){\color[rgb]{0,0,0}\makebox(0,0)[lt]{\lineheight{1.25}\smash{\begin{tabular}[t]{l}$t$\end{tabular}}}}\put(0.72430062,0.03177889){\color[rgb]{0,0,0}\makebox(0,0)[lt]{\lineheight{1.25}\smash{\begin{tabular}[t]{l}$t=1$\end{tabular}}}}\put(0.25501937,0.40627622){\color[rgb]{0,0,0}\makebox(0,0)[lt]{\lineheight{1.25}\smash{\begin{tabular}[t]{l}$\varphi_B$\end{tabular}}}}\put(0.09505193,0.10866246){\color[rgb]{1,0.16470588,0.16470588}\makebox(0,0)[lt]{\lineheight{1.25}\smash{\begin{tabular}[t]{l}$W[\tsf{a}]$\end{tabular}}}}\put(0.01320817,0.14834429){\color[rgb]{1,0.16470588,0.16470588}\makebox(0,0)[lt]{\lineheight{1.25}\smash{\begin{tabular}[t]{l}$W[\bar{\tsf{a}}]$\end{tabular}}}}\put(0.89116879,0.53648225){\color[rgb]{0.16470588,0.49803922,1}\makebox(0,0)[lt]{\lineheight{1.25}\smash{\begin{tabular}[t]{l}$W[\varphi_A(\tsf a)]$\end{tabular}}}}\put(0.78824403,0.58608454){\color[rgb]{0.28235294,0.50196078,0.1372549}\makebox(0,0)[lt]{\lineheight{1.25}\smash{\begin{tabular}[t]{l}$W[\varphi_B(\bar{\tsf a})]$\end{tabular}}}}\end{picture}\endgroup    \end{center}
  \vskip -0.5cm
  \caption{At $t=0$, operator $Q_0 = W[\tsf a]W[\bar{\tsf a}]$ with support
    straddling a (closed) segment of the $\varphi_A$-$\varphi_B$ boundary is in
    the stabilizer group. By $t=1$, this operator evolves to the (potentially
  nontrivial) $Q_1 = W[\varphi_A(\tsf a)]W[\varphi_B(\bar{\tsf a})]$.}
  \label{fig:domain_walls4} 
\end{figure}

At the same time, consider the stabilizer 
\begin{equation}
    Q_0 = W[\tsf a]W[\bar{\tsf a}],
\end{equation}
cf. Fig.~\ref{fig:domain_walls4}. This evolves into 
\begin{equation}
    Q_1 = W[\varphi_A(\tsf a)] W[\varphi_B(\bar{\tsf a})]
\end{equation}
at $t=1$. If the boundary is again noncontractible and
\begin{equation}
    \varphi_A(\tsf a)\times \varphi_B(\bar{\tsf a}) \neq \tsf 1,
\end{equation}
or equivalently,
\begin{equation}
    \tsf c \equiv \tsf b \times \tau_{BA}(\bar{\tsf b}) \neq \tsf
    1,\,\text{where}\,\tsf b \equiv \varphi_A(\tsf a),\label{eq:c_anyon}
\end{equation}
then $Q_1$ is a logical operator. Since $Q_0$ was a stabilizer, this
$Q_1$ logical operator will have a $+1$-eigenvalue at $t=1$, regardless of the
encoded state at $t=0$.

Together, these processes show that logical operators and stabilizers
interchange (sketched in Fig.~\ref{fig:logicalsandstab}). Since the sizes of the
stabilizer group and logical operators remain constant between $t=0$ and $t=1$
(as automorphisms are unitary operations) then we must have that for every
logical operator that maps onto a stabilizer, a stabilizer must map onto a
logical operator.\footnote{There is a related behavior whereby local stabilizers
  (that is, their supports are confined to a ball of constant radius relative to
  the linear system size), such as plaquettes, that straddle the $A$-$B$
  boundary map onto operators that create pairs of $\tsf c$ excitations. When
  these stabilizers are measured, they will be re-added to the stabilizer group
along with a random $\pm1$ phase. The product of these phases around a
noncontractible $A$-$B$ boundary segment is precisely the phase of the measured
logical operator.} Moreover, from Eq.~\eqref{eq:anyonlocalization}, $\tsf c$ and
$\tsf d$ are precisely anyons that localize at the $\tau_{BA} =
\varphi_B\varphi_A^{-1}$ twists. Hence, for every independent (under fusion)
nontrivial anyon $\tsf c$ that localizes, if the $\varphi_A$-$\varphi_B$
boundary contains noncontractible segments, a logical operator and stabilizer
interchange, leading to the measurement of encoded information.

\begin{figure}
  \vskip -1cm
  \begin{center}
    \def\svgwidth{\linewidth}
    \begingroup \makeatletter \providecommand\color[2][]{\errmessage{(Inkscape) Color is used for the text in Inkscape, but the package 'color.sty' is not loaded}\renewcommand\color[2][]{}}\providecommand\transparent[1]{\errmessage{(Inkscape) Transparency is used (non-zero) for the text in Inkscape, but the package 'transparent.sty' is not loaded}\renewcommand\transparent[1]{}}\providecommand\rotatebox[2]{#2}\newcommand*\fsize{\dimexpr\f@size pt\relax}\newcommand*\lineheight[1]{\fontsize{\fsize}{#1\fsize}\selectfont}\ifx\svgwidth\undefined \setlength{\unitlength}{283.46456693bp}\ifx\svgscale\undefined \relax \else \setlength{\unitlength}{\unitlength * \real{\svgscale}}\fi \else \setlength{\unitlength}{\svgwidth}\fi \global\let\svgwidth\undefined \global\let\svgscale\undefined \makeatother \begin{picture}(1,0.6)\lineheight{1}\setlength\tabcolsep{0pt}\put(0,0){\includegraphics[width=\unitlength,page=1]{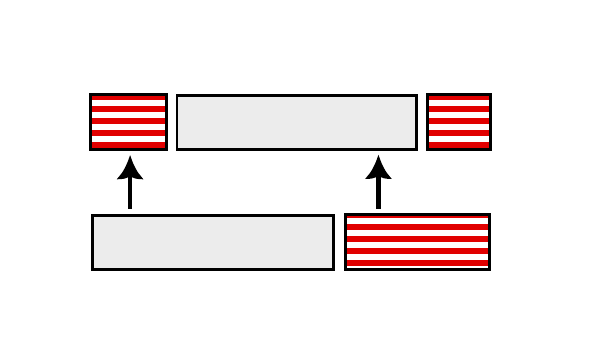}}\put(0.03475745,0.38380754){\color[rgb]{0,0,0}\makebox(0,0)[lt]{\lineheight{1.25}\smash{\begin{tabular}[t]{l}$t=1$\end{tabular}}}}\put(0.03475745,0.18047036){\color[rgb]{0,0,0}\makebox(0,0)[lt]{\lineheight{1.25}\smash{\begin{tabular}[t]{l}$t=0$\end{tabular}}}}\put(0.40595165,0.17419165){\color[rgb]{0,0,0}\makebox(0,0)[lt]{\lineheight{1.25}\smash{\begin{tabular}[t]{l}$\mathcal S$\end{tabular}}}}\put(0,0){\includegraphics[width=\unitlength,page=2]{logicals_and_stabilizers.pdf}}\put(0.76086299,0.17719166){\color[rgb]{0,0,0}\makebox(0,0)[lt]{\lineheight{1.25}\smash{\begin{tabular}[t]{l}$\mathcal L$\end{tabular}}}}\put(0,0){\includegraphics[width=\unitlength,page=3]{logicals_and_stabilizers.pdf}}\end{picture}\endgroup    \end{center}
  \vskip -1.2cm
  \caption{Schematic of the stabilizer group $\mathcal S$ (grey boxes) and
    logical operators $\mathcal L$ (red striped boxes) before ($t=0$) and after ($t=1$)
    a round of competing automorphisms. If the domain wall boundaries contain
    noncontractible segments, then some stabilizers map onto logical operators
    and vice versa. Each independent (up to fusion) nontrivial anyon that
    localizes at $\tau_{BA}$ twists leads to another logical operator that gets
    exchanged with a stabilizer. These logical operators adopt the
    $+1$-eigenvalues of the stabilizers that map onto them. During syndrome
  measurement, these exchanged logical operators will also be measured.}
  \label{fig:logicalsandstab}
\end{figure}

For a $\mathbb Z_2$ TO, such as $TC$ or $CC$, with each anyon its own
antiparticle, and for a $\tau_{BA}$ with quantum dimension $\mathcal D$,
$\log_2\mathcal D^2$ indicates the number of independent nontrivial anyon
strings introduced to the ISG at a $\varphi_A$-$\varphi_B$ boundary segment;
when at least one such segment is noncontractible, $\log_2\mathcal D^2$ is also
the number of logical qubits measured. As noted in
Sec.~\ref{sec:anyonlocalization}, this number is a property of the $\tau_{BA}$
conjugacy class. For example, $\mathcal C\{(c\sigma)(c\sigma)(c\sigma)\}$ is the
only conjugacy class for $\text{Aut}[CC]$ that measures exactly one logical
qubit, $\log_2 \mathcal D^2=1$; the one nontrivial localized anyon for an
automorphism $(c_1\sigma_1)(c_2\sigma_2)(c_3\sigma_3)$ is the fermion
$\tsf{c}_1\sigma_1\times \tsf{c}_2\sigma_2\times\tsf{c}_3\sigma_3$. 

How are the other logical operators affected by the competing automorphisms?
Intuitively, one may imagine that an anyon that transforms identically under
both automorphisms, $\varphi_A(\tsf a) = \varphi_B(\tsf a)$, would be unaffected
by the domain wall boundary. In such a case, the logical would simply transform
as
\begin{equation}
  \bar O[\tsf a] \mapsto \bar O[\varphi_A(\tsf a)],
  \label{eq:logicalmap}
\end{equation}
regardless of the domain wall configurations. The existence of such a logical is
equivalent to there existing an anyon $\tsf b = \varphi_A(\tsf a)$ such that
$\tsf b = \tau_{BA}(\tsf b)$, i.e., $\tsf b$ is invariant under $\tau_{BA}$. We
can extend this idea to show the existence of logical operators that are
unaffected by competing automorphisms, using the lemma (proven in
Appendix~\ref{app:twists}):
\begin{lemma}
  For an automorphism $\tau$ and anyon $\tsf{\emph b}$, $\tsf{\emph b} =
  \tau(\tsf{\emph b})$ if and only if $\tsf{\emph b}$ and $\tsf{\emph c}$ braid
  trivially for all anyons $\tsf{\emph c}$ that localize at $\tau$.
  \label{lemma:retained}
\end{lemma}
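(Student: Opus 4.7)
The plan is to use the characterization of localized anyons from Eq.~\eqref{eq:anyonlocalization}: the anyons $\tsf c$ that localize at a $\tau$-twist are precisely those of the form $\tsf c = \tsf a \times \tau(\bar{\tsf a})$ for some $\tsf a$. Since automorphisms preserve braiding, and for Abelian anyons the mutual braiding phase is multiplicative over fusion, we can reduce the braiding of $\tsf b$ with $\tsf c$ to a product of two braiding phases that we can relate using $\tsf b = \tau(\tsf b)$. Let $B(\tsf x,\tsf y) \in \mathrm{U}(1)$ denote the mutual-braiding phase.

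For the forward direction, assume $\tsf b = \tau(\tsf b)$. For any $\tsf c = \tsf a \times \tau(\bar{\tsf a})$ localizing at $\tau$, multiplicativity gives
\begin{equation}
  B(\tsf b,\tsf c) = B(\tsf b,\tsf a)\, B(\tsf b,\tau(\bar{\tsf a})).
\end{equation}
Since $\tau$ preserves braiding, and using $\tsf b = \tau(\tsf b)$,
\begin{equation}
  B(\tsf b,\tau(\bar{\tsf a})) = B(\tau(\tsf b),\tau(\bar{\tsf a})) = B(\tsf b,\bar{\tsf a}) = B(\tsf b,\tsf a)^{-1},
\end{equation}
where the last equality uses $\tsf a \times \bar{\tsf a} = \tsf 1$. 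Combining, $B(\tsf b,\tsf c) = 1$, as required.

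For the reverse direction, suppose $B(\tsf b,\tsf c) = 1$ for every $\tsf c = \tsf a \times \tau(\bar{\tsf a})$. Running the computation above in reverse gives $B(\tsf b,\tau(\bar{\tsf a})) = B(\tsf b,\bar{\tsf a})$ for every $\tsf a$, and since $\bar{\tsf a}$ ranges over all anyons this is equivalent to $B(\tsf b,\tau(\tsf a)) = B(\tsf b,\tsf a)$ for all $\tsf a$. Using that $\tau$ preserves braiding once more, $B(\tsf b,\tau(\tsf a)) = B(\tau^{-1}(\tsf b),\tsf a)$, so
\begin{equation}
  B\bigl(\tau^{-1}(\tsf b) \times \bar{\tsf b},\,\tsf a\bigr) = 1 \quad \text{for all anyons } \tsf a.
\end{equation}
I would then invoke non-degeneracy of the braiding for an Abelian (modular) anyon theory: an anyon that braids trivially with every anyon is the vacuum. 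Hence $\tau^{-1}(\tsf b) \times \bar{\tsf b} = \tsf 1$, i.e., $\tau(\tsf b) = \tsf b$.

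The main obstacle is the reverse direction, and in particular justifying the non-degeneracy step: one must argue (or cite) that the Abelian TOs of interest here are modular, so that trivial braiding with all anyons forces triviality. In addition, one must be careful that the set $\{\tsf a \times \tau(\bar{\tsf a})\}$ as $\tsf a$ ranges over all anyons furnishes \emph{every} anyon localizing at $\tau$ (which is the defining condition in Sec.~\ref{sec:anyonlocalization}) and that it is closed under fusion, so the trivial-braiding hypothesis can be applied pointwise to produce the pointwise equation $B(\tsf b,\tau(\tsf a)) = B(\tsf b,\tsf a)$ for each individual $\tsf a$ rather than only on a subgroup.
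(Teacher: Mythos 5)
Your proposal is correct and follows essentially the same route as the paper's proof: both directions reduce the braiding of $\tsf b$ with $\tsf c = \tsf a \times \tau(\bar{\tsf a})$ to a product of phases via multiplicativity, use the statistics-preserving property of $\tau$, and close the reverse direction by observing that $\tsf b \times \tau^{-1}(\bar{\tsf b})$ is transparent and invoking nondegeneracy of braiding in a modular theory. The two caveats you flag are both handled by the paper exactly as you anticipate: the set $\{\tsf a \times \tau(\bar{\tsf a})\}$ is by definition [Eq.~\eqref{eq:anyonlocalization}] the full set of localizing anyons, and modularity is justified by the fact that topological stabilizer codes realize modular anyon theories.
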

Any anyon \tsf{d} that is not invariant under
$\tau_{BA}$ must braid nontrivially with a localized anyon, and therefore
logical operator $\bar\Os[\tsf{d}]$ will be measured out if the
competing automorphisms have a noncontractible boundary segment dual to the
cycle following the support of $\bar\Os[\tsf{d}]$. 
Conversely, if $\tsf b_1$
and $\tsf b_2$ are mutual-semions that are both invariant under $\tau_{BA}$,
then neither $\tsf b_1$ nor $\tsf b_2$ can localize at $\tau_{BA}$. The logical
operators $\bar\Os[\tsf a_1]_v$ and $\bar\Os[\tsf a_2]_h$ [with $\tsf b_1
=\varphi_A(\tsf a_1)$, $\tsf b_2 =\varphi_A(\tsf a_2)$] anticommute, forming a
logical $\bar\X$ and $\bar\Z$ that satisfy Eq.~\eqref{eq:logicalmap}. Moreover,
they do not map onto one of the measured $W[\tsf c]$ $\varphi_A$-$\varphi_B$
boundary-operators, and they also commute with all such measurements.  The
existence of a mutual-semion pair invariant under $\tau_{BA}$ thus guarantees a
logical qubit that is not measured or overwritten in any realization of
competing $\varphi_A$ and $\varphi_B$ domain walls. We call this a ``protected''
logical qubit.\footnote{If invariant anyons do not form a mutual-semion
  pair, then information may not be protected. For example, if $\tau_{BA} =
  (rg)$ in $CC$, then its invariant anyons $\texttt{bx}$, $\texttt{by}$, and
  $\texttt{bz}$ all braid trivially. Moreover, they all localize, e.g.,
  $\texttt{bx} = \texttt{rx}\times (rg)(\texttt{gx})$, and are thus measured out
  along noncontractible boundary segments.} Since both $(\bar\Os[\tsf a_1]_v,
\bar\Os[\tsf a_2]_h)$ and $(\bar\Os[\tsf a_1]_h, \bar\Os[\tsf a_2]_v)$ form
anticommuting $\bar\X$ and $\bar\Z$ pairs, on a $2$-torus we in fact have two
protected qubits per invariant mutual-semion pair. If an anyon $\tsf b$ is
invariant under $\tau_{BA}$, then for any automorphism $\varphi$, $\varphi(\tsf
b)$ is invariant under the conjugate automorphism $\varphi \tau_{BA}
\varphi^{-1}$. Hence, the number of independent invariant mutual-semion pairs
and protected qubits also characterizes the conjugacy class of $\tau_{BA}$. For
$\tau_{BA} \in \text{Aut}[CC]$, the only conjugacy classes with such pairs are
$\mathcal C\{\id\}, \mathcal C\{(c\sigma)(c\sigma)(c\sigma)\}$ and $\mathcal
C\{(ccc)(\sigma \sigma \sigma)\}$, cf. Table~\ref{tbl:conjugacyclasses}. For
example, $\tau_{BA} = (rx)(gy)(bz)$ has $\tsf{rx}$ and $\tsf{bz}$ as invariant
mutual-semions. 

Localized and invariant anyons thus characterize the extent to which logical
information is lost over one period of competing automorphisms. The number of
independent nontrivial localized anyons, $\log_2\mathcal D^2$, indicates the
number of logical qubits measured if the boundaries of the competing
automorphisms contain noncontractible segments. Boundary segments along
homologous noncontractible cycles support representatives of the same measured
logical operator. The number of invariant mutual-semions (IMS) is the number of
logical qubits that are protected and will not be measured out, regardless of
the particular domain wall configurations. For example, if $\log_2\mathcal D^2 =
2$ and $\text{IMS}=0$ for the $CC$ model on the $2$-torus, then for a given
domain wall configuration $2$ of the $4$ qubits will not be measured out between
$t=0$ and $t=1$. However, the same $2$ qubits may be measured in a different
configuration, since $W[\tsf c]$ along each of the $2$ noncontractible cycles
span the logical space.\footnote{This occurs when localized anyons braid
trivially with each other.} On the other hand, if $\text{IMS}=2$ then that means
that the same $2$ logical qubits are protected in any disorder
realization.\footnote{The conclusions in this paragraph use the fact that domain
  wall boundaries cannot contain two disjoint cycles that simultaneously
  encircle orthogonal noncontractible cycles of the $2$-torus. Otherwise, it
would be possible to measure out all $4$ logical qubits when $\log_2\mathcal D^2
=2$ in just one domain wall instance.}

What about multiple consecutive realizations of competing automorphisms?
Consider that between $t=1$ and $t=2$ we again (randomly) partition the manifold
and enact $\varphi_A$ and $\varphi_B$. Assume that the $\varphi_A$-$\varphi_B$
boundaries contain noncontractible segments such that logical operators are
measured. If a protected logical qubit exists from $t=0$ to $t=2$, there must be
mutual-semions that braid trivially with the anyons $\tsf c$ that localize at
$\tau_{BA}$ between $t=0$ and $t=1$. They must also braid trivially with the
anyons that will localize between $t=1$ and $t=2$, that is $\varphi_A^{-1}(\tsf
c)$ or $\varphi_B^{-1}(\tsf c)$. If all localized anyons braid
trivially,\footnote{In $CC$, this is guaranteed whenever there is a single
nontrivial localized anyon, $\log_2 \mathcal D^2=1$.} then
Lemma~\ref{lemma:retained} tells us that $\varphi_A^{-1}(\tsf c) =
\varphi_B^{-1}(\tsf c)$. Extending this argument to $t\rightarrow\infty$, we
therefore have two protected logical qubits for each pair of mutual-semions that
braid trivially with $\varphi_A^{-t}(\tsf c)$ for $t=0,1,2,\ldots$ and for all
$\tsf c$ that localize at $\tau_{BA}$. If not all localized anyons have trivial
mutual statistics,\footnote{In $CC$, nontrivial mutual statistics is required
for $\log_2 \mathcal D^2>1$ with $\text{IMS}>0$, otherwise four commuting
logicals can be measured on conjugate noncontractible $\varphi_A$-$\varphi_B$
boundary cycles.} then requiring that the mutual-semions braid trivially with
all $(\varphi_t\varphi_{t-1}\cdots\varphi_1)(\tsf c)$ where each $\varphi_i \in
\{\varphi_A^{-1},\, \varphi_B^{-1}\}$ and $t=0,1,2,\ldots$ is a sufficient
condition for a protected logical qubit.

\begin{figure}
  \begin{center}
    \includegraphics[width=0.7\linewidth]{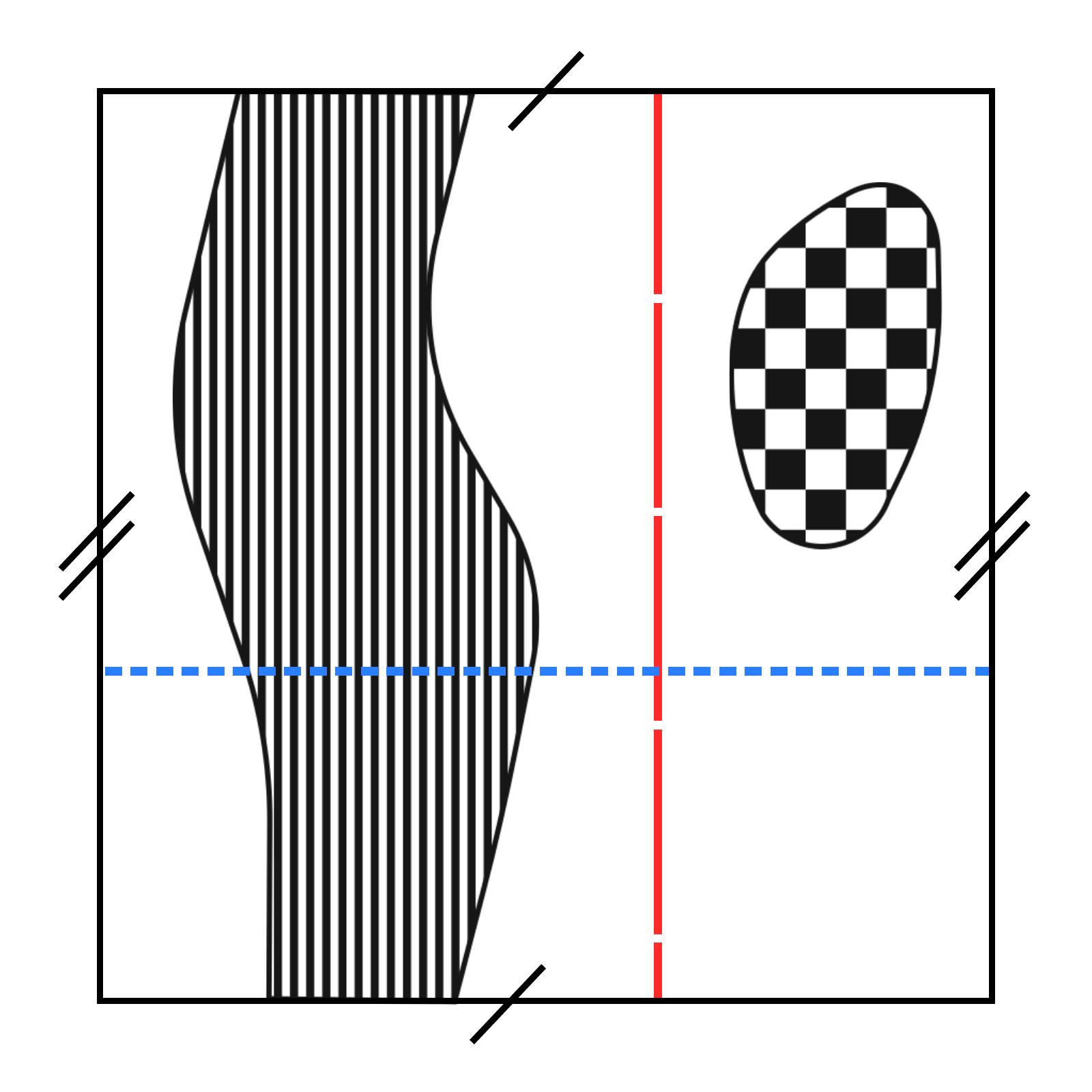}
  \end{center}
  \vskip -0.5cm
  \caption{A $2$-torus showing two temporal domain walls. If the boundary of a
  domain wall has only contractible segments (checked region), there exists at
  least one representative of each logical operator (red and blue lines) that
  avoids intersecting the boundary. On the other hand, a domain wall with a
  noncontractible boundary segment (striped region) always bisects one of each
  $\bar\X$, $\bar\Z$ logical operator pairs.}
  \label{fig:noncontract}
\end{figure}

These behaviors discussed are dependent only on the presence of noncontractible
boundary segments in the domain walls, not on the specific microscopic details.
This allows us to characterize systems of competing automorphism by their
homology. We label a realization of $\varphi_A$ and $\varphi_B$ temporal domain
walls as $\varphi_A$-dominant if the $A$-labelled subregions can completely
contain a noncontractible cycle from every homology class. That is, the support
of a representative of each logical operator can be contained within the
$A$-subregions. $\varphi_B$-dominant configurations are defined analogously. In
these cases, Eq.~\eqref{eq:logicalmap}---with $\varphi = \varphi_A$ or
$\varphi_B$ respectively---holds for a representative of each logical operator.
Moreover, the $\varphi_A$-$\varphi_B$ boundaries must have only contractible
segments, hence logical information is not necessarily lost or
measured.\footnote{Such a domain wall contains a representative of each of the
  torus' noncontractible cycles. Cutting open the torus along one of each cycle
  turns the manifold into a square with open boundaries, which now contains the
  (uncut) boundary segments of the domain wall. These segments must thus all be
contractible.} However, if the $\varphi_A$-$\varphi_B$ boundaries contain
noncontractible segments, such as in Fig.~\ref{fig:noncontract},\footnote{In
  addition to the example striped region in Fig.~\ref{fig:noncontract}, the
  boundary might alternatively contain a noncontractible segment extending
  around both cycles of the torus. In this case, the measured logical operator
  is equivalent to a product of $\bar\Os_v$ and $\bar\Os_h$ logical operators
  using the algebra in Fig.~\ref{fig:logicals}. The resulting behavior is
therefore analogous to the case discussed here.} then at least one logical
operator is measured corresponding to the nontrivial anyons that localize at
$\tau_{BA}$.

Competing automorphisms lead to the measurement of logical information, and can
therefore be naturally applied to explain the purification dynamics of
disordered Floquet codes. Specifically, we envision a Floquet code initialized
into a completely mixed logical state. Subjected to repeated rounds of
heterogeneous temporal domain walls, the average number of logical qubits
measured or protected each timestep can be inferred directly by the conjugacy
classes of the automorphism transition maps and the homology classes of the
cycles contained within the domain wall subregions. This can equivalently be
applied to a fixed initial logical state, and the question of how long 
logical information can remain protected despite the heterogeneous dynamics. In the
following section, we use this information to discuss the behavior of a
disordered dynamical code. In Appendix~\ref{app:critical} we explore in more
detail the purification dynamics of these codes using numerical simulations.

\section{\label{sec:disordermodel}Disordered DA Color Codes}

Section~\ref{sec:competing} showed that heterogeneity in the temporal domain
walls of Floquet codes can lead to the purification of mixed codestates or,
equivalently, the loss of logical information. In this section, we show that
this picture can readily be applied to a class of noise models that affect the
evolution of dynamic automorphism (DA) codes, helping characterize their
resilience to disorder.

In addition to common noise models such as depolarizing channels or
circuit-level noise \cite{knillTheory1997, knillTheory2000,
dennisTopological2002, chubbStatistical2021, nielsenQuantum2010,
loulidiPhysical2024}, to characterize the fault-tolerance of a QEC code and the
stability of its underlying TO one should also consider the effect of
perturbations to the circuit protocol. In measurement-induced FETs such as DA
codes, this can occur via missing measurements. Ref.~\onlinecite{vuStable2024},
for example, considered the effect of missing parity measurements in the
honeycomb Floquet code, finding that the FET phase persisted up to a critical
point despite the disorder. Beyond this, noise models incorporating missing
measurements have been poorly understood in the literature, despite being a
relevant failure mode for quantum devices.

Unlike circuit-level noise, missing measurements can theoretically be tracked
with perfect knowledge by a classical operator of a quantum computer: a
measurement that does not occur may produce no classical output bit. With
perfect knowledge, is it not then trivial to correct for errors that arise from
missing measurements? If the circuit protocol is able to adaptively and
immediately repeat any measurements without output bits this may be easily
overcome. Such on-the-fly adaptivity, however, may introduce classical
bottlenecks and be impractical to implement in systems involving extensive
measurements such as Floquet codes. Repeated measurements are not able to be
parallelized, for example, and allow other errors (such as Pauli errors) to
accumulate while waiting for imperfect measurements to succeed. It would be
beneficial if the effect of missing measurements could be ``swept under the
rug'' by the topological protection of the code, in the same way that Pauli
errors are able to be corrected if their weight is below a threshold
\cite{dennisTopological2002}. This may also allow us to optimize the
circuit protocol, if, for example, we can purposefully skip measurements on
some nonvanishing subset of the physical qubits while still enacting the desired
automorphism gate.

What about correcting the errors after each round of error-detection,
incorporating missing-measurement correction alongside syndrome correction? This
is reminiscent of the problem in $[[n,k,d]]$ stabilizer codes when the locations
of random Pauli errors are known (called ``located'' errors in
Ref.~\onlinecite{nielsenQuantum2010}). In this case, weight $d-1$ errors can be
corrected, instead of just $\lfloor\frac{d-1}2\rfloor$ . Since in topological
codes $d$ is the length of the manifold's shortest nontrivial cycle, this is a
geometric condition, similar to whether competing automorphisms have
noncontractible boundary segments. Missing measurements, as we will argue, can
result in competing automorphisms, and correspond to similar geometric
considerations regarding their capacity to remove logical information. 

In the honeycomb Floquet code, erasure channels \cite{grasslCodes1997}---with
depolarizing noise occurring at known locations---below a certain error rate can
be effectively corrected \cite{guFaulttolerant2023}. Similarly, with missing
measurements in Floquet codes the protection of logical information should
persist up to a critical point (Ref.~\onlinecite{vuStable2024} showed this for
the honeycomb Floquet code). There also exists protected logical subspaces that
are unaffected by missing measurements, even beyond this critical point. As we
will show, this stems directly from Floquet codes with missing-measurement noise
models being systems with competing automorphisms.

For a nontrivial illustration, we focus now on the behavior of the DA color
codes \cite{davydovaQuantum2024} when perturbed by stochastic missing
measurements. This picture can analogously be interpreted as instead randomly
including additional measurements. Recall that in Section~\ref{sec:background}
we introduced DA color codes as a sequence of condensed anyons that correspond
to link measurements on a two-layered honeycomb lattice. To add disorder to a
given stage of measurements, for each associated link on the lattice we
independently include that link in the measurement sequence with probability $p
\in [0,1]$. Otherwise, it is omitted. If we assume that the initial sequence
consists of only reversible pairs of condensations---that is, it is a valid
implementation of the DA color code---then it is not guaranteed that removing
one measurement sequence retains this reversibility. Consider
Eq.~\eqref{eq:rgb}, for example. Focusing on the $\tsf{bz}_1$ condensations in the
third stage, we denote the disordered sequence in the shorthand
\begin{eqnarray}
  \widetilde{CC} \; && \rightarrow \; 
  \magic{1}{}{}{}{}{}{2}{}{} \; \rightarrow \;
  \magic{}{}{}{}{}{2}{}{}{} \; \rightarrow \; 
  \underbrace{\magic{}{}{}{}{}{}{}{}{1}}_{\displaystyle p} \nonumber \\
  && \rightarrow \; 
  \magic{}{2}{}{}{1}{}{}{}{} \; \rightarrow \; 
  \widetilde{CC}.
  \label{eq:example}
\end{eqnarray}
When $p=0$, no $\tsf{bz}_1$ links are measured and we follow the sequence 
\begin{equation*}
  \widetilde{CC} \rightarrow \magic{1}{}{}{}{}{}{2}{}{} \rightarrow
  \magic{}{}{}{}{}{2}{}{}{} \rightarrow \magic{}{2}{}{}{1}{}{}{}{} \rightarrow
  \widetilde{CC},
\end{equation*}
which enacts a valid $(rxgybz)$ automorphism (see
Appendix~\ref{app:automorphisms} for an explanation of deriving this
automorphism). When $p=1$ we measure all the $\tsf{bz}_1$ links and follow
\begin{equation*}
  \widetilde{CC} \rightarrow \magic{1}{}{}{}{}{}{2}{}{} \rightarrow
  \magic{}{}{}{}{}{2}{}{}{1} \rightarrow \magic{}{2}{}{}{1}{}{}{}{} \rightarrow
  \widetilde{CC},
\end{equation*}
with automorphism $(rgb)$. In this way, the presence of measured
$\tsf{bz}_1$-links across the lattice defines disjoint regions of $(rxgybz)$ and
$(rgb)$ temporal domain walls, completely analogous to the discussion of
competing automorphisms in Section~\ref{sec:competing}.
Figure~\ref{fig:domain_walls2} depicts this relationship using a spacetime
illustration. 

\begin{figure}
  \begin{center}
    \def\svgwidth{0.95\columnwidth}
    \begingroup \makeatletter \providecommand\color[2][]{\errmessage{(Inkscape) Color is used for the text in Inkscape, but the package 'color.sty' is not loaded}\renewcommand\color[2][]{}}\providecommand\transparent[1]{\errmessage{(Inkscape) Transparency is used (non-zero) for the text in Inkscape, but the package 'transparent.sty' is not loaded}\renewcommand\transparent[1]{}}\providecommand\rotatebox[2]{#2}\newcommand*\fsize{\dimexpr\f@size pt\relax}\newcommand*\lineheight[1]{\fontsize{\fsize}{#1\fsize}\selectfont}\ifx\svgwidth\undefined \setlength{\unitlength}{767.99990845bp}\ifx\svgscale\undefined \relax \else \setlength{\unitlength}{\unitlength * \real{\svgscale}}\fi \else \setlength{\unitlength}{\svgwidth}\fi \global\let\svgwidth\undefined \global\let\svgscale\undefined \makeatother \begin{picture}(1,0.70312508)\lineheight{1}\setlength\tabcolsep{0pt}\put(0,0){\includegraphics[width=\unitlength,page=1]{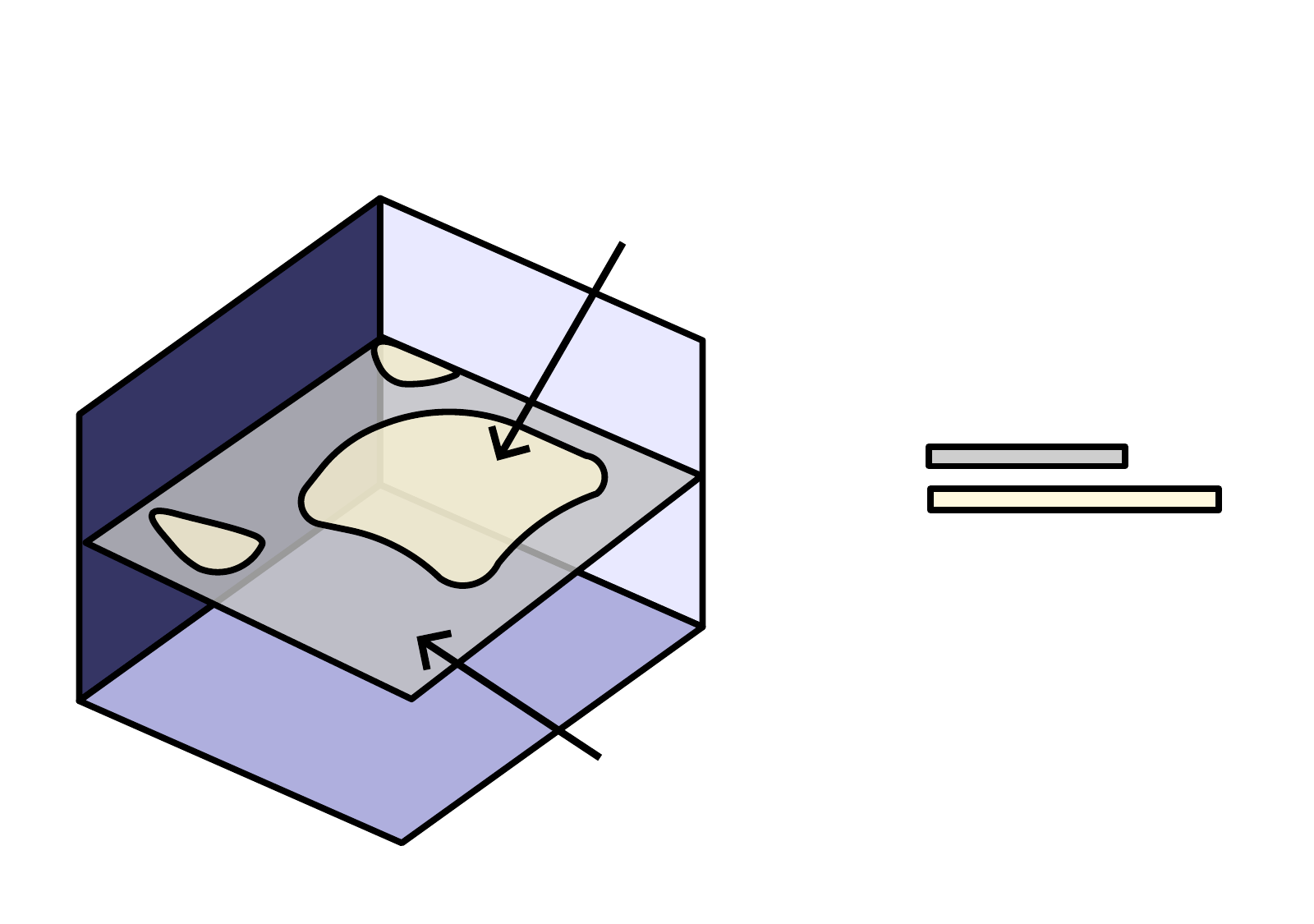}}\put(0.47007627,0.60837429){\color[rgb]{0,0,0}\makebox(0,0)[lt]{\lineheight{1.25}\smash{\begin{tabular}[t]{l}$\magic{1}{}{}{}{}{}{2}{}{} \rightarrow \magic{}{}{}{}{}{2}{}{}{} \rightarrow \magic{}{2}{}{}{1}{}{}{}{}$\end{tabular}}}}\put(0.2136018,0.23886845){\color[rgb]{0,0,0}\makebox(0,0)[lt]{\lineheight{1.25}\smash{\begin{tabular}[t]{l}$(rgb)$\end{tabular}}}}\put(0.41864441,0.2029068){\color[rgb]{0,0,0}\makebox(0,0)[lt]{\lineheight{1.25}\smash{\begin{tabular}[t]{l}$\widetilde{CC}$\end{tabular}}}}\put(0.264258,0.315752){\color[rgb]{0,0,0}\makebox(0,0)[lt]{\lineheight{1.25}\smash{\begin{tabular}[t]{l}$(rxgybz)$\end{tabular}}}}\put(0.01854059,0.37775489){\color[rgb]{0,0,0}\makebox(0,0)[lt]{\lineheight{1.25}\smash{\begin{tabular}[t]{l}$t$\end{tabular}}}}\put(0,0){\includegraphics[width=\unitlength,page=2]{domain_walls2.pdf}}\put(0.4798419,0.09536265){\color[rgb]{0,0,0}\makebox(0,0)[lt]{\lineheight{1.25}\smash{\begin{tabular}[t]{l}$\magic{1}{}{}{}{}{}{2}{}{} \rightarrow \magic{}{}{}{}{}{2}{}{}{1} \rightarrow \magic{}{2}{}{}{1}{}{}{}{}$\end{tabular}}}}\put(0.94033641,0.31274492){\color[rgb]{0,0,0}\makebox(0,0)[lt]{\lineheight{1.25}\smash{\begin{tabular}[t]{l}$\varphi_A$\end{tabular}}}}\put(0.86649121,0.34622647){\color[rgb]{0,0,0}\makebox(0,0)[lt]{\lineheight{1.25}\smash{\begin{tabular}[t]{l}$\tau_{BA}$\end{tabular}}}}\put(0,0){\includegraphics[width=\unitlength,page=3]{domain_walls2.pdf}}\put(0.60961325,0.32886566){\color[rgb]{0,0,0}\makebox(0,0)[lt]{\lineheight{1.25}\smash{\begin{tabular}[t]{l}$\varphi_B \, \biggr\{$\end{tabular}}}}\end{picture}\endgroup    \end{center}
  \caption{Spacetime illustration of one period of the $1$-component disorder
  model from Eq.~\eqref{eq:example}, with two competing automorphisms:
  $\varphi_A = (rxgybz)$ and $\varphi_B = (rgb)$. Time runs upwards and the
  spatial planes represent timeslices of the $\widetilde{CC}$ TO. The $(rgb)$
  domain wall is the dominant region, containing the support of noncontractible
  cycles, characteristic of the supercritical disordered phase. The inset shows
  the perspective where we enact a $\varphi_A$ domain wall everywhere, and only
  the transition map $\tau_{BA}$ forms clusters.}
  \label{fig:domain_walls2} 
\end{figure}

On the other hand, if we modify the sequence to 
\begin{eqnarray}
  \widetilde{CC} \; && \rightarrow \; 
  \magic{1}{}{}{}{}{}{2}{}{} \; \rightarrow \;
  \magic{}{}{}{}{}{2}{}{}{} \; \rightarrow \; 
  \underbrace{\magic{}{}{}{}{}{}{}{}{1}}_{\displaystyle p} \nonumber \\
  && \rightarrow \; 
  \magic{}{2}{}{{\color{RedOrange} 1}}{}{}{}{}{} \; \rightarrow \; 
  \widetilde{CC},
\end{eqnarray}
then when $p=0$ there is an irreversible condensation $\tsf{rx}_1 \rightarrow
\tsf{gx}_1$. This sequence therefore does not produce a transparent domain wall
or enact an automorphism. In general, there are two possible irreversible
scenarios: ``intralayer'' $TC_i \rightarrow TC_i$ irreversible condensations
($i=1,2$); and ``interlayer'' $TC\boxtimes TC \leftrightarrow \widetilde{CC}$
irreversible condensations. Such irreversibility means that anyons of the
effective child theory are condensed, creating
punctures in the TO~\cite{kesselringAnyon2024}. If the affected regions cannot
enclose noncontractible cycles, then, with knowledge of where missing
measurements occurred,\footnote{If we are alerted to a missing measurement, we
can determine if an irreversible condensation occurred by referencing the
measurement schedule and checking if the preceding and following condensed
bosons (that are not missed) braid trivially.} in theory these patches can be
reinitialized later thus restoring the original code without affecting the
logical information. Otherwise, logical information will be lost. The effects of
such irreversible condensations on the code are discussed further in
Appendix~\ref{app:irrev}. For the remainder of this section, we focus on
measurement sequences that are inherently immune to this behavior: missing any
measurement stage will---by construction---not introduce any irreversible
condensations. This avoids the potential for irreparable damage to the
codespace, focusing on the errors that are correctable. One question that we
address by the end of this section is precisely what this limitation leaves us
in terms of potential automorphisms that can still be constructed in this way.
If that set is large enough, this limitation may be beneficial in informing the
design of DA color code quantum circuits that are robust against missing
measurements.

Limiting our systems to just those that produce differing automorphisms at
different realizations of the disorder, we can interpret them as models of
competing automorphisms from Section~\ref{sec:competing}. Understanding what
automorphisms can arise in the different outcomes of disorder can therefore
inform us as to which logical operators are measured or protected, if any. Let
$\varphi_A$ be the automorphism enacted when $p=0$, and $\varphi_B$ when $p=1$.
When the $\varphi_A$-$\varphi_B$ boundaries contain a noncontractible path, we
expect that logical information is measured. Microscopically, the measured
logical operators here are formed from the links at the boundary of the
disordered stage and the immediately-preceding condensate in the same $CC$
layer. In Eq.~\eqref{eq:example}, for example, the measured anyon after the
disordered $\tsf{bz}_1$ condensations is the fermion
$\tsf{rx}_1\times\tsf{bz}_1$. Following the condensations forward to
$\widetilde{CC}$ at $t=1$, this fermion updates to become
$\tsf{rx}_1\tsf{ry}_2\times\tsf{bz}_1$, which is equivalent to the
$\tsf{ry}\times\tsf{bz}$ fermion in $\widetilde{CC}$. This is precisely the
anyon that localizes at $\tau_{BA} = (rgb)\cdot(rxgybz)^{-1} = (rz)(gx)(by)$.
The corresponding logical operators are the $\bar \Os[\tsf{ry}\times\tsf{bz}]$
vertical or horizontal operators in the $\widetilde{CC}$ phase at $t=1$ (or
$\bar\Os[\tsf{by}\times\tsf{gz}]$ at $t=0$). Because the string is formed from
the product of links set by the disorder, there is exactly one anyon species
measured. 

We also saw in Section~\ref{sec:competing} that protected logical
operators correspond to the invariant mutual-semion pairs of $\tau_{BA}$. In our
example, Table~\ref{tbl:conjugacyclasses} shows one such pair for $\tau_{BA}$,
and therefore there exists a $2$-qubit logical Hilbert subspace that remains
unaffected by the missing measurements. We note that in an analysis of the
simpler honeycomb Floquet code, Ref.~\onlinecite{vuStable2024} also found that
fermionic string operators were measured around the boundaries of
missing-measurement regions. Their discussion relied on specific microscopic
details of the measurement sequence, however, whereas our perspective of
competing automorphisms is detail-agnostic and generalizable to other
Abelian-anyon FETs, while also allowing us to intuit further details such as the
presence of protected logical qubits.

Since the size of each domain wall subregion is directly determined by the value
of the parameter $p$, we can associate certain regions in phase-space with the
different behaviors of the $\varphi_A$-$\varphi_B$ boundaries. For some critical
value $p_c \in [0,1]$, we have:
\begin{enumerate}[label=(\arabic*)]
  \item \textbf{Subcritical phase:} $p < p_c$ such that with high probability
    the configuration is $\varphi_A$-dominant.
  \item \textbf{Critical point:} $p \sim p_c$ such that with high probability a
    $\varphi_A$-$\varphi_B$ boundary contains noncontractible segments.
  \item \textbf{Supercritical phase:} $p > p_c$ such that with high probability
    the configuration is $\varphi_B$-dominant.
\end{enumerate}
In the thermodynamic limit of $L\rightarrow \infty$, one---and only
one---logical qubit is measured out only at the critical point $p=p_c$ of this
disorder model. Whether only one logical degree of freedom is measured out over
multiple periods depends on the structure of subsequent disorder partitions as
well as the properties of the automorphisms; these cases are discussed in
Section~\ref{sec:logically}. The value of $p_c$ and the critical behavior of the
model are discussed further in Appendix~\ref{app:critical}. We find that the
systems show transitions in the universality class of bond percolation on kagome
or triangular lattices; this is consistent with earlier works examining missing
measurements in the simpler honeycomb Floquet code \cite{vuStable2024}.

\subsection{\label{sec:connected}Connected FETs}

To understand the effect of missing measurements on the DA color code, we first
simplify our noise model further by assuming that only one stage of the
measurement sequence is disordered. We call this a ``$1$-component disorder
model'' with a parameter $p$; all the examples given thus far fall in this
category. In Section~\ref{sec:mcomp} we generalize our results to the case where
multiple---or all---stages may contain missing measurements. We are concerned
with two important behaviors: (1) what automorphisms can be realized
while maintaining only reversible condensation sequences; and (2) what the
requirements are for there to exist protected logical subspaces, despite the
disorder. We focus on the first question in this section, and address the second
in Section~\ref{sec:logically}.

To answer this, we introduce the following definitions:
\begin{definition}[Adjacent FETs]
  Two FETs $A$ and $B$ with automorphisms $\varphi_A$ and $\varphi_B$ are
  adjacent if there exists a $1$-component disorder model that realizes
  competing $\varphi_A$ and $\varphi_B$ temporal domain walls.
\end{definition}
\begin{definition}[Connected FETs]
  Two FETs, $A_0$ and $A_m$, are connected if there exists an adjacency sequence
  of FETs $\{A_0,\, A_1,\,\ldots,\, A_m\}$, such that $A_i$ and $A_{i+1}$ are
  adjacent for all $i=0,\ldots,{m-1}$. The length of this sequence is defined as
  $m$.
  \label{def:connected}
\end{definition}

It is worth clarifying here that we are working in a space of FETs in which they
are labeled solely by their automorphisms. Hence, distinct measurement sequences
realizing the same automorphism are identified. If two FETs are adjacent, it
does not guarantee that there is a 1-component disorder model involving any
given two measurement sequences realizing FETs $A$ and $B$. Rather, at least one
exists. Similarly, with connected FETs, for each $j=1,\,\ldots,\, m-1$, we allow
for FET $A_j$ to be realized by distinct measurement sequences in the
$1$-component disorder models connecting it to $A_{j-1}$ and to
$A_{j+1}$.\footnote{In a space that distinguishes distinct measurement
sequences, different measurement sequences realizing the same FET may not be
connected in the sense of Definition~\ref{def:connected}, see
App.~\ref{app:connected}.} 

We argued previously that at criticality a $1$-component disorder model measures
exactly one logical qubit (with high probability). By
Section~\ref{sec:competing}, this necessitates that $\tau_{BA}$ has
$\log_2\mathcal D^2 = 1$. From Table~\ref{tbl:conjugacyclasses}, for
$\text{Aut}[CC]$ there is only one conjugacy class for which this is true. Two
FETs in the DA color code are thus adjacent only if their automorphisms satisfy
the \emph{separation condition}
\begin{equation}
  \tau_{BA} \in \mathcal C\{(c\sigma)(c\sigma)(c\sigma)\}.
  \label{eq:separation}
\end{equation}
There are several immediate consequences of this:
\begin{enumerate}[label=(\arabic*)]
  \item The trivial FET, $\mathbbm 1$, with $\varphi = {\id}$ is adjacent only
    to FETs with automorphisms in the $\mathcal
    C\{(c\sigma)(c\sigma)(c\sigma)\}$ conjugacy class. This follows from
    $\varphi_B = \tau_{BA}\cdot\id$.
  \item Adjacent FETs always have different parities on the $S_2$ subgroup of
    $\text{Aut}[CC]$ (color-flavor exchange, c.f.
    Appendix~\ref{app:autotheory}), but the same parities on the $S_3\times S_3$
    subgroup (color or flavor permutations). This comes from $\tau_{BA}$ having
    odd-parity on the $S_2$ subgroup, but even-parity on the $S_3\times S_3$
    subgroup.
  \item Two FETs in the same conjugacy class are never adjacent, since elements
    of a conjugacy class have the same parities on all subgroups.
  \item The logical operator that is measured when the system is tuned near the
    critical point must be a fermion string. This is because there is precisely
    one nontrivial anyon---a fermion---that localizes at twists corresponding to
    the automorphisms in $\mathcal C\{(c\sigma)(c\sigma)(c\sigma)\}$
    \cite{kesselringBoundaries2018}. For a
    $(c_1\sigma_1)(c_2\sigma_2)(c_3\sigma_3)$ automorphism, this fermion is
    $\tsf{c}_1\sigma_1\times \tsf{c}_2\sigma_2 \times \tsf{c}_3\sigma_3$.
\end{enumerate}

We can promote this separation condition to a sufficient and necessary condition
by showing that for all FET pairs with automorphisms satisfying
Eq.~\eqref{eq:separation} there exists a measurement sequence and $1$-component
disorder model that connects them. We first introduce the idea of concatenating
two measurement sequences: let $\mathcal A_i, \mathcal C_j$ denote some
$TC\boxtimes TC$ child theories. For a sequence 
\begin{equation} 
  \widetilde{CC} \rightarrow \mathcal A_1 \rightarrow \cdots \rightarrow
  \mathcal A_m \rightarrow \widetilde{CC} 
\end{equation} 
that realizes automorphism $\varphi_A$, and a sequence 
\begin{equation} 
  \widetilde{CC} \rightarrow \mathcal A_m \rightarrow \mathcal C_1 \rightarrow
  \cdots \rightarrow \mathcal C_n \rightarrow \widetilde{CC} 
\end{equation}
that realizes automorphism $\varphi_C$, we can construct the concatenated
sequence 
\begin{equation} 
  \widetilde{CC} \rightarrow \mathcal  A_1 \rightarrow \cdots \rightarrow
  \mathcal A_m \rightarrow C_1 \rightarrow \cdots \rightarrow \mathcal C_n
  \rightarrow \widetilde{CC} \label{eq:concatenated} 
\end{equation}
that realizes automorphism $\varphi_C\varphi_A$. Now, the trivial FET, $\mathbbm
1$, is adjacent to all FETs in $\mathcal C\{(c\sigma)(c\sigma)(c\sigma)\}$; we
explicitly provide example $1$-component disorder models in
Appendix~\ref{app:examples} to prove this. Then, let $A$ and $B$ be any two FETs
with automorphisms $\varphi_A$ and $\varphi_B$ that satisfy the separation
condition, $\tau_{BA} \in \mathcal C\{(c\sigma)(c\sigma)(c\sigma)\}$. It is
possible (see Appendix~\ref{app:automorphisms}) to construct a measurement
sequence for $A$ such that its final $TC\boxtimes TC$ child theory is the same
as the first $TC\boxtimes TC$ child theory of the measurement sequence that
connects $\id$ and $\tau_{BA}$ via a $1$-component disorder model. Using the
result above, we concatenate the measurement sequence for $A$ with the
measurement sequence that realizes $\varphi_C = \id$ or $\varphi_C = \tau_{BA}$.
We now have a $1$-component disorder model that creates automorphisms $\varphi_A
= \id\cdot\varphi_A$ and $\varphi_B = \tau_{BA}\varphi_A$. Thus, two FETs are
adjacent if and only if their automorphisms satisfy Eq.~\eqref{eq:separation}.

We now build towards an understanding of what automorphisms can be realized
while maintaining just $1$-component disorder models: for example, are any two
arbitrary FETs connected? Equivalently, constructing the graph $G = (N, E)$ with
each node in $N$ a distinct FET and an edge in $E$ joining adjacent FETs, is
this graph connected? If two nodes $n, m \in N$ are joined by an edge, then
their automorphisms $\varphi_n, \varphi_m$ satisfy $\tau_{mn} \in \mathcal
C\{(c\sigma)(c\sigma)(c\sigma)\}$. By Section~\ref{sec:connected}, $\varphi_m =
\tau_{mn} \varphi_n$ has the same parity on $S_3\times S_3$ as $\varphi_n$.
There is thus no path in $G$ connecting two FETs with automorphisms of different
parity on $S_3\times S_3$, and $G$ has (at least) two non-empty connected
components. Each component contains $36$ FETs, grouped by the conjugacy class of
their automorphisms (cf. Table~\ref{tbl:conjugacyclasses}).\footnote{There is no
direct interpretation of these two components in terms of logical gates. In
particular, the mapping from automorphisms to gates is intrinsically dependent
on the geometry and boundary conditions of the manifold. For a fixed algebra on
the $2$-torus, each automorphism can be written as a logical gate in the
$4$-qubit Clifford group. No apparent structure emerges, however, when viewing
these gates grouped by their parity.}

\begin{figure}
  \begin{center}
    \def\svgwidth{\linewidth}
    \begingroup \makeatletter \providecommand\color[2][]{\errmessage{(Inkscape) Color is used for the text in Inkscape, but the package 'color.sty' is not loaded}\renewcommand\color[2][]{}}\providecommand\transparent[1]{\errmessage{(Inkscape) Transparency is used (non-zero) for the text in Inkscape, but the package 'transparent.sty' is not loaded}\renewcommand\transparent[1]{}}\providecommand\rotatebox[2]{#2}\newcommand*\fsize{\dimexpr\f@size pt\relax}\newcommand*\lineheight[1]{\fontsize{\fsize}{#1\fsize}\selectfont}\ifx\svgwidth\undefined \setlength{\unitlength}{1518.75bp}\ifx\svgscale\undefined \relax \else \setlength{\unitlength}{\unitlength * \real{\svgscale}}\fi \else \setlength{\unitlength}{\svgwidth}\fi \global\let\svgwidth\undefined \global\let\svgscale\undefined \makeatother \begin{picture}(1,0.89975309)\lineheight{1}\setlength\tabcolsep{0pt}\put(0,0){\includegraphics[width=\unitlength,page=1]{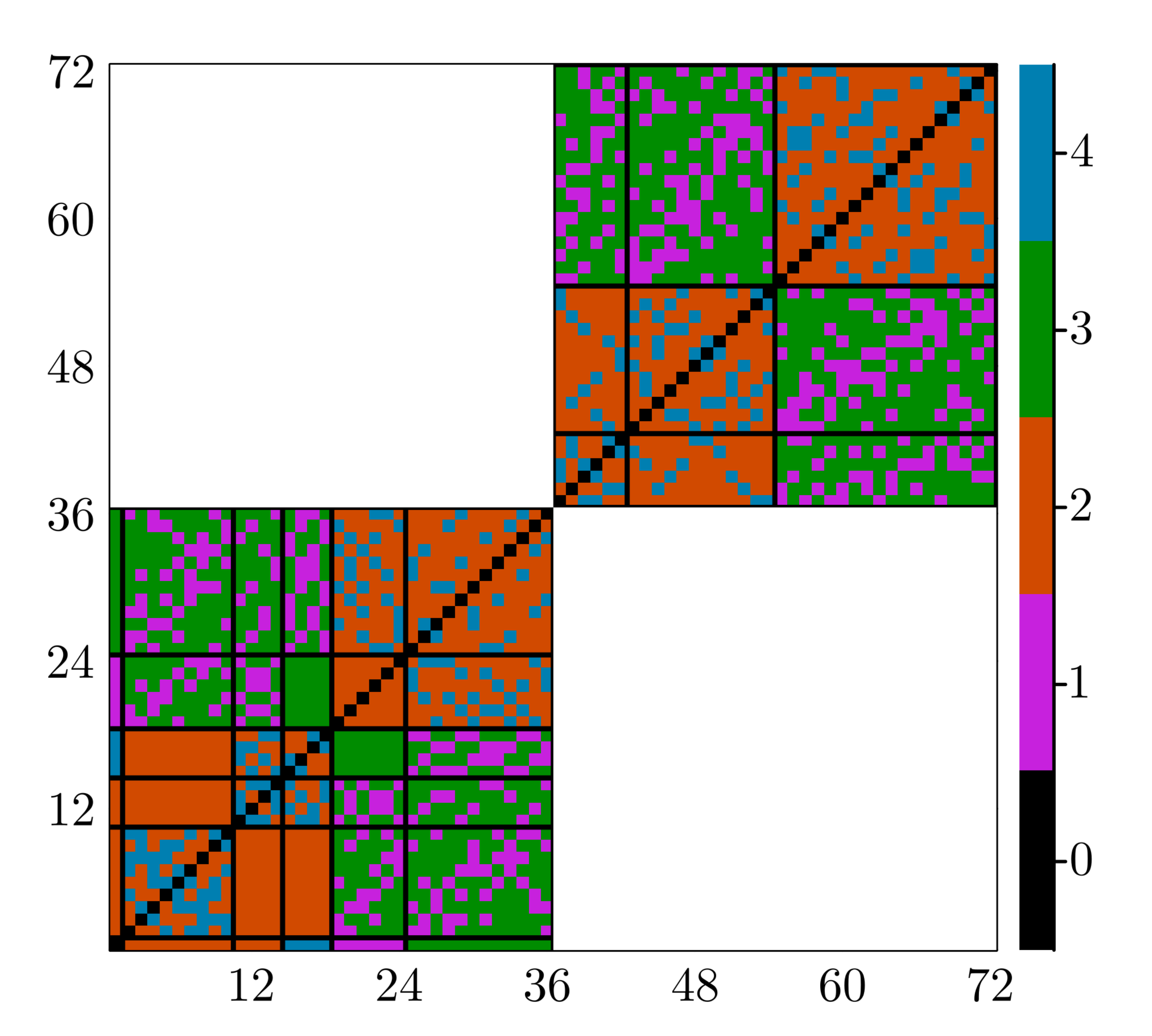}}\put(0.50299167,0.07264602){\color[rgb]{0,0,0}\makebox(0,0)[lt]{\lineheight{1.25}\smash{\begin{tabular}[t]{l}$\leftarrow \quad \mathcal C\{\id\}$\end{tabular}}}}\put(0.50299167,0.12589586){\color[rgb]{0,0,0}\makebox(0,0)[lt]{\lineheight{1.25}\smash{\begin{tabular}[t]{l}$\leftarrow \quad \mathcal C\{(cc)(\sigma\sigma)\}$\end{tabular}}}}\put(0.50299167,0.1956278){\color[rgb]{0,0,0}\makebox(0,0)[lt]{\lineheight{1.25}\smash{\begin{tabular}[t]{l}$\leftarrow \quad \mathcal C\{(ccc)(\sigma\sigma\sigma)\}$\end{tabular}}}}\put(0.50299167,0.23875616){\color[rgb]{0,0,0}\makebox(0,0)[lt]{\lineheight{1.25}\smash{\begin{tabular}[t]{l}$\leftarrow \quad \mathcal C\{(ccc)\}$\end{tabular}}}}\put(0.50299167,0.29174239){\color[rgb]{0,0,0}\makebox(0,0)[lt]{\lineheight{1.25}\smash{\begin{tabular}[t]{l}$\leftarrow \quad \mathcal C\{(c\sigma)(c\sigma)(c\sigma)\}$\end{tabular}}}}\put(0.50299167,0.38416028){\color[rgb]{0,0,0}\makebox(0,0)[lt]{\lineheight{1.25}\smash{\begin{tabular}[t]{l}$\leftarrow \quad \mathcal C\{(c\sigma c\sigma c\sigma)\}$\end{tabular}}}}\put(0.25215028,0.48397164){\color[rgb]{0,0,0}\makebox(0,0)[lt]{\lineheight{1.25}\smash{\begin{tabular}[t]{l}$\mathcal C\{(cc)\} \quad \rightarrow$\end{tabular}}}}\put(0.16414676,0.57638952){\color[rgb]{0,0,0}\makebox(0,0)[lt]{\lineheight{1.25}\smash{\begin{tabular}[t]{l}$\mathcal C\{(ccc)(\sigma\sigma)\} \quad \rightarrow$\end{tabular}}}}\put(0.14812865,0.73411606){\color[rgb]{0,0,0}\makebox(0,0)[lt]{\lineheight{1.25}\smash{\begin{tabular}[t]{l}$\mathcal C\{(c\sigma c\sigma)(c\sigma)\} \quad \rightarrow$\end{tabular}}}}\end{picture}\endgroup    \end{center}
  \vskip -0.5cm
  \caption{Each row and column corresponds to one of the $72$ automorphisms in
    $\text{Aut}[CC]$, grouped by conjugacy class. The color indicates the minimum
    number of edges connecting the nodes in $G$ associated with the row and column
    automorphisms; equivalently, it is the length of the minimum adjacency
    sequence connecting the two FETs of those automorphisms. There are two
    distinct clusters, grouped by their parity on the $S_3\times S_3$ subgroup;
    even-parity automorphisms are in the bottom left quadrant and odd-parity in
    the top right quadrant. White squares indicate that there is no possible
    adjacency sequence to connect those two FETs.}
  \label{fig:distances}
\end{figure}

We numerically compute all inequivalent $1$-component disorder models by
enumerating the possible isomorphism contributions (see
Appendix~\ref{app:automorphisms}), thus specifying the graph $G$.
Figure~\ref{fig:distances} shows the minimum graph distances on $G$ between all
FETs, as well as displaying the separation into exactly two connected
components.

We can explain these graph distances, starting from the simple case of
transitions from the trivial FET: We begin with the sequence of two adjacent
FETs $\{\mathbbm 1,\, A_1\}$. Using a $1$-component disorder model, if $A_1$ is
adjacent to the trivial FET then its automorphism $\varphi_1$ satisfies
$\varphi_1 = \tau_{10} \cdot \id = \tau_{10}$, where $\tau_{10} \in \mathcal
C\{(c\sigma)(c\sigma)(c\sigma)\}$. Any automorphism in $\mathcal
C\{(c\sigma)(c\sigma)(c\sigma)\}$ can be realized from the identity using a
$1$-component disorder model. Since these automorphisms exchange color and
flavor, we can interpret $\tau_{10}$ geometrically as a reflection of the magic
square (placed on a $2$-torus) along a mirror line parallel to the diagonal or
antidiagonal. Specifically, the mirror line intersects the three bosons listed
by the three $2$-cycles of the automorphism. $(ry)(gz)(bx)$, for example, has a
mirror line through anyons $\tsf{ry}$, $\tsf{gz}$, and $\tsf{bx}$. 

For a sequence $\{\mathbbm 1,\, A_1,\,A_2\}$, the associated automorphism for
$A_2$ must satisfy $\varphi_2 = \tau_{21}\varphi_1 = \tau_{21}\tau_{10}$, where
both $\tau_{21},\,\tau_{10} \in \mathcal C\{(c\sigma)(c\sigma)(c\sigma)\}$ are
such reflections. If $\tau_{21} = \tau_{10}$ we arrive back at $\id$. Otherwise,
there are two scenarios to consider: 
\begin{enumerate}[label=(\arabic*)]
  \item If the two mirror lines are perpendicular, then by the
    \emph{Compositions of Reflections over Intersecting Lines Theorem}, this
    enacts a rotation by $\pi$ about their intersection. This populates the
    $\mathcal C\{(cc)(\sigma\sigma)\}$ conjugacy class. There are $9$ such
    intersections (each entry of the magic square), agreeing with the class's
    number of elements in Table~\ref{tbl:conjugacyclasses}.
  \item If the two mirror lines are parallel, then by the \emph{Reflection in
    Parallel Lines Theorem}, this enacts a translation normal to the two lines.
    That is, $\varphi_2$ translates along either the diagonal or antidiagonal
    directions of the magic square, and thus belongs to $\mathcal
    C\{(ccc)(\sigma\sigma\sigma)\}$. There are two directions and two nontrivial
    and nonequivalent magnitudes of translation, forming the $4$ elements in
    this class. 
\end{enumerate}

There are two more conjugacy classes remaining in the even-parity component:
$\mathcal C\{(c\sigma c\sigma c\sigma)\}$ and $\mathcal C\{(ccc)\}$. We can
realize any automorphism in the former class, e.g. $(c_1\sigma_1 c_2\sigma_2 c_3
\sigma_3)$, with a sequence $\{\mathbbm 1,\,A_1,\,A_2,\,A_3\}$ by choosing
$\tau_{32} = (c_1\sigma_3)(c_2\sigma_2)(c_3\sigma_3)$ and $\tau_{21}\tau_{10} =
(c_1c_3)(\sigma_1\sigma_2)$ such that $\varphi_3 = \tau_{32}\tau_{21}\tau_{10}$.
It is not possible to realize $\mathcal C\{(ccc)\}$ with such a sequence because
this class has trivial $S_2$ components but an odd number of $\tau$ reflections
results in a net nontrivial reflection. Rather, using a sequence $\{\mathbbm
1,\,A_1,\,A_2,\,A_3,\,A_4\}$ we can compose a diagonal translation
$\tau_{43}\tau_{32}$ with an antidiagonal translation $\tau_{21}\tau_{10}$ such
that $\varphi_4 = \tau_{43}\tau_{32}\tau_{21}\tau_{10}$ translates along the
vertical or horizontal directions of the magic square, realizing any $\mathcal
C\{(ccc)\}$.\footnote{We can also show that these are the minimum graph
distances between each FET by considering the $\mathcal D^2$ of the conjugacy
classes. The reflections $\mathcal C\{(c\sigma)(c\sigma)(c\sigma)\}$ have
$\mathcal D^2=2$. $\mathcal C\{(ccc)(\sigma\sigma\sigma)\}$ and $\mathcal
C\{(cc)(\sigma\sigma)\}$ have $\mathcal D^2=4$, requiring two reflections to
populate a fusion group with the required number of localized anyons. Similarly,
$\mathcal C\{(c\sigma c\sigma c\sigma)\}$ has $\mathcal D^2=8$ and $\mathcal
C\{(ccc)\}$ has $\mathcal D^2=16$, requiring three and four reflections
respectively.}

For any two arbitrary FETs in the same component, $A$ and $B$, with associated
automorphisms $\varphi_A$ and $\varphi_B$, the minimum graph distance between
them can be found by identifying the minimum graph distance between the trivial
FET and the FET with automorphism $\tau_{BA}$. That is, let the minimum
adjacency sequence between $\mathbbm 1$ and the FET with automorphism
$\tau_{BA}$ be $\{\mathbbm 1,\,A_1,\ldots,\,A_{\tau_{BA}}\}$ containing $m+1$
FETs. Then by concatenating each of the $m$ $1$-component disorder models in
this adjacency sequence with the measurement sequence for $A$, we get $\{A, \,
A\cdot A_1,\,\ldots,\,A\cdot A_{\tau_{BA}} = B\}$ (with $\cdot$ used informally
here to denote the result of concatenating the two measurement sequences of
those FETs). There does not exist a shorter adjacency sequence between $A$ and
$B$, because if there did then we could perform the reverse process and
concatenate each of its FETs with a measurement sequence for $\varphi_A^{-1}$,
thereby realizing an adjacency sequence between $\mathbbm 1$ and $\tau_{BA}$
with less than $m+1$ FETs. For example, take $(rb)$ and $(rgb)(xy)$. Their
transition map is $\tau = (rgb)(xy)\cdot (rb)^{-1} = (gb)(xy)$, which is
connected to the trivial FET via a graph distance of $2$, and thus the FETs with
automorphisms $(rb)$ and $(rgb)(xy)$ are also connected via a graph distance of
$2$.

$1$-component disorder models therefore prompt a notion of connectivity between
FETs of the DA color code. We have seen that two FETs with different parity on
the $S_3\times S_3$ subgroup cannot be connected in this way, and the minimum
length adjacency sequence for two FETs with automorphisms $\varphi_A$ and
$\varphi_B$ is found by taking the minimum length sequence between the trivial
FET and the FET with automorphism $\tau_{BA}$.
These results will be used in Section~\ref{sec:mcomp} to justify conditions
relating the potential automorphisms that can be realized while preserving
reversible condensations despite missing measurements.

\subsection{\label{sec:logically}Logically-Connected FETs}
We have so far stated sufficient and necessary conditions on the theoretical
ability for disorder to generate different automorphisms and FETs. We now
consider the behavior of the systems evolving over multiple periods of these
measurement sequences, and whether there exists protected logical qubits that
remain unaffected, even near the critical point of $p\sim p_c$ where logical
measurements are expected. In a DA circuit, restricting gates to only those that
correspond to the automorphisms of logically-connected FETs helps ensure that
logical qubits remain protected even with increasing circuit depth.

In the long-time limit, the number of protected logical qubits supported on a
$2$-torus will always be even: if an operator $\bar\Os[\tsf c]_v$ is measured,
the commuting operator $\bar\Os[\tsf c]_h$ can also be measured under a
different disorder realization in subsequent periods. A code evolving under a
particular disorder model will therefore always have $0$, $2$, or $4$ qubits
measured out (or conversely, protected) in the limit of $t\rightarrow \infty$
periods. To quantify which competing automorphisms permit such protected qubits,
we introduce a stricter definition of connectedness:
\begin{definition}[Logically-Connected]
  Two FETs $A_0$ and $A_m$ are logically-connected if there exists an adjacency
  sequence of FETs $\{A_0,\,\ldots,\,A_m\}$ with a consistent
  nonzero-dimensional logical Hilbert subspace that remains protected in the
  limit of $t\rightarrow \infty$ periods, for any $p \in [0,1]$ in any of the
  $m$ sets of $1$-component disorder models between FETs $A_i$ and $A_{i+1}$.
\end{definition}
\noindent Here, ``consistent'' means that the same logical Hilbert subspace is
protected in all of the $1$-component disorder models; we interpret this as
there being a logical qubit that is unaffected by the disorder modifying $A_0$
into $A_m$. By definition, any $A_i$, $A_j$ in the sequence are also
logically-connected, and logically-connected FETs are necessarily also connected
FETs. 

We first consider adjacent FETs. Let $\varphi_A$ and $\varphi_B$ be the two
competing automorphisms. The nontrivial anyon that localizes at $\tau_{BA} \in
\mathcal C\{(c\sigma)(c\sigma)(c\sigma)\}$ is a fermion, which braids trivially
with itself and the vacuum. From Section~\ref{sec:competing}, if all anyons
$\tsf c$ that localize at $\tau_{BA}$ braid trivially, then each pair of
mutual-semions that are invariant under $\tau_{BA}$ and braid trivially with all
$\varphi_A^{-t}(\tsf c)$ for $t = 0,1,2,\ldots$ guarantees the existence of $2$
protected logical qubits as $t\rightarrow \infty$. 

This allows us to restrict which automorphisms can be logically-connected; to do
so, we first explain how automorphisms map the fermions of the color code. As
detailed in Appendix~\ref{app:fermions}, there are $6$ fermions, forming two
fermion groups $F$ and $F'$ with $-1$ mutual statistics between different
fermions within the same group and trivial mutual statistics otherwise. These
fermions are mapped by automorphisms according to the lemma (proof in
Appendix~\ref{app:fermions}):
\begin{lemma}
  For any fermion $\emph{\tsf{f}}$, if the automorphism $\varphi$ has even
  parity on the subgroup $S_3\times S_3$, then the fermion
  $\varphi(\emph{\tsf{f}})$ is in the same fermion group as $\emph{\tsf{f}}$. If
  the parity is odd, then $\varphi(\emph{\tsf{f}})$ is in the other fermion
  group.
  \label{lemma:fermions}
\end{lemma}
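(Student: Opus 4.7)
The plan is to define a $\mathbb{Z}_2$-valued homomorphism on $\text{Aut}[CC]$ that records whether an automorphism preserves or exchanges the two fermion groups, and then to identify this homomorphism with the parity on the $S_3 \times S_3$ subgroup by checking agreement on a generating set.

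First, I would argue that the partition of the six fermions into the mutual-semion triples $F$ and $F'$ is determined purely by braiding: each of $F$ and $F'$ is a maximal mutually-semionic subset (with cross-set braiding trivial), so the unordered pair $\{F, F'\}$ is canonically attached to the anyon theory. Any $\varphi \in \text{Aut}[CC]$ preserves braiding and hence either fixes both groups setwise or swaps them. Setting $\chi(\varphi) = +1$ in the former case and $\chi(\varphi) = -1$ in the latter gives a well-defined homomorphism $\chi : \text{Aut}[CC] \to \mathbb{Z}_2$, because composition of automorphisms corresponds to composition of the induced $\mathbb{Z}_2$-actions on $\{F, F'\}$. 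In particular, whether $\varphi(\tsf{f})$ lies in the same group as $\tsf{f}$ depends only on $\varphi$, which already handles the ``for any fermion $\tsf{f}$'' clause of the lemma.

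Next, I would compute $\chi$ on a generating set for $\text{Aut}[CC] = (S_3 \times S_3) \rtimes S_2$. Since all transpositions within each $S_3$ factor are conjugate and $\chi$ has abelian codomain (so $\chi$ is constant on conjugacy classes), it suffices to verify the value of $\chi$ on: (i) one color transposition, e.g.\ $(rg)$; (ii) one flavor transposition, e.g.\ $(xy)$; and (iii) the $S_2$ color-flavor-swap generator. For each case I would pick a concrete fermion, such as $\tsf{rx} \times \tsf{gy}$, compute its image under $\varphi$, and evaluate the mutual statistic with its image using the magic-square rule (same-group pairs braid to $-1$, cross-group to $+1$). I expect the two $S_3$ transpositions to yield $\chi = -1$ and the $S_2$ generator to yield $\chi = +1$. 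The parity map $\epsilon(\sigma_c, \sigma_f) = \text{sgn}(\sigma_c)\text{sgn}(\sigma_f)$ is invariant under the $S_2$-induced swap of the two $S_3$ factors and so extends to a homomorphism on all of $\text{Aut}[CC]$; since $\chi$ and $\epsilon$ are both homomorphisms and agree on generators, they agree globally, which is the statement of the lemma.

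The main obstacle I anticipate is the explicit verification for the $S_2$ color-flavor-swap generator. Unlike the row/column transpositions, this element mixes color and flavor labels of anyons, and it leaves several fermions trivially fixed, so to pin down $\chi(s)$ I must select a fermion whose $S_2$-image is genuinely distinct (for instance one built from off-diagonal bosons of the magic square) and carry out the corresponding braiding calculation carefully; once this single computation is done, the remainder is routine bookkeeping with the semidirect-product structure.
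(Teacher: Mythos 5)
Your proposal is correct and follows essentially the same route as the paper: both reduce the claim to generators (a single color or flavor transposition plus the $S_2$ color-flavor swap) and extend by multiplicativity, and your anticipated values $\chi((cc))=\chi((\sigma\sigma))=-1$ and $\chi=+1$ on the $S_2$ generator are exactly what the paper verifies. The only cosmetic difference is that the paper checks the generators geometrically as reflections of the fermion magic square---the $S_2$ swap is a horizontal mirror that sends rows to rows and columns to columns (hence preserves the groups), while the $2$-cycles are diagonal or antidiagonal mirrors that exchange rows with columns (hence swap the groups)---which renders the $S_2$ case you flag as the main obstacle immediate.
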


Now, if two fermions are in different fermion groups, there is no fermion that
braids trivially with both. Moreover, using the ``fermion magic square'' from
Appendix~\ref{app:fermions}, a given fermion $\tsf f \in F$ only braids
trivially with the bosons in its row (and the vacuum). For example,
$\tsf{ry}\times\tsf{bx}\times\tsf{gz}$ only braids trivially with the
$\tsf{ry}$, $\tsf{bx}$, and $\tsf{gz}$ bosons. Similarly, $\tsf f' \in F'$ only
braids trivially with the bosons in its column. Therefore, the sole boson that
braids trivially with both $\tsf f$ and $\tsf f'$ is the boson at the
intersection of the row and column. By Lemma~\ref{lemma:fermions}, for any
fermion $\tsf f$ there thus exists one nontrivial anyon, not a pair of
mutual-semions, that braids trivially with both $\tsf f$ and $\varphi^{-1}(\tsf
f)$ if $\varphi$ has odd-parity on $S_3\times S_3$ (noting that the parity of
$\varphi$ and $\varphi^{-1}$ are the same). Hence, two FETs with automorphisms
of odd-parity on $S_3\times S_3$ cannot be logically-connected. Moreover, since
FETs in the odd-parity component are connected only to other FETs in the
odd-parity component, this means that they are logically-connected to no FET.
Any $1$-component disorder model involving an odd-parity FET and tuned near the
critical point will necessarily measure out all $4$ logical qubits given enough
time.

We now consider FETs connected via a sequence of adjacent FETs $\{A_0, \, A_1,
\, \ldots, \, A_m\}$ with automorphisms $\varphi_0, \varphi_1, \ldots,
\varphi_m$. If there exists a common protected logical subspace in the
$1$-component disorder models between each pair of FETs $A_i$ and $A_{i+1}$,
then the localized fermion in each case must be from the same fermion group. We
note the lemma (proof in Appendix~\ref{app:fermions}):

\begin{lemma}
  If two reflections $\tau_1, \tau_2 \in \mathcal
  C\{(c\sigma)(c\sigma)(c\sigma)\}$ are about parallel mirror lines of the magic
  square, then their localized anyons are fermions in the same fermion group.
  Otherwise, they are in different fermion groups. 
  \label{lemma:reflections}
\end{lemma}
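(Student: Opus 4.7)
The plan is to reduce the statement to a direct computation of the mutual braiding phase of the two localized fermions $\tsf f_1,\tsf f_2$ associated to $\tau_1,\tau_2$, and then to read off their fermion-group membership from the sign. First, I would view the magic square of Eq.~\eqref{eq:magicsq} as $\mathbb Z_3\times \mathbb Z_3$, so that the six reflections in $\mathcal C\{(c\sigma)(c\sigma)(c\sigma)\}$ correspond bijectively to the six ``mirror lines'' of the square---three diagonals and three antidiagonals, which are the two parallel classes. By consequence~(4) following the separation condition, the fermion localizing at a $\tau=(c_1\sigma_1)(c_2\sigma_2)(c_3\sigma_3)$ twist is $\tsf f_\tau = \tsf c_1\sigma_1\times \tsf c_2\sigma_2\times \tsf c_3\sigma_3$, i.e., the fusion of the three bosons fixed by $\tau$, which are precisely the bosons on $\tau$'s mirror line.

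The braiding computation is then a pair count. Using Abelian factorization, $M(\tsf f_1,\tsf f_2)=\prod_{(p,q)\in L_1\times L_2}M(p,q)$, where the magic-square rule gives $M(p,q)=-1$ exactly when $p$ and $q$ share neither row (color) nor column (flavor). Since each line contains exactly one boson per row and per column, inclusion--exclusion gives $3+3-|L_1\cap L_2|$ pairs that \emph{do} share a row or column, leaving $n = 3+|L_1\cap L_2|$ pairs contributing $-1$. Hence $M(\tsf f_1,\tsf f_2) = -(-1)^{|L_1\cap L_2|}$.

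What remains is to evaluate $|L_1\cap L_2|$, which is pure $\mathbb Z_3$ line-intersection arithmetic: two distinct parallel lines on $\mathbb Z_3\times\mathbb Z_3$ are disjoint, so $|L_1\cap L_2|=0$ and $M=-1$, making $\tsf f_1$ and $\tsf f_2$ mutual-semions and hence in the same fermion group; two lines in different parallel classes meet in exactly one point (the unique solution of the resulting linear equation over the field $\mathbb Z_3$), so $|L_1\cap L_2|=1$ and $M=+1$, placing $\tsf f_1$ and $\tsf f_2$ in different groups, using the defining mutual-statistics structure of $F,F'$ from Appendix~\ref{app:fermions}.

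The argument is essentially combinatorial; the main pitfall I would guard against is double counting in the inclusion--exclusion step, which is handled by noticing that the correction $-|L_1\cap L_2|$ precisely removes pairs with $p=q$. As an independent consistency check via Lemma~\ref{lemma:fermions}: within a parallel class, distinct diagonal (resp.\ antidiagonal) reflections are conjugate via color- or flavor-$3$-cycles (even parity on $S_3\times S_3$, hence fermion-group-preserving), whereas a single color or flavor transposition such as $(rg)$ (odd parity) carries a diagonal line to an antidiagonal line, which by Lemma~\ref{lemma:fermions} must flip the fermion group---independently confirming the result.
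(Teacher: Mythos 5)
Your proof is correct, but it takes a genuinely different route from the paper's. The paper argues structurally: the three bosons fixed by a reflection form a diagonal (resp.\ antidiagonal) transversal of the magic square, which is precisely a row (resp.\ column) of the fermion magic square in Eq.~\eqref{eq:fermionsquare}, so the localized fermion is manifestly a row-product in $F$ or a column-product in $F'$; parallel mirror lines have disjoint boson sets and hence are both rows or both columns (a row and a column always meet), while perpendicular ones share exactly one boson and hence are one of each. You instead compute the mutual braiding phase of the two localized fermions by Abelian bilinearity and a pair count over $L_1\times L_2$, arriving at $M=-(-1)^{|L_1\cap L_2|}$, and infer the group from the defining mutual statistics of $F$ and $F'$; your inclusion--exclusion count and the $\mathbb Z_3$ intersection arithmetic both check out. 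The paper's version is shorter once the fermion magic square is in hand and tells you \emph{which} group each fermion lies in, not only whether the two agree; yours needs only the statistics-based characterization of the two groups and is more mechanical, hence more portable to other Abelian TOs, and your closing conjugation argument via Lemma~\ref{lemma:fermions} is essentially a third valid proof. One small caveat: your criterion ``$M=-1$ iff same group'' holds only for \emph{distinct} fermions (a fermion braids trivially with itself yet is in its own group), so the degenerate case $\tau_1=\tau_2$ (for which your formula gives $M=+1$) must be set aside as trivially ``same group,'' which the paper does explicitly and you do only implicitly by restricting to distinct parallel lines.
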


Therefore, for this condition to hold, each $\tau_{(i+1)i}$ must be a reflection
about a mirror line parallel to all other $\tau_{(j+1)j}$ in the adjacency
sequence, with $i,j = 0,\ldots,m-1$. By the \emph{Reflection in Parallel Lines
Theorem}, the only possible conjugacy classes created from an even number of
these reflections are translations along a diagonal of the magic square, that is
$\mathcal C\{\id\}$ or $\mathcal C\{(ccc)(\sigma\sigma\sigma)\}$. An odd number
of reflections results in another reflection about a parallel mirror line.
Therefore, assuming that $A_0 \neq A_m$, a necessary condition for $A_0$ and
$A_m$ to be logically-connected is that
\begin{equation}
  \tau_{m0} \in \mathcal C\{(ccc)(\sigma\sigma\sigma)\}
  \label{eq:m_even}
\end{equation}
for even $m$, or 
\begin{equation}
  \tau_{m0} \in \mathcal C\{(c\sigma)(c\sigma)(c\sigma)\}
  \label{eq:m_odd}
\end{equation}
for odd $m$. Furthermore, we know from Section~\ref{sec:connected} that if
$\tau_{BA} \in \mathcal C\{(c\sigma)(c\sigma)(c\sigma)\}$ then FETs $A$ and $B$
are adjacent, and if $\tau_{BA} \in \mathcal C\{(ccc)(\sigma\sigma\sigma)\}$
then FETs $A$ and $B$ are connected using two $1$-component disorder models.
Moreover, in a valid FET with reversible condensations we conjecture that there
is no other mechanism for affecting the logical subspace beyond the measured
fermions discussed here, and so these conditions should guarantee the existence
of a consistent protected non-zero dimensional logical subspace. Assuming this
conjecture to be true, any two arbitrary FETs $A$ and $B$ are
logically-connected iff their automorphisms have even parity on $S_3\times S_3$,
and $\tau_{BA}$ satisfies either Eq.~\eqref{eq:m_even} or Eq.~\eqref{eq:m_odd}. 

For models where there is a protected logical subspace, these results also
enable a prescription to identify representatives of its protected logical
operators: for $S_3\times S_3$ even-parity automorphisms,
$\varphi^{-t}(\tsf{f})$ are guaranteed to be in the same fermion group for all
integers $t\geq 0$ by Lemma~\ref{lemma:fermions}. Therefore, the protected
logical operators are constructed out of fermions from the other group, which
are guaranteed to braid trivially with all measured fermions. There are thus two
possible candidates for logical subspaces that are protected in a $1$-component
disorder model: if the localized fermion belongs to the $F$ fermion group, then
a representative logical algebra is given by Table~\ref{tbl:F_logicals}; if it
belongs to the $F'$ fermion group, then we instead use
Table~\ref{tbl:Fbar_logicals}. Because the anyons that construct these protected
operators are all invariant under the transition maps between the
logically-connected FETs, the time-evolution of an observable from this
protected algebra at point $p\in[0,1]$ in any logically-connected $1$-component
disorder model is indistinguishable from any other point $\tilde p \in [0,1]$.
That is, if two FETs are logically-connected, their automorphisms have the same
action on a protected logical qubit.

\begin{table}
  \vskip 0.4mm
  \caption{\label{tbl:F_logicals}Protected logical algebra for a localized
  $F$-fermion.}
  \begin{ruledtabular}
    \begin{tabular}{ccc}
      Operator & Anyon Representation & Equivalent Logical \\ 
      \hline \\[-1mm] & & \\[-5mm]
      $\widetilde{\X}_1$ & $\bar\Os[\tsf{rx}\times\tsf{bz}]_v$ &
      $\bar\X_1\bar\Z_4$ \\ 
      $\widetilde{\Z}_1$ & $\bar\Os[\tsf{rz}\times\tsf{by}]_h$ &
      $\bar\Z_1\bar\Z_2\bar\X_3$ \\ 
      $\widetilde{\X}_2$ & $\bar\Os[\tsf{rx}\times\tsf{bz}]_h$ &
      $\bar\Z_1\bar\X_4$ \\ 
      $\widetilde{\Z}_2$ & $\bar\Os[\tsf{rz}\times\tsf{by}]_v$ &
      $\bar\X_2\bar\Z_3\bar\Z_4$
    \end{tabular}
  \end{ruledtabular}
\end{table}

\begin{table}
  \caption{\label{tbl:Fbar_logicals}Protected logical algebra for a localized
  $F'$-fermion.}
  \begin{ruledtabular}
    \begin{tabular}{ccc}
      Operator & Anyon Representation & Equivalent Logical \\ 
      \hline \\[-1mm] & & \\[-5mm]
      $\widetilde{\X}_1$ & $\bar\Os[\tsf{rz}\times\tsf{bx}]_v$ &
      $\bar\X_2\bar\Z_3$ \\ 
      $\widetilde{\Z}_1$ & $\bar\Os[\tsf{rx}\times\tsf{by}]_h$ &
      $\bar\Z_1\bar\X_3\bar\X_4$ \\ 
      $\widetilde{\X}_2$ & $\bar\Os[\tsf{rz}\times\tsf{bx}]_h$ &
      $\bar\Z_2\bar\X_3$ \\ 
      $\widetilde{\Z}_2$ & $\bar\Os[\tsf{rx}\times\tsf{by}]_v$ &
      $\bar\X_1\bar\X_2\bar\Z_4$
    \end{tabular}
  \end{ruledtabular}
\end{table}

\subsection{\label{sec:mcomp}\texorpdfstring{$m$}{m}-Component Disorder Models}
We have so far considered only pairs of FETs that arise from $1$-component
disorder models. That is, only one stage of the measurement sequence was
permitted to have missing measurements. Although this revealed several important
insights into the structure of automorphisms in the DA color code, it is
ultimately a largely unphysical assumption. We now generalize our results to
so-called ``$m$-component disorder models'' where we introduce $m$ parameters
$p_1,\ldots,p_m \in [0,1]$ that determine the probability of independently
measuring links in $m$ components of the measurement sequence. This allows for
more realistic models where more---or all---of the measurement may be randomly
included or excluded. For example, a $2$-component disorder model with two
independent disordered stages and probabilities $p_1$ and $p_2$ is
\begin{eqnarray}
  \widetilde{CC} && \; \rightarrow \;
  \magic{1}{}{}{}{}{}{2}{}{} \; \rightarrow \;
  \magic{}{}{}{}{2}{1}{}{}{} \; \rightarrow \;
  \underbrace{\magic{}{}{}{}{}{}{}{1}{}}_{\displaystyle p_1} \nonumber \\ 
  && \rightarrow \;
  \magic{1}{}{}{}{}{}{2}{}{} \; \rightarrow  \;
  \underbrace{\magic{}{}{}{}{2}{}{}{}{}}_{\displaystyle p_2} \; \rightarrow \;
  \widetilde{CC}.
  \label{eq:ex1}
\end{eqnarray}
We can associate such a model with a $2$-dimensional parameter space $[0,1]^{2}$
indexed by vectors $\mathbf p = (p_1, p_2)$. The four corners of this parameter
space have measurement sequences that enact four different automorphisms:
\begin{equation}
  \begin{tabular}{l||c|c}
    \diagbox{$p_2$}{$p_1$} & $0$ & $1$ \\ 
    \hline \hline
    $0$ & $\id$ & $(rx)(gy)(bz)$ \\ 
    \hline
    $1$ & $(rz)(gx)(by)$ & $(rgb)(xzy)$
  \end{tabular}
  \label{eq:ex1_automorphisms}
\end{equation}
Additional examples of $2$-component disorder models are given in
Appendix~\ref{app:examples}. 

What $m$-component disorder models do not contain any measurement sequences with
irreversible condensations? Or, conversely, what models must contain
irreversible condensations? We saw in Section~\ref{sec:connected} that two FETs
with automorphisms of different parity on $S_3\times S_3$ cannot be connected.
If an $m$-component disorder model realizes two such automorphisms, then corners
of the parameter space must host a measurement sequence with irreversible
condensations (otherwise, we could could construct several $1$-component
disorder models using pairs of FETs in adjacent corners that connect the two
different-parity FETs). For example, consider the following:
\begin{eqnarray}
  \widetilde{CC} \; && \; \rightarrow 
  \magic{1}{}{}{}{}{}{2}{}{} \; \rightarrow \;
  \magic{}{}{}{}{2}{}{}{}{} \;\rightarrow \;
  \underbrace{\magic{}{}{}{}{1}{}{}{}{}}_{\displaystyle p_1} \nonumber \\ 
  && \; \rightarrow \; \underbrace{\magic{2}{}{}{}{}{}{}{}{}}_{\displaystyle
  p_2} \; \rightarrow \; \widetilde{CC}.
  \label{eq:diffparity}
\end{eqnarray}
The corners of the $2$-dimensional parameter space are furnished by
\begin{equation}
  \begin{tabular}{l||c|c}
    \diagbox{$p_2$}{$p_1$} & $0$ & $1$ \\ 
    \hline \hline
    $0$ & $(rz)(gx)(by)$ & IrrP \\ 
    \hline
    $1$ & IrrP & $(rybz)(gx)$
  \end{tabular}
\end{equation}
containing two FETs with automorphisms of different parity on $S_3\times S_3$,
and two phases with interlayer irreversible condensations indicated by ``IrrP''.
Such models, when tuned near the critical lines of $p_i \sim p_c$, will feature
semi-punctures that irreversibly remove logical information from the system. An
$m$-component disorder model that realizes automorphisms with differing
$S_3\times S_3$ parities will therefore not having protected logical subspaces.
If we take every stage in a measurement sequence to be disordered, then this
tells us that if one of the automorphisms has odd-parity on $S_3\times S_3$,
there must be irreversible condensations somewhere in the parameter space. This
is because if every stage is disordered, then the trivial FET with
$\widetilde{CC}\rightarrow\widetilde{CC}$ exists in the parameter space, and
$\id$ has even parity.

We can also consider a stronger criterion: what disorder models are possible
that retain a nonzero-dimensional protected logical subspace? From
Section~\ref{sec:logically}, two FETs that are logically-connected must satisfy
Eq.~\eqref{eq:m_even} or Eq.~\eqref{eq:m_odd}. If these FETs populate corners of
the $m$-dimensional parameter space for an $m$-component disorder model, this
now enables not only $\tau_{BA} \in \mathcal C\{(c\sigma)(c\sigma)(c\sigma)\}$
domain walls, but by Eq.~\eqref{eq:m_even} also $\tau_{CA} \in \mathcal
C\{(ccc)(\sigma\sigma\sigma)\}$ when $p_i\notin\{0,1\}$ for at least two
coordinates. These $\tau_{CA}$ boundaries have $\log_2 \mathcal D^2=2$; the
nontrivial anyons that localize are the three fermions from one of the $F$ or
$F'$ groups, and the protected logical algebra is again given by either
Table~\ref{tbl:F_logicals} or \ref{tbl:Fbar_logicals}. If there does not exist a
consistent logically-protected subspace, then there must exist $\tau$ in other
conjugacy classes with $\log_2 \mathcal D^2 > 2$; anyons beyond the three
fermions in $F$ or $F'$ are measured and $\text{IMS}=0$. This is consistent with
$\mathcal C\{(c\sigma)(c\sigma)(c\sigma)\}$ and $\mathcal
C\{(ccc)(\sigma\sigma\sigma)\}$ being the sole (nontrivial) conjugacy classes
with automorphisms that have invariant mutual-semion pairs. In order for an
$m$-component disorder model to have a consistent logically-protected subspace
at all values of its $m$-dimensional parameter space, we require that:
\begin{enumerate}[label=(\arabic*)]
  \item all corners of the hypercube in parameter space are FETs consisting of
    reversible sequences of condensations;
  \item the enacted automorphisms have even parity on the subgroup $S_3\times
    S_3$; and
  \item all pairs of FETs satisfy Eq.~\eqref{eq:m_even} if their locations
    differ by an even Manhattan distance,\footnote{The Manhattan distance is the
    sum of the component-wise (absolute) differences between two $\mathbf p$
  vectors.} or Eq.~\eqref{eq:m_odd} if an odd Manhattan distance.
\end{enumerate}

An immediate---and intuitive---consequence of these criteria is that if every
anyon condensation in a measurement sequence is disordered, then if there exists
a protected logical subspace, those logical operators do not evolve under any of
the involved automorphisms. This is because the trivial automorphism $\id$ is in
the parameter space, and all automorphisms act equivalently on the protected
subspace. It is thus not possible to construct a DA color code that implements
an automorphism yielding a nontrivial gate while being completely protected
against missing measurements. As we show in Appendix~\ref{app:critical},
however, in the thermodynamic limit logical information is  measured only once
$p$ approaches a critical point, dependent on the particular disorder model. For
$1$-component models, we get $p_c\sim 0.346\ldots$ consistent with the
universality class of bond percolation on a triangular lattice. This allows for
some flexibility in the construction of disorder-resistant DA color codes.

Analyzing missing measurements via competing automorphisms reveals key structure
about the effects of disorder in the DA color code. Given a sequence of anyon
condensations (or link measurements) for an FET, by identifying the
automorphisms created when stages are omitted, one can immediately read off
whether logical qubits are measured or protected due to the disorder. We have
shown necessary conditions for disorder models to support a $2$-qubit logical
subspace that is entirely immune to the disorder. In these cases, at any point
in the $m$-dimensional parameter space, the time-evolution of protected logical
operators will be independent of the disorder realizations. These results
therefore may help determine---or rule-out---feasible implementations and
measurement sequences for disorder-resistant DA color codes.

\section{\label{sec:conclusion}Conclusions and Outlook}

Spatiotemporally heterogeneous domain walls naturally introduce disorder to
measurement-induced Floquet-enriched topological orders. We analyzed the
evolution and purification dynamics of the degenerate codespace amid this
heterogeneity---or ``competing automorphisms''. We showed that this behavior is
agnostic to microscopic details and is directly determined by TQFT features:
anyon braiding and fusion properties (the modular data of the Abelian TO) and
the properties of the involved automorphisms. 

Interpreting these systems as topological stabilizer codes, the number of
independent nontrivial anyons that localize at the domain wall boundaries
determines the number of nontrivial logical operators that may undergo
measurement during one period of competition. For a $\mathbb{Z}_2$-based TO,
this equals $\log_2\mathcal D^2$, where $\mathcal D$ is the quantum dimension of
the twist associated to the transition map between the neighboring
automorphisms. In these systems, the number of mutual-semions that are invariant
under this transition map, IMS, indicates the number of logical qubits that are
protected from any such logical measurement over one period. For a code with $k$
logical qubits, these properties satisfy $\log_2\mathcal D^2 + \text{IMS} \leq
k$. For a $CC$ topological order, Table~\ref{tbl:conjugacyclasses} shows these
values for all conjugacy classes of its automorphism group $\text{Aut}[CC]$.

This new understanding of disordered FETs enables us to readily discern the
effects that missing-measurement noise models or perturbations to measurement
sequences have on dynamical codes such as the DA color code
\cite{davydovaQuantum2024}. The regions subjected to missing measurements in a
given random realization are precisely the heterogeneous temporal domain walls
in an FET with competing automorphisms. $\log_2\mathcal D^2$ and IMS are thus
key metrics that characterize the ability for a given noise model to result in
logical measurements and the number of  logical qubits that remain protected.

We first established results for models with only one measurement stage
disordered, before generalizing to multiple or $m$-component disorder models. We
argued that it is not possible to construct a measurement sequence where every
stage is disordered while simultaneously maintaining a consistent protected
logical subspace and nontrivial effect of automorphisms on logical operators in
this subspace. Implementations of the DA color code must therefore take into
account the effects of missing measurements. We show, nevertheless, in
Appendix~\ref{app:critical}, that even where a logical subspace is not immune
across the entire parameter space, if the noise level is below a critical value
then information can still be protected. In this scenario, it would be
beneficial for such a code to not result in irreversible condensations in its
measurement sequence. We showed that this is only possible throughout the entire
parameter space if the enacted automorphisms have even parity on the subgroup
$S_3\times S_3$. In practice, the occurrence of irreversible condensations may
be detected, and the realized measurement sequences can be post-selected by
discarding those that result in a loss of encoded information.

Our perspective of competing automorphisms has allowed us to chart the topology
of the parameter space of DA color code FETs. For example, two FETs, $A$ and $B$
with automorphisms $\varphi_A$ and $\varphi_B$, can compete using a disorder
model with one random measurement stage (a ``$1$-component disorder model'') if
and only if the transition map $\varphi_B\varphi_A^{-1}$ is in the conjugacy
class $\mathcal C\{(c\sigma)(c\sigma)(c\sigma)\}$ of automorphisms that reflect
the anyons of the color code magic square about a diagonal mirror line. We
presented additional conditions that restrict the ability for two competing
automorphisms to support a consistent nonzero-dimensional logical subspace that
remains unmeasured over multiple periods. This allows one to better understand
the behavior of a given measurement sequence, and how it may be modified or
corrupted by possible missing measurements.

Important open questions remain about disorder in the DA color code, such as the
interplay between competing automorphisms and open boundaries or lattice defects
\cite{ellisonFloquet2023, kesselringBoundaries2018}. It remains to be seen how
the effectiveness of the code's error-correction capabilities (such as the
existence of a threshold, fault-tolerance, or decoders) is affected by competing
automorphisms. Future work should also consider the effect of more general
disorder models, such as with weak measurements \cite{zhuQubit2023,
zhuNishimoris2023}, interspersed random unitaries
\cite{lavasaniMeasurementinduced2021}, coherent errors
\cite{vennCoherentError2023}, or single-qubit measurements
\cite{gullansDynamical2020, botzungRobustness2023, vuStable2024}. We expect our
perspective of competing automorphisms to be potentially useful in any disorder
model that realizes spatiotemporally heterogeneous domain walls.

Moving beyond the DA color code, this work is a first step towards understanding
general dynamical TOs that can support multiple automorphisms. Although we
focused on FETs on a $2$-torus, our results are readily generalizable to other
manifolds by considering their noncontractible cycles. Future works could
investigate other TOs, including microscopic models for those with Abelian
anyons beyond mutual-semions, and general features of competing automorphisms in
non-Abelian anyon theories \cite{ellisonPauli2023, zhaoNonabelian2024} or
fracton Floquet phases \cite{sullivanFloquet2023, zhangXcube2023}. It would be
also interesting to study what competing automorphisms can reveal about FETs
evolving under unitary dynamics instead of
measurements~\cite{potterDynamically2017, poRadical2017, sullivanFloquet2023}. 

\acknowledgments{This research was supported by the Gates Cambridge Trust and by
EPSRC grant EP/V062654/1.}

\appendix

\counterwithin{table}{section}
\counterwithin{figure}{section}

\section{\label{app:additional}Additional Background Material}
\subsection{\label{app:fermions}Color Code Fermions}

In the color code, there are $6$ fermions. These can be written as
a unique fusion product of three mutual-semions:
\begin{equation}
  \begin{aligned}
    \tsf{ry}\times\tsf{bx}\times\tsf{gz} \qquad & \qquad
    \tsf{ry}\times\tsf{bz}\times\tsf{gx} \\
    \tsf{bz}\times\tsf{gy}\times\tsf{rx} \qquad & \qquad
    \tsf{bx}\times\tsf{gy}\times\tsf{rz} \\ 
    \underbrace{\tsf{gx}\times\tsf{rz}\times\tsf{by}}_{\displaystyle F} \qquad &
    \qquad \underbrace{\tsf{gz}\times\tsf{rx}\times\tsf{by}}_{\displaystyle F'}
  \end{aligned}
\end{equation}
\noindent The fermions form two groups, $F$ and $F'$. Fermions within $F$ are
mutual-semions with fermions from $F$ and braid trivially with those from $F'$,
and vice versa. These products can also be summarized in the fermion magic
square \cite{davydovaQuantum2024}:
\begin{equation}
  F\Bigg\{
    \underbrace{\magicsf{ry}{bx}{gz}{bz}{gy}{rx}{gx}{rz}{by}}_{\displaystyle F'}
    \label{eq:fermionsquare}
\end{equation}
such that the product of the three anyons in a row or column give fermions in
$F$ and $F'$ respectively. In contract to the magic square [cf.
Eq.~\eqref{eq:magicsq}] where anyons in the same row or column braid trivially,
anyons in the same row or column of the fermion magic square are mutual-semions.
Anyons not sharing a row or column braid trivially. These fermions can also be
(non-uniquely) formed from the fusion of just two mutual-semions:
\begin{equation}
  \begin{aligned}
    \tsf{gx}\times\tsf{bz} \equiv \tsf{gy}\times\tsf{rz} \equiv
    \tsf{rx}\times\tsf{by} \\
    \tsf{bx}\times\tsf{rz} \equiv \tsf{by}\times\tsf{gz} \equiv 
    \tsf{gx}\times\tsf{ry} \\
    \underbrace{\tsf{rx}\times\tsf{gz} \equiv \tsf{ry}\times\tsf{bz} \equiv 
    \tsf{bx}\times\tsf{gy}}_{\displaystyle F} \\[12pt]
\tsf{bx}\times\tsf{gz} \equiv \tsf{by}\times\tsf{rz} \equiv 
    \tsf{rx}\times\tsf{gy} \\ 
    \tsf{rx}\times\tsf{bz} \equiv \tsf{ry}\times\tsf{gz} \equiv 
    \tsf{gx}\times\tsf{by} \\ 
    \underbrace{\tsf{gx}\times\tsf{rz} \equiv \tsf{gy}\times\tsf{bz} \equiv 
    \tsf{bx}\times\tsf{ry}}_{\displaystyle F'}
  \end{aligned}
\end{equation}

In the main text, we stated a lemma about whether the groups $F$ and $F'$ are
closed under automorphisms. We provide a proof of this lemma now:
\begin{proof}[Proof of Lemma~\ref{lemma:fermions}]
  The automorphism that is trivial on $S_3\times S_3$ and nontrivial on $S_2$ is
  the color-flavor reflection $(rx)(gy)(bz)$. On the fermion magic square
  [Eq.~\eqref{eq:fermionsquare}], this automorphism acts as a reflection about a
  horizontal mirror line through $\tsf{rx}$, $\tsf{gy}$, and $\tsf{bz}$; this
  reflection does not change the fermion group. It is therefore sufficient to
  examine the action only on the $S_3\times S_3$ component of an automorphism.
  We first consider the simplest such nontrivial automorphisms, the $2$-cycles
  $(cc)$ or $(\sigma\sigma)$. These act as a reflection on the fermion magic
  square about diagonal or antidiagonal mirror lines (considering the square on
  a $2$-torus). A fermion in $F$ thus maps to one in $F'$ and vice versa as the
  rows and columns are interchanged. Any general even-parity (odd-parity)
  permutation is an even (odd) product of $2$-cycles, swapping $F$ and $F'$ an
  even (odd) number of times. Therefore, the even-parity $S_3\times S_3$
  automorphisms do not swap $F$ and $F'$, while odd-parity automorphisms do.
\end{proof}

We also stated a lemma on the localized anyons for parallel reflections in
$\mathcal C\{(c\sigma)(c\sigma)(c\sigma)\}$. We provide the proof here: 
\begin{proof}[Proof of Lemma~\ref{lemma:reflections}]
  If two automorphisms $\tau_1, \tau_2 \in \mathcal
  C\{(c\sigma)(c\sigma)(c\sigma)\}$ are equal, then their localized anyons are
  the same. If they are unequal, consider the case where we can interpret them
  geometrically as reflections about inequivalent but parallel lines of the
  magic square (on a $2$-torus). Their mirror lines do not intersect, and their
  constituent $2$-cycles $(c\sigma)$ are all different (since the mirror line
  for $(c_1\sigma_1)(c_2\sigma_2)(c_3\sigma_3)$ is precisely the line through
  anyons $\tsf{c}_1\sigma_1$, $\tsf{c}_2\sigma_2$ and $\tsf{c}_3\sigma_3$).
  Moreover, the product of these three
  anyons---$\tsf{c}_1\sigma_1\times\tsf{c}_2\sigma_2\times\tsf{c}_3\sigma_3$---is
  precisely the fermion that localizes at the twist for the automorphism.
  Therefore, the two fermions that localize at $\tau_1$ and $\tau_2$
  respectively occupy distinct rows (or columns) of the fermion magic square,
  and are thus in the same fermion group. On the other hand, if the
  automorphisms have perpendicular mirror lines, then their $c\sigma$ labels
  intersect at some point on the magic square such that the localized fermions
  have one shared label, and thus are in different fermion groups. 
\end{proof}

\subsection{\label{app:theory}Group Theory}
\subsubsection{\label{app:grouptheory}Group Theory Essentials}
In this section we briefly outline some ideas from group theory that are
relevant for our discussions of the automorphism group in
Appendix~\ref{app:autotheory}. 

We first outline general group properties, starting with group products. For a
group $G$, a subgroup $N \triangleleft G$ is normal if and only if $gng^{-1} \in
N$ for all $g\in G$ and $n\in N$. That is, elements of $N$ are invariant under
conjugation by all elements of $G$, or equivalently the left and right cosets
$gN$ and $Ng$ are equal for all $g \in G$. When $G$ is a semidirect product,
written as $G = N \rtimes H$, and where $N$ is normal in $G$ but $H$ may not be,
then for every $g \in G$, there are unique $n \in N$ and $h \in H$ such that $g
= nh$. 

For a group $G$ and two elements $a,b \in G$, if $b = gag^{-1}$ for some $g \in
G$, then $a$ and $b$ are conjugate. Conjugacy is an equivalence relation that
partitions $G$ into conjugacy classes, denoted as
\begin{equation} \mathcal C_a = \{ gag^{-1} \, | g \in G \} \end{equation}
for some representative $a \in G$. All elements belonging to the same conjugacy
class have the same order, the minimal $k$ such that $a^k = \id$.

We now consider specifically the permutation group, $S_n$, which is the group of
re-orderings of a set of $n$ elements. We write these permutations using cycle
notation. For example, labelling the $n$ elements as $a,b,\ldots,n$, the
permutation $(ade)(fg)$ indicates the map $a\mapsto d \mapsto e \mapsto a$,
$f\mapsto g\mapsto f$, with the other elements unchanged. A cycle is a closed
mapping, such as $(ade)$ or $(fg)$. The order (or length) of a cycle can be read
off as the number of elements listed; a cycle of length $k$ is called a
$k$-cycle. $(ade)$ is a $3$-cycle, for example. A cycle is defined as having
even (odd) parity if it can be written as an even (odd) number of $2$-cycles.
For example, $(ade) = (ad)(de)$ and therefore is even parity. Equivalently, a
$k$-cycle is even (odd) if $k-1$ is even (odd). A permutation can be written in
multiple ways: $(ade)(fg)$ is the same as $(gf)(dea)$. However, the number and
lengths of disjoint cycles forming a permutation are a fixed property of that
permutation \cite{humphreysCourse1996}. The ``cycle type'' of a permutation is
written in bracket notation as 
\begin{equation}
  [1^{\alpha_1}2^{\alpha_2}\cdots n^{\alpha_n}]
\end{equation}
where $\alpha_k$ is its number of disjoint $k$-cycles. For example, $(ade)(fg)$
has cycle type $[2^13^1]$ (with $\alpha_k = 0$ omitted for brevity). Notably,
the conjugacy classes of a permutation group are characterized by its elements
all having the same cycle type \cite{humphreysCourse1996}.

\subsubsection{\label{app:autotheory}Automorphism Group Essentials}
In this section we apply the previous group-theory ideas to the automorphism
group of the color code, $\text{Aut}[CC]$. Concretely, $\text{Aut}[CC]$ is the
group of permutations of the anyons of the color code that preserve the
relationships and structure of the anyons. That is, mutual-statistics,
self-statistics, and fusion rules must remain equivalent after applying $\varphi
\in \text{Aut}[CC]$ to all anyons. We can represent any automorphism by a
relabelling of the $6$ color and flavor labels, $r,g,b,x,y,z$, since all
$c$-colored or $\sigma$-flavored anyons must transform equally in order to
maintain their mutual statistics and fusion rules. $\text{Aut}[CC]$ is therefore
a subgroup of $S_6$. We represent its elements using cycle notation, such as
$(rbg)(xy)$. When color and flavor are interchanged, we have cycles such as
$(rxgy)(bz)$; although the standalone map $r \mapsto x$ is ill-defined, the
construction of each anyon in terms of both a color and a flavor ensures that as
long as all the cycles either alternate colors and flavors, or have disjoint
color-only and flavor-only cycles, the anyon mapping is valid. This cycle, for
example, maps $\tsf{rx} \mapsto \tsf{gx}$ as $r\mapsto x$, $x \mapsto g$. A
cycle such as $(rxg)$ is not in $\text{Aut}[CC]$ for this reason.

As with the permutation groups, $\text{Aut}[CC]$ can be partitioned into
conjugacy classes; Table~\ref{tbl:conjugacyclasses} lists the $9$ conjugacy
classes and their cycle types. These are subsets of the $S_6$ conjugacy classes
and are generated by and closed under conjugation with elements from
$\text{Aut}[CC]$.

Automorphisms in $\text{Aut}[CC]$ can be identified by the decomposition into 
subgroups $(S_3\times S_3) \rtimes S_2$, representing the $S_3$ group of $3!$
color (magic square row) permutations, the $S_3$ group of $3!$ flavor (magic
square column) permutations, and the $S_2$ group of $2$ color-flavor exchanges
\cite{davydovaQuantum2024, kesselringAnyon2024}. Specifically, the $S_2$
corresponds to a reflection about the mirror line through the
$\tsf{rx}-\tsf{gy}-\tsf{bz}$ diagonal. In cycle notation, the nontrivial element
of $S_2$ [with trivial $(S_3\times S_3)$ contribution] is $(rx)(gy)(bz)$.
$(S_3\times S_3)$ is closed under conjugation and hence is a normal subgroup.
$S_2$, on the other hand, is not. For example, \begin{equation} (rb)
\cdot (rx)(gy)(bz) \cdot (rb)^{-1} = (rz)(gy)(bx). \end{equation} Hence, we use
the semidirect product $\rtimes$. To identify whether an element has nontrivial
$S_2$ contribution, we note that for this there must be alternating color-flavor
labels in the cycle notation: $(rbg)(xy)$ is trivial on $S_2$ while $(rxbygz)$
is not. 

An important concept in Section~\ref{sec:connected} is the parity of an
automorphism on the subgroup $(S_3\times S_3)$. To identify this, we first
trivialize any $S_2$ contribution by composing the automorphism with
$(rx)(gy)(bz)$ if it has alternating color-flavor labels. We then multiply the
parity of the resulting automorphism's disjoint cycles. For example,
$(ry)(gz)(bx)$ involves a color-flavor reflection and therefore we modify it by 
\begin{equation} (rx)(gy)(bz)\cdot (ry)(gz)(bx) = (rgb)(xzy),  \end{equation} 
which results in a $[3^2]$ automorphism with net $\text{even}\times\text{even} =
\text{even}$ parity.\footnote{Equivalently, it can be written as an even number
of $2$-cycles, $(rg)(gb)(xz)(zy)$} Table~\ref{tbl:conjugacyclasses} lists these
parities for all conjugacy classes of $\text{Aut}[CC]$.

\subsection{\label{app:twists}Localized Anyons}
In Section~\ref{sec:competing}, we introduced the notion of localized anyons
around the boundaries or twists of domain walls. Lemma~\ref{lemma:retained}
was used to determine the presence of operators that are protected during an
evolution under multiple temporal domain walls. We prove that lemma here: 
\begin{proof}[Proof of Lemma~\ref{lemma:retained}]
  We first show that if $\tsf b = \tau(\tsf b)$ for some automorphism $\tau$ and
  anyon $\tsf b$, then $\tsf b$ braids trivially with all $\tsf c$ that localize
  at $\tau$. For any such anyon $\tsf c$ that localizes, there exists an $\tsf
  a$ such that $\tsf c = \tsf a \times \tau(\bar{\tsf a})$. Now, the composite
  anyon $\tsf a \times \bar{\tsf a}$ is equivalent to the vacuum $\tsf 1$, and
  therefore it braids trivially with $\tsf b$. We can write this as 
  \begin{equation} 
    \exp(2i\theta_{\tsf b, \tsf a}) \exp(2i\theta_{\tsf b, \bar{\tsf a}}) = 1 
  \end{equation}
  where $\exp(2i\theta_{\tsf a, \tsf b})$ encodes the phase factor accumulated
  when clockwise encircling an $\tsf a$ with $\tsf b$ (or vice versa):
    \begin{center}
      \vskip -0.5cm
      \def\svgwidth{.9\columnwidth}
      \begingroup \makeatletter \providecommand\color[2][]{\errmessage{(Inkscape) Color is used for the text in Inkscape, but the package 'color.sty' is not loaded}\renewcommand\color[2][]{}}\providecommand\transparent[1]{\errmessage{(Inkscape) Transparency is used (non-zero) for the text in Inkscape, but the package 'transparent.sty' is not loaded}\renewcommand\transparent[1]{}}\providecommand\rotatebox[2]{#2}\newcommand*\fsize{\dimexpr\f@size pt\relax}\newcommand*\lineheight[1]{\fontsize{\fsize}{#1\fsize}\selectfont}\ifx\svgwidth\undefined \setlength{\unitlength}{600bp}\ifx\svgscale\undefined \relax \else \setlength{\unitlength}{\unitlength * \real{\svgscale}}\fi \else \setlength{\unitlength}{\svgwidth}\fi \global\let\svgwidth\undefined \global\let\svgscale\undefined \makeatother \begin{equation}
  \begin{picture}(1,0.5)\lineheight{1}\setlength\tabcolsep{0pt}\put(0,0){\includegraphics[width=\unitlength,page=1]{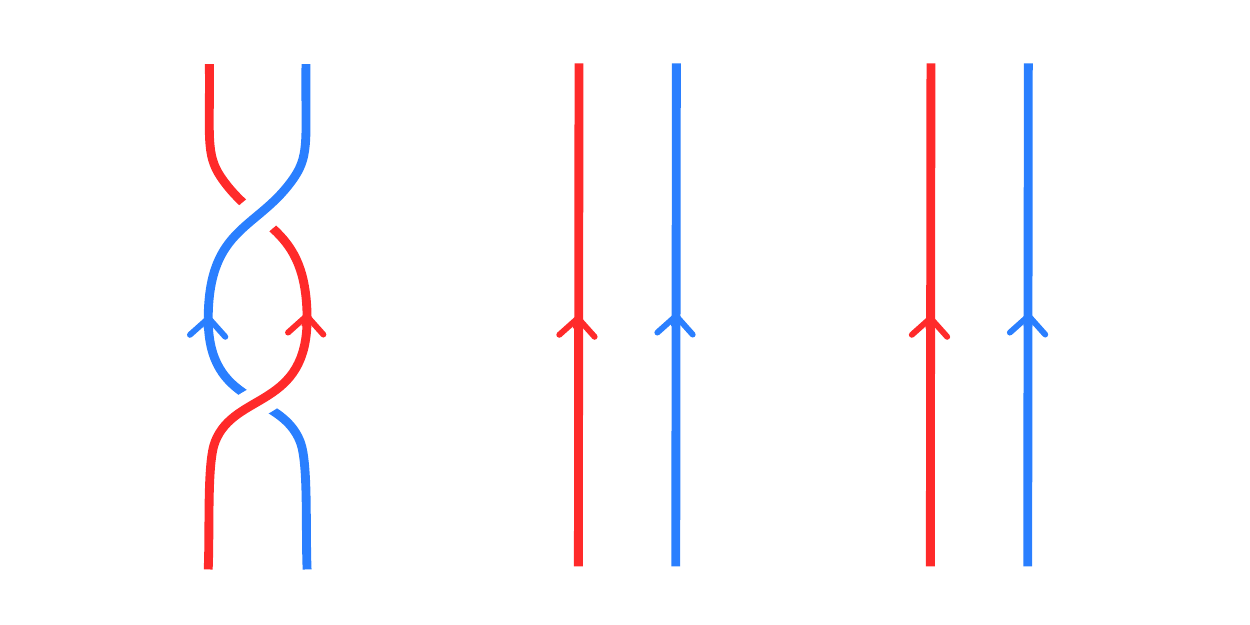}}\put(0.03954377,0.28252446){\color[rgb]{0,0,0}\makebox(0,0)[lt]{\lineheight{1.25}\smash{\begin{tabular}[t]{l}$t$\end{tabular}}}}\put(0,0){\includegraphics[width=\unitlength,page=2]{braiding.pdf}}\put(0.15634053,0.45641733){\color[rgb]{0,0,0}\makebox(0,0)[lt]{\lineheight{1.25}\smash{\begin{tabular}[t]{l}$\tsf a$\end{tabular}}}}\put(0.15634053,0.01056632){\color[rgb]{0,0,0}\makebox(0,0)[lt]{\lineheight{1.25}\smash{\begin{tabular}[t]{l}$\tsf a$\end{tabular}}}}\put(0.23628906,0.01056632){\color[rgb]{0,0,0}\makebox(0,0)[lt]{\lineheight{1.25}\smash{\begin{tabular}[t]{l}$\tsf b$\end{tabular}}}}\put(0.23628906,0.45610707){\color[rgb]{0,0,0}\makebox(0,0)[lt]{\lineheight{1.25}\smash{\begin{tabular}[t]{l}$\tsf b$\end{tabular}}}}\put(0.29057524,0.23544857){\color[rgb]{0,0,0}\makebox(0,0)[lt]{\lineheight{1.25}\smash{\begin{tabular}[t]{l}$=e^{2i\theta_{\tsf b, \tsf a}}$\end{tabular}}}}\put(0.56963778,0.23544857){\color[rgb]{0,0,0}\makebox(0,0)[lt]{\lineheight{1.25}\smash{\begin{tabular}[t]{l}$=e^{2i\theta_{\tsf a, \tsf b}}$\end{tabular}}}}\put(0.45355132,0.45641733){\color[rgb]{0,0,0}\makebox(0,0)[lt]{\lineheight{1.25}\smash{\begin{tabular}[t]{l}$\tsf a$\end{tabular}}}}\put(0.45355132,0.01056632){\color[rgb]{0,0,0}\makebox(0,0)[lt]{\lineheight{1.25}\smash{\begin{tabular}[t]{l}$\tsf a$\end{tabular}}}}\put(0.53099984,0.01056632){\color[rgb]{0,0,0}\makebox(0,0)[lt]{\lineheight{1.25}\smash{\begin{tabular}[t]{l}$\tsf b$\end{tabular}}}}\put(0.53099984,0.45610707){\color[rgb]{0,0,0}\makebox(0,0)[lt]{\lineheight{1.25}\smash{\begin{tabular}[t]{l}$\tsf b$\end{tabular}}}}\put(0.73461702,0.45641733){\color[rgb]{0,0,0}\makebox(0,0)[lt]{\lineheight{1.25}\smash{\begin{tabular}[t]{l}$\tsf a$\end{tabular}}}}\put(0.73461702,0.01056632){\color[rgb]{0,0,0}\makebox(0,0)[lt]{\lineheight{1.25}\smash{\begin{tabular}[t]{l}$\tsf a$\end{tabular}}}}\put(0.81456555,0.01056632){\color[rgb]{0,0,0}\makebox(0,0)[lt]{\lineheight{1.25}\smash{\begin{tabular}[t]{l}$\tsf b$\end{tabular}}}}\put(0.81456555,0.45610707){\color[rgb]{0,0,0}\makebox(0,0)[lt]{\lineheight{1.25}\smash{\begin{tabular}[t]{l}$\tsf b$\end{tabular}}}}\end{picture}\end{equation}
\endgroup      \end{center}
  Since $\tau$ is an automorphism, it preserves these mutual statistics and so
  \begin{align}
    1 &= \exp(2i\theta_{\tsf b, \tsf a}) \exp(2i\theta_{\tau(\tsf
    b),\tau(\bar{\tsf a})}) \notag \\ 
    &= \exp(2i\theta_{\tsf b, \tsf a}) \exp(2i\theta_{\tsf b, \tau(\bar{\tsf
    a})}) \notag \\
    &= \exp(2i\theta_{\tsf b, \tsf a \times \tau(\bar{\tsf a})}) \notag \\
    &= \exp(2i\theta_{\tsf b, \tsf c})
  \end{align}
  where we used $\tsf b = \tau(\tsf b)$ and that $2(\theta_{\tsf b, \tsf a} +
  \theta_{\tsf b, \tau(\bar{\tsf a})})$ is the phase accumulated by $\tsf b$
  encircling $\tsf a \times \tau(\bar{\tsf a}) = \tsf c$. Hence, invariant
  anyons of $\tau$ braid trivially with all localized anyons of $\tau$.

  We next prove the reverse direction: any anyon $\tsf b$ that braids trivially
  with all localized anyons must be invariant. For any anyon $\tsf a$, the anyon
  $\tsf c = \tsf a \times \tau(\bar{\tsf a})$ localizes at automorphism $\tau$,
  and therefore $\tsf b$ braids trivially with $\tsf c$: \begin{equation}
  \exp(2i\theta_{\tsf b \tsf c}) = 1. \end{equation} We again use the
  statistics-preserving property of automorphisms to write 
  \begin{align}
    1 &= \exp(2i\theta_{\tsf b, \tsf a}) \exp(2i\theta_{\tsf b, \tau(\bar{\tsf
    a})}) \notag \\ 
    &= \exp(2i\theta_{\tsf b, \tsf a}) \exp(2i\theta_{\tau^{-1}(\tsf
    b),\bar{\tsf a}}). 
  \end{align}
  By anyons and anti-anyons having the same mutual-statistics, $\theta_{\tsf p,
  \tsf q} = \theta_{\bar{\tsf p},\bar{\tsf q}}$, we have
  \begin{align}
    1 &= \exp(2i\theta_{\tsf b, \tsf a}) \exp(2i\theta_{\tau^{-1}(\bar{\tsf
    b}),\tsf a}) \notag \\ 
    &= \exp(2i\theta_{\tsf d,\tsf a})
  \end{align}
  where $\tsf d = \tsf b \times \tau^{-1}(\bar{\tsf b})$. This relation holds for
  all anyons $\tsf a$, meaning that there is no anyon encircling action that can
  distinguish $\tsf d$ from the vacuum $\tsf 1$ ($\tsf d$ is ``transparent''
  \cite{simonTopological2023}). For modular anyon theories, such as those formed
  by topological stabilizer codes, braiding is nondegenerate
  \cite{barkeshliSymmetry2019, ellisonPauli2023}, which implies that $\tsf d$ is
  the vacuum. That is, $\tsf b = \tau^{-1}(\tsf b)$ or equivalently, $\tau(\tsf
  b) = \tsf b$ as required.
\end{proof}

\subsection{\label{app:automorphisms}Computing and Creating Automorphisms}

Given a dynamical scheme of the form $\widetilde{CC} \rightarrow TC\boxtimes TC
\rightarrow \cdots \rightarrow TC\boxtimes TC \rightarrow \widetilde{CC}$, we
compute the enacted automorphism by the formula \cite{davydovaQuantum2024} 
\begin{equation}
  \varphi_f[(rx)(gy)(bz)]^{\alpha}[(rz)(gy)(bx)]^\beta 
  \varphi_i^{-1}
  \label{eq:compute}
\end{equation}
where $\varphi_i$ and $\varphi_f$ are the contributions from the $\widetilde{CC}
\rightarrow TC\boxtimes TC$ and $TC\boxtimes TC \rightarrow \widetilde{CC}$
transitions respectively. Table~\ref{tbl:automorphisms} lists these for all
possible reversible transitions. $\alpha$ is the number of reversible
transitions that the first $CC$ layer undergoes. $\beta$ is the number of
reversible transitions that the second $CC$ layer undergoes.

\begin{table*}
  \caption{\label{tbl:automorphisms}All possible isomorphism contributions from
  the $\widetilde{CC} \leftrightarrow TC\boxtimes TC$ reversible transitions of
  a dynamical scheme. Adapted from Ref.~\onlinecite{davydovaQuantum2024}. There
  are two possible $TC\boxtimes TC$ theories for each isomorphism.}
  \begin{ruledtabular}
    \begin{tabular}{cccc}
      Isomorphism & Theories & Isomorphism & Theories \\
      \hline & & & \\
      $\id$ & $\magic{1}{}{}{}{}{}{2}{}{}$, $\magic{}{1}{}{}{}{}{}{2}{}$ &
      $(xy)$ & $\magic{1}{}{}{}{}{}{}{2}{}$, $\magic{}{1}{}{}{}{}{2}{}{}$
      \\[16pt] 
      $(rg)$ & $\magic{}{}{}{1}{}{}{2}{}{}$, $\magic{}{}{}{}{1}{}{}{2}{}$ & 
      $(rg)(xy)$ & $\magic{}{}{}{1}{}{}{}{2}{}$, $\magic{}{}{}{}{1}{}{2}{}{}$
      \\[16pt] 
      $(gb)$ & $\magic{1}{}{}{2}{}{}{}{}{}$, $\magic{}{1}{}{}{2}{}{}{}{}$ & 
      $(gb)(xy)$ & $\magic{1}{}{}{}{2}{}{}{}{}$, $\magic{}{1}{}{2}{}{}{}{}{}$
      \\[16pt] 
      $(rb)$ & $\magic{2}{}{}{}{}{}{1}{}{}$, $\magic{}{2}{}{}{}{}{}{1}{}$ & 
      $(rb)(xy)$ & $\magic{}{2}{}{}{}{}{1}{}{}$, $\magic{2}{}{}{}{}{}{}{1}{}$
      \\[16pt]
      $(rgb)$ & $\magic{2}{}{}{1}{}{}{}{}{}$, $\magic{}{2}{}{}{1}{}{}{}{}$ & 
      $(rgb)(xy)$ & $\magic{}{2}{}{1}{}{}{}{}{}$, $\magic{2}{}{}{}{1}{}{}{}{}$
      \\[16pt] 
      $(rbg)$ & $\magic{}{}{}{2}{}{}{1}{}{}$, $\magic{}{}{}{}{2}{}{}{1}{}$ & 
      $(rbg)(xy)$ & $\magic{}{}{}{}{2}{}{1}{}{}$, $\magic{}{}{}{2}{}{}{}{1}{}$
      \\[16pt]
    \end{tabular}
  \end{ruledtabular}
\end{table*}

For example, consider 
\begin{equation}
  \widetilde{CC} 
  \rightarrow \magic{1}{}{}{2}{}{}{}{}{} \rightarrow
  \magic{}{}{}{}{1}{}{}{}{} \rightarrow \magic{1}{}{}{}{}{}{}{2}{} \rightarrow
  \widetilde{CC}.
\end{equation}
We have that $\varphi_i = (gb)$, $\varphi_f = (xy)$, $\alpha = 2$, and $\beta =
1$. $\alpha$ and $\beta$ are important only mod $2$, since they exponent
$2$-cycles. This gives overall automorphism 
\begin{equation}
  (xy)\cdot\id\cdot(rz)(gy)(bx)\cdot (gb)^{-1} = (rz)(gy)(bx).
\end{equation}

On the other hand, given any automorphism $\varphi \in \text{Aut}[CC]$, it is
possible to create a measurement sequence that realizes it. All $72$
automorphisms and example measurement sequences are given in
Ref.~\onlinecite{davydovaQuantum2024}. Moreover, it is possible to ensure that
this measurement sequence ends with a particular $TC\boxtimes TC$ condensation
prior to returning to $\widetilde{CC}$.

We summarize here the procedure described in
Ref.~\onlinecite{davydovaQuantum2024}. Let $\mathcal B_1,\,\ldots,\,\mathcal
B_n$ be a sequence of $n$ sets of bosons of $CC\boxtimes CC$ that form the
condensations for $n$ child theories $C_1,\,\ldots,\,C_n$. We set $\mathcal B_n
= \{\tsf 1,\, \tsf{rz}_1\tsf{rz}_2,\,
\tsf{gz}_1\tsf{gz}_2,\,\tsf{bz}_1\tsf{bz}_2\}$ such that $C_n = \widetilde{CC}$.
We require that the sequence of condensations is reversible. This allows us to
associate an isomorphism $\lambda$ from $C_1$ to $C_n$. Let $\phi \in
\text{Aut}[CC\boxtimes CC]$ be an automorphism of the parent theory such that
$\phi(\mathcal B_1) = \mathcal B_1$ and $\phi(\mathcal B_n) = \mathcal B_n$.
This also defines an automorphism on $C_1$ and $C_n$; assume we have chosen
$C_1$ such that $\phi$ acts as the trivial automorphism $\id_{1}$ on $C_1$. On
$C_n$ it acts as $\varphi \in \text{Aut}[CC]$. The sequence of reversible
condensations $\phi(\mathcal B_1),\,\ldots,\,\phi(\mathcal B_n)$ now enacts the
isomorphism $\varphi\lambda$ from $C_1$ to $C_n$. The sequence \begin{equation}
  \mathcal B_n ,\, \mathcal B_{n-1},\,\ldots,\, \mathcal B_1, \, \phi(\mathcal
  B_2), \, \ldots, \, \phi(\mathcal B_{n-1}),\, \mathcal B_n \end{equation}
  therefore enacts the automorphism $\varphi\lambda\lambda^{-1} = \varphi$ on
  $C_n = \widetilde{CC}$. This allows us to create any automorphism $\varphi \in
  \text{Aut}[CC]$. 

Because $\lambda$ does not affect the final automorphism, we can arbitrarily 
specify that $\mathcal B_{n-1}$ is any set of bosons as long as it creates a
reversible pair of condensations with $\widetilde{CC}$ (i.e. any of the theories
in Table~\ref{tbl:automorphisms}). If we wish to end with a specific (valid)
$TC\boxtimes TC$ theory with condensed bosons $\mathcal B$, we therefore specify
that $\mathcal B_{n-1} = \phi^{-1}(\mathcal B)$. Note that this reasoning allows
us to specify only the ending or the first $TC\boxtimes TC$ theory, not both. 

\section{\label{app:disordered}Disordered DA Color Code}
\subsection{\label{app:examples}Example Disorder Models}

In this section, we provide more examples of disorder models. 

\paragraph{Trivial-adjacent FETs}
We first consider examples of $1$-component disorder models that realize
connections between the trivial FET (when $p=0$) and each of the $6$
automorphisms in $\mathcal C\{(c\sigma)(c\sigma)(c\sigma)\}$ (when $p=1$):

\begin{equation*}
  (ry)(gx)(bz):\qquad  
  \widetilde{CC} \rightarrow 
  \magic{1}{}{}{}{}{}{2}{}{}\rightarrow 
  \underbrace{\magic{}{}{}{}{1}{}{}{}{}}_{\displaystyle p} \rightarrow
  \widetilde{CC} 
\end{equation*}
\begin{equation*}
  (rz)(gx)(by):\qquad 
  \widetilde{CC} \rightarrow 
  \magic{1}{}{}{}{}{}{2}{}{}\rightarrow 
  \underbrace{\magic{}{}{}{}{2}{}{}{}{}}_{\displaystyle p} \rightarrow
  \widetilde{CC}
\end{equation*}
\begin{equation*}
  (ry)(gz)(bx):\qquad  
  \widetilde{CC} \rightarrow 
  \magic{1}{}{}{2}{}{}{}{}{}\rightarrow 
  \underbrace{\magic{}{}{}{}{}{}{}{1}{}}_{\displaystyle p} \rightarrow
  \widetilde{CC} 
\end{equation*}
\begin{equation*}
  (rx)(gz)(by):\qquad 
  \widetilde{CC} \rightarrow 
  \magic{}{}{}{1}{}{}{2}{}{}\rightarrow 
  \underbrace{\magic{}{2}{}{}{}{}{}{}{}}_{\displaystyle p} \rightarrow
  \widetilde{CC}
\end{equation*}
\begin{eqnarray*}
  (rx)(gy)(bz): \qquad
  \widetilde{CC} \rightarrow &&
  \magic{1}{}{}{}{}{}{2}{}{}\rightarrow 
  \underbrace{\magic{}{}{}{}{1}{}{}{}{}}_{\displaystyle p} \rightarrow \\ 
  && \magic{}{}{}{}{}{}{}{}{1} \rightarrow \magic{1}{}{}{}{}{}{}{}{} \rightarrow
  \widetilde{CC}
\end{eqnarray*}
\begin{eqnarray*}
  (rz)(gy)(bx): \qquad
  \widetilde{CC} \rightarrow &&
  \magic{1}{}{}{}{}{}{2}{}{}\rightarrow 
  \underbrace{\magic{}{}{}{}{2}{}{}{}{}}_{\displaystyle p} \rightarrow \\ 
  && \magic{}{}{2}{}{}{}{}{}{} \rightarrow \magic{}{}{}{}{}{}{2}{}{} \rightarrow
  \widetilde{CC}
\end{eqnarray*}

\paragraph{Example 2.}
Secondly, consider 
\begin{eqnarray*}
  \widetilde{CC} && \; \rightarrow \;
  \magic{1}{}{}{}{}{}{2}{}{} \; \rightarrow \;
  \magic{}{}{}{}{1}{2}{}{}{} \; \rightarrow \;
  \underbrace{\magic{}{}{1}{}{}{}{}{}{}}_{\displaystyle p_1} \nonumber \\ 
  && \rightarrow \;
  \underbrace{\magic{}{}{}{}{}{}{2}{}{}}_{\displaystyle p_2}
  \; \rightarrow \; \magic{}{2}{}{}{}{}{1}{}{} \; \rightarrow \;
  \widetilde{CC}
\end{eqnarray*}
with the corners of the parameter space supporting FETs with automorphisms
\begin{equation*}
  \begin{tabular}{l||c|c}
    \diagbox{$p_2$}{$p_1$} & $0$ & $1$ \\ 
    \hline \hline
    $0$ & $(rb)(xy)$ & $(rygxbz)$ \\ 
    \hline
    $1$ & $(rzbygx)$ & $(xzy)$
  \end{tabular}.
\end{equation*}
\noindent Notably, there does not exist a logically-protected trajectory between
$\mathbf p = (0,0)$ and $\mathbf p = (1,1)$ because
\begin{equation}
  (xzy)\cdot[(rb)(xy)]^{-1} = (rb)(yz) \notin \mathcal
  C\{(ccc)(\sigma\sigma\sigma)\}
\end{equation}
in violation of Eq.~\eqref{eq:m_even}.

\paragraph{Example 3.}
This is an example of a $2$-component disorder model where we have an
irreversible phase in one corner, from an interlayer irreversible condensation.
This differs from Eq.~\eqref{eq:diffparity} in that all other FETs are in the same parity
cluster. 
\begin{eqnarray*}
  \widetilde{CC} && \rightarrow 
  \magic{1}{}{}{}{}{}{2}{}{}\rightarrow \magic{}{}{}{}{2}{}{}{1}{}\rightarrow
  \underbrace{\magic{}{}{}{1}{}{}{}{}{}}_{\displaystyle p_1} \nonumber \\ 
  && \rightarrow 
  \underbrace{\magic{2}{}{}{}{}{}{}{}{}}_{\displaystyle p_2} \rightarrow 
  \widetilde{CC}
\end{eqnarray*}
The corners host FETs with automorphisms
\begin{equation*}
  \begin{tabular}{l||c|c}
    \diagbox{$p_2$}{$p_1$} & $0$ & $1$ \\ 
    \hline \hline
    $0$ & $(rg)(xz)$ & IrrP \\ 
    \hline
    $1$ & $(rygxbz)$ & $(rgb)$
  \end{tabular}.
\end{equation*}

\paragraph{Example 4.}
This measurement sequence demonstrates multiple automorphisms in the even-parity
component; these are connected but cannot be logically-connected.
\begin{eqnarray*}
  \widetilde{CC} && \rightarrow 
  \magic{1}{}{}{}{}{}{2}{}{} \rightarrow 
  \magic{}{2}{}{}{}{}{}{1}{} \rightarrow 
  \underbrace{\magic{}{}{1}{}{}{}{}{}{}}_{\displaystyle p_1} \nonumber \\ 
  && \rightarrow \underbrace{\magic{}{}{}{}{}{2}{}{}{}}_{\displaystyle p_2}
  \rightarrow \magic{}{}{}{1}{}{}{2}{}{} \rightarrow \widetilde{CC}
\end{eqnarray*}
The automorphisms are
\begin{equation*}
  \begin{tabular}{l||c|c}
    \diagbox{$p_2$}{$p_1$} & $0$ & $1$ \\ 
    \hline \hline
    $0$ & $(rg)$ & $(rxgy)(bz)$ \\
    \hline
    $1$ & $(rzgy)(bx)$ & $(rbg)(xz)$
  \end{tabular}.
\end{equation*}

\subsection{\label{app:connected}Connected FETs}
We first provide an example of two measurement sequences that both realize the
same FET and automorphism, $\id$, but are not ``connected'' to each other: one
sequence cannot be made into the other by adding or removing condensation steps
and while maintaining solely reversible transitions. The two sequences are
\begin{align}
    \widetilde{CC} &\rightarrow \magic{1}{}{}{}{}{}{2}{}{} \rightarrow 
    \magic{}{}{}{}{1}{2}{}{}{} \rightarrow \magic{2}{}{}{}{}{}{1}{}{} \notag \\ 
    & \rightarrow 
    \magic{}{}{}{}{2}{1}{}{}{} \rightarrow 
    \magic{1}{}{}{}{}{}{2}{}{} \rightarrow \widetilde{CC}
    \label{eq:id1}
\end{align}
and 
\begin{equation}
    \widetilde{CC} \rightarrow \magic{1}{}{}{}{}{}{2}{}{} \rightarrow \widetilde{CC}.
    \label{eq:id2}
\end{equation}

Let us begin with Eq.~\eqref{eq:id1} and consider for now just the first layer.
We wish to add or remove condensed bosons to manipulate it into the form of
Eq.~\eqref{eq:id2}. The sequence of bosons $\tsf{rx}, \tsf{gy}, \tsf{bx},
\tsf{gz}, \tsf{rx}$ forms a path on the fermion magic square (the path can wrap
around periodic boundary conditions) \cite{davydovaQuantum2024}: 

\begin{equation}
  {\color{gray}
  \magicsf{ry}{b{\tikzmark{2}}x}{g{\tikzmark{3}}z}{bz}{g{\tikzmark{1}}y}{r{\tikzmark{0}}x}{gx}{rz}{by}}
\end{equation}\begin{tikzpicture}[remember picture, overlay, black, line width=2pt]\draw[->] ([yshift=.5ex]pic cs:0) to ([yshift=.5ex]pic cs:1);
  \draw[->] ([yshift=.5ex]pic cs:1) to ([yshift=.5ex]pic cs:2);
  \draw[->] ([yshift=.5ex]pic cs:2) to ([yshift=.5ex]pic cs:3);
  \draw[->] ([yshift=.5ex]pic cs:3) to ([yshift=.5ex]pic cs:0);
\end{tikzpicture}

The requirement for reversible transitions (that adjacent bosons in the sequence
are mutual-semions) necessitates that these arrows only point horizontally or
vertically; we must follow this rule when adding or removing condensed bosons.
As such, we are never able to remove bosons that sit at the corners of the
path---this would result in a diagonal arrow. We may add or remove bosons from
the start or end of the sequence only if the resulting pair of layer-$1$ and
layer-$2$ bosons creates a reversible transition to $\widetilde{CC}$; the
$\tsf{gz}$ boson can never begin or end the sequence. We therefore cannot remove
$\tsf{gz}$ from our sequence. The same argument applies to the sequence of
bosons in the second layer. It is thus impossible to connect the measurement
sequence of Eq.~\eqref{eq:id1} with Eq.~\eqref{eq:id2} while maintaining
reversible transitions.

Furthermore, we can also show that not every length-$m$ adjacency sequence can
be made into an $m$-component disorder model. We discuss here one such example. 

Consider the adjacency sequence $\{A_0,\, \mathbbm 1,\, A_2\}$, where $\mathbbm
1$ is the trivial FET with automorphism $\id$, and $A_0$ and $A_2$ have
automorphisms $\varphi_0$ and $\varphi_2 \in \mathcal
C\{(c\sigma)(c\sigma)(c\sigma)\}$ respectively [as required by the separation
condition, Eq.~\eqref{eq:separation}]. We wish to show that for some choice of
$A_0$ and $A_2$ there is no $2$-component disorder model that realizes $A_0$,
$\mathbbm 1$, and $A_2$ in three of its parameter-space corners. Equivalently,
there is no measurement sequence $\widetilde{CC} \rightarrow (TC\boxtimes TC)_1
\rightarrow \cdots \rightarrow (TC\boxtimes TC)_k \rightarrow \widetilde{CC}$
that realizes the $\id$ automorphism and that can form both a $1$-component
disorder model with $A_0$ and a $1$-component disorder model with $A_2$. 

Assume that we have chosen some measurement sequence for $\mathbbm 1$.
Eq.~\eqref{eq:compute} tells us that 
\begin{equation}
    \id = \varphi_f[(rx)(gy)(bz)]^{\alpha}[(rz)(gy)(bx)]^\beta 
  \varphi_i^{-1}.
\end{equation}
We first simplify this by noting that for all $\varphi_i$, $\varphi_f$
isomorphism contributions listed in Table~\ref{tbl:automorphisms}, none contain
permutations affecting the $z$ flavor label. This enforces that $\alpha = \beta
= 0$. Then, $\id = \varphi_f \varphi_i^{-1}$ gives $\varphi_f = \varphi_i$. This
means that the first condensation $(TC\boxtimes TC)_1$ is the same as the final
condensation $(TC\boxtimes TC)_k$ in the measurement sequence. 

We now consider the possible $1$-component disorder models that can be made that
involve our chosen measurement sequence. By Eq.~\eqref{eq:compute}, there are
only $6$ possible effects that changing the disorder parameter from $p=0$ to
$p=1$ (or equivalently, $p=1$ to $p=0$) can have on the enacted automorphism: 
\begin{enumerate}[label=(\arabic*)]
    \item $\alpha \mapsto \alpha + 1 \mod 2$; 
    \item $\beta \mapsto \beta + 1 \mod 2$; 
    \item $\alpha \mapsto \alpha + 1 \mod 2$ and $\varphi_i \mapsto \varphi_i'$
      by adding an additional condensation step in the first $CC$ layer prior to
      $(TC\boxtimes TC)_1$; 
    \item $\alpha \mapsto \alpha + 1 \mod 2$ and $\varphi_f \mapsto \varphi_f'$
      by adding an additional condensation step in the first $CC$ layer after
      $(TC\boxtimes TC)_k$; 
    \item $\beta \mapsto \beta + 1 \mod 2$ and $\varphi_i \mapsto \varphi_i''$
      by adding an additional condensation step in the second $CC$ layer prior
      to $(TC\boxtimes TC)_1$; and 
    \item $\beta \mapsto \beta + 1 \mod 2$ and $\varphi_f \mapsto \varphi_f''$
      by adding an additional condensation step in the second $CC$ layer after
      $(TC\boxtimes TC)_k$. 
\end{enumerate}
We can show that these $6$ options realize only $4$ different automorphisms. For
each of (3)-(6) there is only one choice of condensation boson that we can
introduce that is a mutual-semion with the  condensates before and after it.
Thus, since $(TC\boxtimes TC)_1 = (TC\boxtimes TC)_k$ and $\varphi_i =
\varphi_f$, we must have $\varphi_i' = \varphi_f'$ and $\varphi_i'' =
\varphi_f''$. Let $\varphi_{(3)}$ be the enacted automorphism when we follow
option (3),
\begin{equation}
    \varphi_{(3)} = \varphi_i[(rx)(gy)(bz)]\varphi_i'^{-1},
\end{equation}
where we use $\varphi_i = \varphi_f$. Similarly, let $\varphi_{(4)}$ be the
enacted automorphism for option (4), 
\begin{equation}
    \varphi_{(4)} = \varphi_i'[(rx)(gy)(bz)]\varphi_i^{-1},
\end{equation}
where we use $\varphi_i' = \varphi_f'$. Finally, by noting that all $\varphi \in
\mathcal C\{(c\sigma)(c\sigma)(c\sigma)\}$ satisfy $\varphi = \varphi^{-1}$, we
have 
\begin{align}
    \varphi_{(3)} &= \varphi_{(3)}^{-1} \notag \\ 
    &= \left(\varphi_i[(rx)(gy)(bz)]\varphi_i'^{-1}\right)^{-1} \notag \\ 
    &= \varphi_i'[(rx)(gy)(bz)]\varphi_i^{-1} \notag \\ 
    &= \varphi_{(4)}.
\end{align}
An equivalent argument shows that $\varphi_{(5)} = \varphi_{(6)}$. This means
that a given measurement sequence for $\mathbbm 1$ can only be in $1$-component
disorder models with $4$ other FETs. However, there are $6$ FETs that are
connected to $\mathbbm 1$ (corresponding to the $6$ elements of $\mathcal
C\{(c\sigma)(c\sigma)(c\sigma)\}$). We therefore cannot construct a
$2$-component disorder model with $\mathbbm 1$ and all possible $A_0$ and $A_2$. 

If we were to extend the disorder model definition to allow for a measurement
sequence involving intermediary $\widetilde{CC}$ stages, such as $\widetilde{CC}
\rightarrow \mathcal A_1 \rightarrow \cdots \rightarrow \widetilde{CC}
\rightarrow \mathcal B_1 \rightarrow \cdots \rightarrow \widetilde{CC}$, then it
becomes possible for any length-$m$ adjacency sequence to form an $m$-component
disorder model.\footnote{Specifically, if $\{A_0,\,A_1,\ldots,A_m\}$ is an
adjacency sequence (cf. Definition~\ref{def:connected}), we can construct the
$m$-component disorder model by beginning with the measurement sequence for
$A_0$. Once returned to $\widetilde{CC}$, we append the $1$-component disorder
model with automorphisms $\id$ for $p_1=0$ and $\tau_{10}$ for $p_1=1$.
Repeating this for all $\tau_{i(i-1)}$ up to $i=m$ gives the required
$m$-component disorder model. This is similar to the procedure in
Eq.~\eqref{eq:concatenated}, but now the freedom to return to $\widetilde{CC}$
allows us to concatenate multiple $1$-component disorder models.} This freedom,
however, strays the model further from its interpretation as disordered DA color
codes that are designed to implement logical gates on an encoded system. 

This could be generalized further to allow condensation of any bosons of
$CC\boxtimes CC$. For example, starting from $\widetilde{CC}$ and condensing
$\tsf{rx}_1$ (with no boson in the second layer), then $\tsf{bx}_1$, then
returning to $\widetilde{CC}$ includes only reversible transitions (all logical
operators update reversibly); this allows us to make changes to the measurement
sequences in ways previously forbidden [i.e., $TC(\tsf{rx})\rightarrow
TC(\tsf{bx})$ is not a reversible transition]. We may also consider models with
anticorrelated disorder between two link types: measuring $\tsf{rx}_1$ with
probability $p$ and $\tsf{bx}_1$ with probability $1-p$, for example. Even with
condensed bosons in the second layer, this would not result in an irreversible
phase because the regions of $\tsf{rx}$- and $\tsf{bx}$-measurements do not
overlap. In the case of Eq.~\eqref{eq:id1}, this would enable removing the
$\tsf{gz}_1$ condensation while retaining reversible transitions, and thus
Eq.~\eqref{eq:id1} and Eq.~\eqref{eq:id2} can be connected. 

\subsection{\label{app:irrev}Irreversible Phases}
As discussed in Section~\ref{sec:disordermodel}, irreversible phases arise 
when the measurement sequences consecutively condense bosons that braid
trivially. These generally result in logical information measured out due to the
condensation of commuting anyons \cite{kesselringAnyon2024}. Moreover, there are
also disruptions to the ISG of the code, such that we do not return to the
$\widetilde{CC}$ ISG after the period. For intralayer scenarios the plaquette
operators are not reintroduced and links remain from the intermediary $TC$
phases. For interlayer scenarios the two $CC$ layers are not recoupled,
resulting in logicals that may reside on only one layer, for example. We
describe these behaviors in more detail here.

We first consider irreversible $TC\boxtimes TC$ intralayer transitions, starting
from the illustrative case of child theories of $CC$. Measuring a link
corresponding to the hopping operator for anyon $\tsf{c}\sigma$ has two effects
on the ISG: (1) remove plaquettes of color $c$ and flavor $\sigma'$ for $\sigma'
\neq \sigma$; and (2) add plaquettes of color $c'$ and flavor $\sigma$ for
$c'\neq c$.\footnote{Technically, $c\sigma$-links are added, not plaquette
  terms. However, the product of $c\sigma$ links around a $c' \neq c$ hexagon is
equivalent to the $c'\sigma$-plaquette operator.} By tracking the presence of
these plaquettes at each measurement stage, we can determine if the ISG is
reproduced. For example, contrast the case of 
\begin{equation}
  \magicp{$\bullet$}{}{}{}{}{}{}{}{} \rightarrow
  \magicp{}{}{}{}{}{$\bullet$}{}{}{} \rightarrow
  \magicp{}{}{}{}{}{}{$\bullet$}{}{}
\end{equation}
with 
\begin{equation}
    \magicp{$\bullet$}{}{}{}{}{}{}{}{} \rightarrow
    \magicp{}{}{}{}{}{}{$\bullet$}{}{}. 
\end{equation}
In the following table we denote by $\bullet$ the presence of each type of
plaquette operator in the ISG at each stage of the measurement sequences. For
the first example, we have:
\begin{equation}
  \begin{tabular}{c|c|c|c||c|c|c} 
    & $rx$ & $gx$ & $bx$ & $rz$ & $gz$ & $bz$ \\ 
    \hline \hline
    $TC(\tsf{rx})$ & $\bullet$ & $\bullet$ & $\bullet$ & & $\bullet$ & $\bullet$
    \\ 
    $TC(\tsf{gz})$ & $\bullet$ & & $\bullet$ & $\bullet$ & $\bullet$ & $\bullet$
    \\ 
    $TC(\tsf{bx})$ & $\bullet$ & $\bullet$ & $\bullet$ & $\bullet$ & $\bullet$ &
  \end{tabular}
\end{equation}
All plaquettes are recovered as expected and we realize the $TC(\tsf{bx})$
phase. In the second example, however, we have 
\begin{equation}
  \begin{tabular}{c|c|c|c||c|c|c} 
    & $rx$ & $gx$ & $bx$ & $rz$ & $gz$ & $bz$ \\ 
    \hline \hline
    $TC(\tsf{rx})$ & $\bullet$ & $\bullet$ & $\bullet$ & & $\bullet$ & $\bullet$
    \\ 
    $TC(\tsf{bx})$ & $\bullet$ & $\bullet$ & $\bullet$ & & $\bullet$ &
  \end{tabular}
\end{equation}
In this final state, there are individual $rx$ and $bx$-type links, but no
$rz$-type plaquettes. The effect of this is that a logical $\bar O(\tsf{bz})$,
for example, no longer has representatives that extend across all homologous
cycles of the torus, since there are no $rz$-plaquettes in the ISG with which to
multiply to deform one string into another. 

Additional disruptions occur when we consider $TC\boxtimes TC \rightarrow
\widetilde{CC}$ interlayer irreversible transitions. In these, we have a
situation where the measured $\Z_1\Z_2$ interlayer links will not always
anticommute with some element of the ISG (with the precise scenario determined
by the random disorder realization). These links are therefore not added to
the ISG, and we produce a phase with interlinked $CC$ layers that behaves
differently to $\widetilde{CC}$. In particular, there exist $\X$-logical
operators that reside on just one layer, e.g. $\bar O(\tsf{bx}_1)$, or
$\Z$-logical operators that cannot freely switch between the two layers as they
may in $\widetilde{CC}$ (wherein $\tsf{rz}_1 \sim \tsf{rz}_2$, for example).

\section{\label{app:critical}Trajectories, Phase Transitions and Critical Behaviour}

In this section, we introduce the idea of ``trajectories'' in parameter space as
another interpretation of $m$-component disorder models. We then formalize the
idea of competing automorphism phases (Section~\ref{sec:competing}) by linking
the critical behavior of trajectories to the universality class of bond
percolation, aided by numerical simulations.

In the main text, we treated an $m$-component disorder model as having $m$
independent parameters. This is not required, however. Within an $m$-dimensional
parameter space we can, for example, define a $1$-parameter ``trajectory''
$\mathbf p(p) \in [0,1]^{m}$ that interpolates between the FETs at two different
corners, with $p\in[0,1]$ such that $\mathbf p(0)$ realizes FET $A$ and $\mathbf
p(1)$ realizes FET $B$. For example, in the case of
Eq.~\eqref{eq:ex1_automorphisms}, $\mathbf p(p) = (p,p)$ maps between the FETs
with automorphism $\id$ at $p=0$ and $(rgb)(xzy)$ at $p=1$. The results for
logically-connected FETs can also be readily modified to trajectories: we call a
trajectory logically-protected if there exists a consistent nonzero-dimensional
logical Hilbert subspace that remains unmeasured in the limit of
$t\rightarrow\infty$ periods at any point $p\in[0,1]$ of the trajectory. If two
FETs are joined by a logically-protected trajectory, they must also be
logically-connected. Conversely, an $m$-component disorder model containing an
irreversible phase must support a trajectory that is not logically-protected. 

Moreover, these trajectories allow us to smoothly interpolate the measurement
sequence of one FET into that of another. As noted in
Section~\ref{sec:logically}, we cannot distinguish two points $p, \tilde p \in
[0,1]$ in a logically-protected trajectory solely using the time evolution of an
observable from the protected algebra. This means that an ``adiabatic
transition'' where a system evolves between two FETs using a time-dependent
trajectory $p(t) = t/T \in [0,1]$ for $t =0,1,\ldots,T$ will not affect the
periodic behavior of any logical operator that is protected. Observables that
belong to not protected qubits, however, can distinguish these two phases. This
allows us to detect phase transitions between different FETs and determine their
critical behavior.

\begin{figure}
  \begin{center}
    \includegraphics[width=0.48\columnwidth]{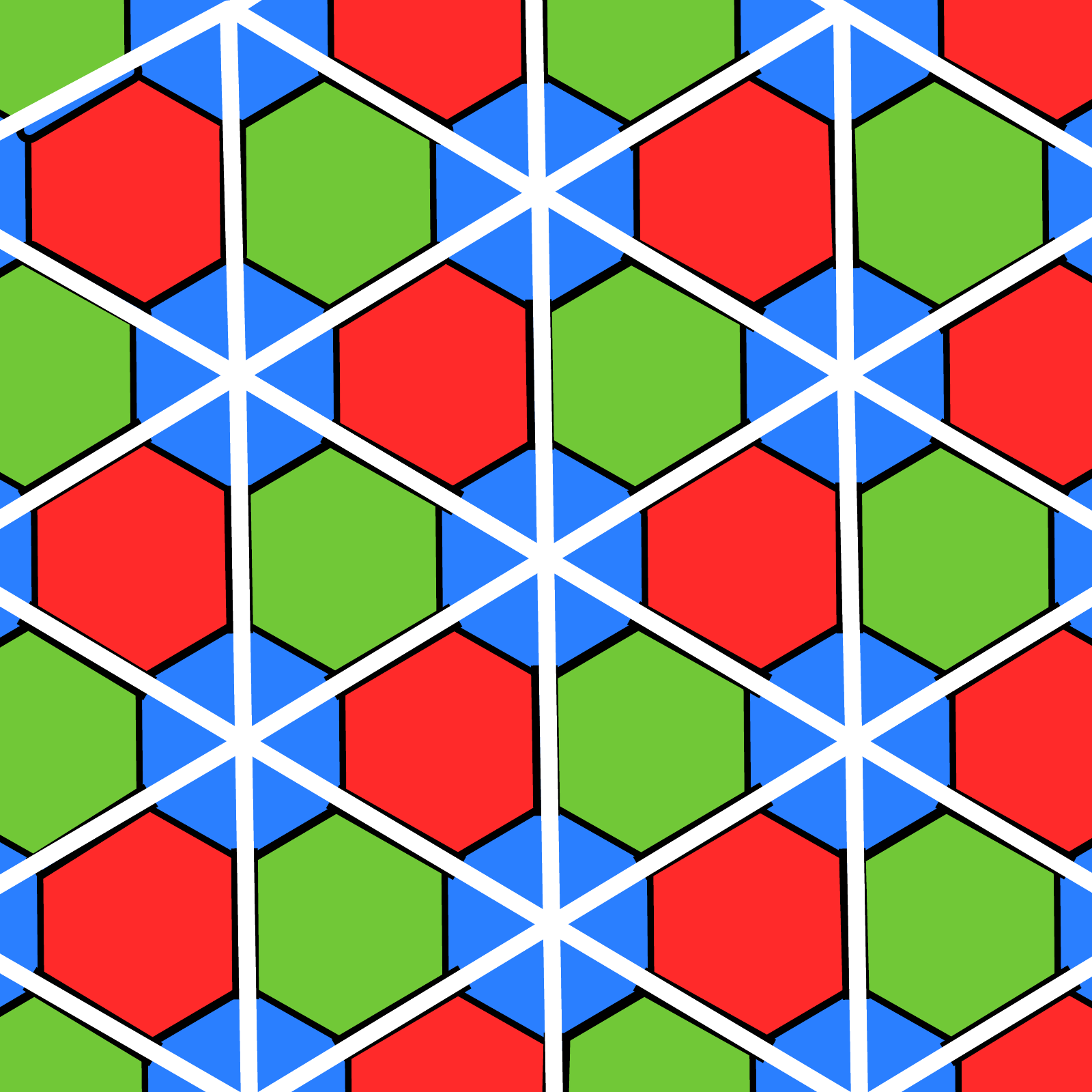}
    \includegraphics[width=0.48\columnwidth]{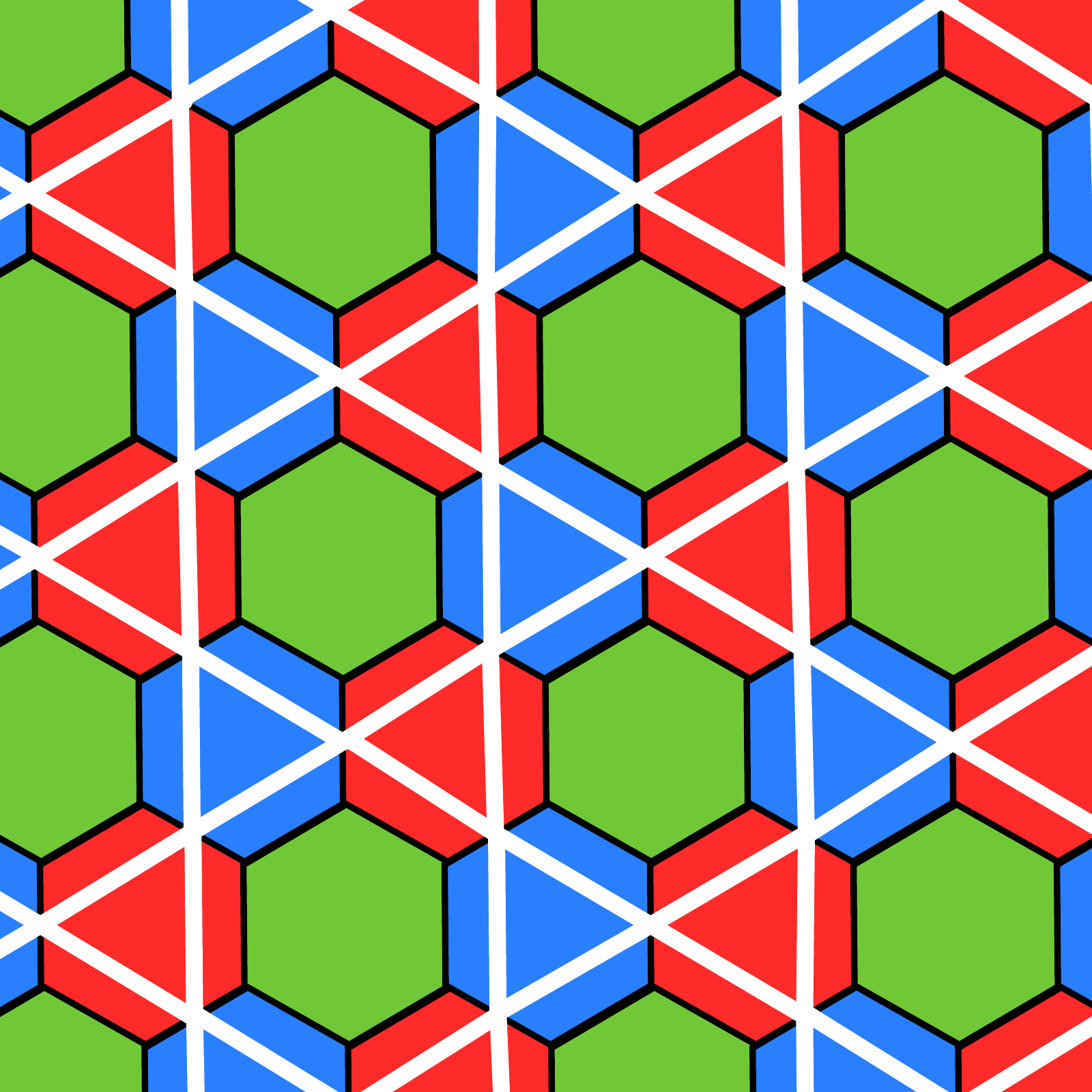}
  \end{center}
  \caption{{Left.} In a model with blue links disordered, we contract the
  remaining green and red links to a point to form a triangular superlattice
  (shown in white). Spanning domain walls are equivalent to percolating bonds on
  this superlattice. {Right.} In a model with blue and red links disordered,
  we contract the remaining green links to a point to form a kagome superlattice
  (shown in white).}
  \label{fig:triangularkagome}
\end{figure}

To determine the critical behavior, we first provide some intuition as to the
universality class. In a $1$-component disorder model, links of one color are
selected randomly. We thus contract each of the other two colored links to a
point to consider only the behavior of the disordered links. In the honeycomb
lattice, this contraction leaves behind the bonds of a triangular superlattice,
cf. Fig.~\ref{fig:triangularkagome}. Each bond is chosen independently with the
same probability, and a temporal domain wall containing a contiguous region
extending around a noncontractible cycle is in direct correspondence to the
existence of spanning clusters of these chosen bonds. Therefore, the critical
behavior of a $1$-component disorder model is expected to be in the same
universality class as bond percolation on a triangular lattice with a critical
parameter $p_{c,\text{triangular}} = 0.347\ldots$ and exponent $\nu=1.\dot3$
\cite{staufferIntroduction2003, sykesExact1964}.

\begin{figure*}
  \begin{center}
    \includegraphics[width=\textwidth]{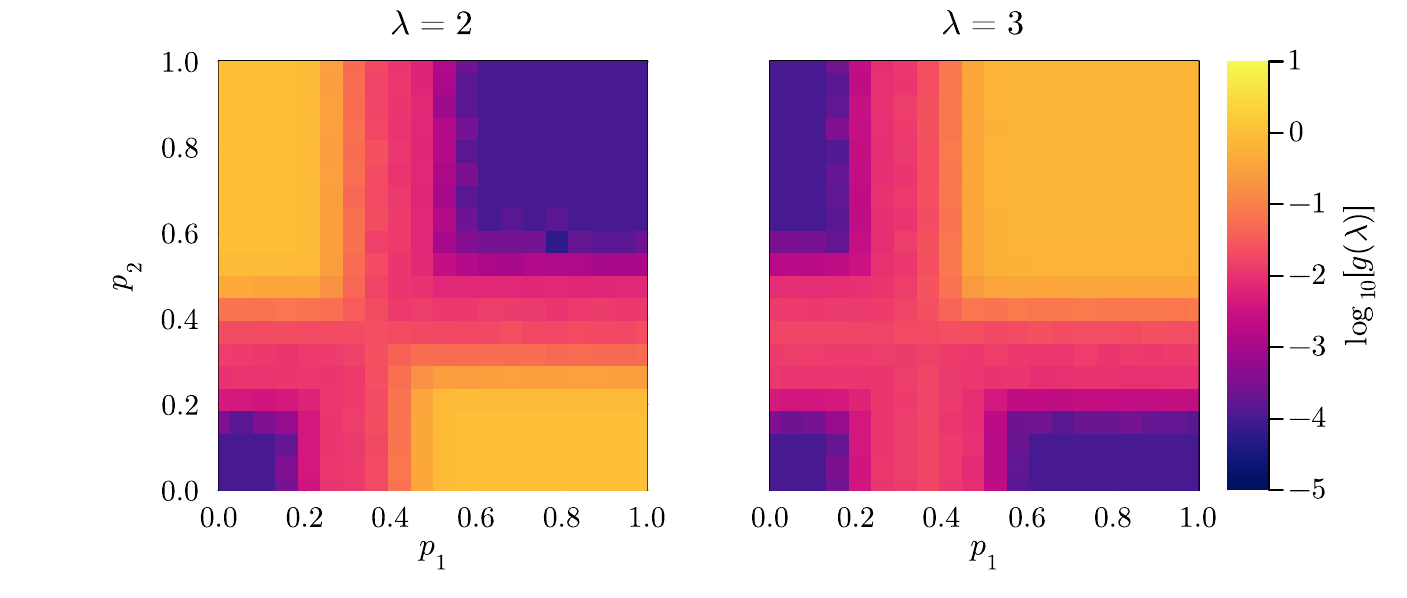}
  \end{center}
  \vskip -1cm
  \caption{$\pi$ and $2\pi/3$ ($\lambda=2$ and $\lambda=3$) Fourier components
  of the average-squared-expectation of the $\bar\X_3$ logical operator computed
  from an initial logical state of $\ket{\text{++++}}$ and evolving under
  Eq.~\eqref{eq:ex1} for $t \leq 96$ time steps. This initial state and
  observable were chosen to demonstrate the different FETs at the
  corners of the parameter space. We see nonzero period-$2$ oscillations in the
  $(p_1,p_2) = (0,1)$ and $(1,0)$ corners, and nonzero period-$3$ oscillations
  in the $(1,1)$ corner, consistent with the predicted automorphisms, cf.
  Eqs.~\eqref{eq:ex1_automorphisms}, \eqref{eq:evolve1}, \eqref{eq:evolve2}, and
  \eqref{eq:evolve3}.}
  \label{fig:ex1_g}
\end{figure*}

In a general $m$-component disorder model, more complicated behaviors emerge.
Disordered links in different $CC$ layers operate independently of each other,
and therefore it is possible for multiple triangular-bond percolation problems
to occur concurrently, potentially with different probability parameters
dependent on the point in parameter space. If two different-colored links on the
same layer are in consecutive disordered stages, however, then the universality
class changes. Now, we contract only the one set of non-disordered links, and
consider the spanning set of both disordered colors. These form the bonds of a
kagome superlattice, cf. Fig.~\ref{fig:triangularkagome}. If both colors are
chosen independently with the same probability, then the critical behavior now
is expected to follow the universality class of bond percolation on a kagome
lattice with critical parameter $p_{c,\text{kagome}} = 0.524\ldots$ and exponent
$\nu=1.\dot3$ \cite{staufferIntroduction2003, sykesExact1964}. We note that
these same two percolation behaviors can be applied to, and were indeed observed
in, the disordered honeycomb Floquet code with missing measurements
\cite{vuStable2024}. In our model, the increased parameter space enables
combinations of both percolation problems to arise concurrently.

We now provide evidence for these claims by performing numerical simulations of
disordered DA color codes. This analysis was done in Julia, using the
$\textsf{QuantumClifford.jl}$ package \cite{QuantumClifford} for efficient
computation with the stabilizer formalism \cite{aaronsonImproved2004,
gottesmanStabilizer1997, gottesmanHeisenberg1998, poulinStabilizer2005}. The
system is on a honeycomb lattice characterized by linear system size $L$, such
that there are $L$ plaquettes in each of the horizontal and vertical
directions,\footnote{Equivalently, $2L$ lattice sites in the horizontal
direction and $L$ lattice sites vertically.} and joined by periodic boundary
conditions on all sides. Unless specified otherwise, simulations were repeated
$N=508$ times using $L=18$.

We use two metrics: firstly, the Fourier components of the
average-squared-expectation of an observable $O$ for a given initial state
$\ket\psi$: 
\begin{equation}
  g_{O,\ket\psi}(\lambda) = \lim_{T\rightarrow \infty} \frac{2}{T}
  \sum_{t=0}^{T-1} e^{2i\pi t/\lambda} G_{O,\ket\psi}(t)  
\end{equation}
where 
\begin{equation}
  G_{O, \ket\psi}(t) = \overline{\braket{\psi(t)|O|\psi(t)}^2}
\end{equation}
and $\psi(t)$ is the evolution of the initial state after an integer $t\geq0$
periods of a given measurement sequence. We can use $g(\lambda)$ to distinguish
between the subcritical and supercritical phases of a trajectory by choosing a
state and observable that evolve differently under the automorphisms at the
two endpoints. For example, a $3$-cycle automorphism at $\mathbf p(0)$ and a
$2$-cycle automorphism at $\mathbf p(1)$ can result in a nonvanishing $g(3)$
when $p=0$ and a nonvanishing $g(2)$ when $p=1$. In practice, we need to
truncate the limit at some finite $T'$ that must be a multiple of the periods
under study; this ensures that the Fourier decomposition equation is valid. We
chose $T'=96$ for our simulations: a multiple of $2,3,4$, and $6$.

Secondly, we consider the purification dynamics of the system. We track the
evolution starting from a maximally-mixed logical state $\rho$ by measuring its
(average) von Neumann entropy $S = \rho \log\rho$ over time
\cite{fattalEntanglement2004}. At $t=0$, we start from the maximum $4$. Over
multiple periods of the measurement sequence, entropy reduces if logical qubits
are measured. A logically-protected trajectory must retain $S(t) > 0$ in the
limit $t\rightarrow \infty$ at all points $p$. To model this, we assume the form 
\begin{equation}
  S(t) = S_\infty + (S_0 - S_\infty)e^{-\Gamma t}
  \label{eq:entropy}
\end{equation}
and consider the decay rate $\Gamma$ or associated timescale $\tau = 1/\Gamma$.
We again truncate using $t \leq 96$ to approximate $S_\infty$ and
$\tau$ in numerical simulations.

We consider now an example of a $2$-component disorder model. Specifically, take
the measurement sequence given by Eq.~\eqref{eq:ex1} with automorphisms in
Eq.~\eqref{eq:ex1_automorphisms}. All corners of the $2$-dimensional parameter
space are FETs, and the associated automorphisms are even-parity on the
$S_3\times S_3$ subgroup. We choose $\ket\psi$ such that it is an
eigenstate of logical operators that evolve differently in the corners of the
phase diagram. This is easily observed using the stabilizer picture
\cite{nielsenQuantum2010}; take $\bar \X_3 = \bar\Os[\tsf{bx}]_h$, for example.
It does not change when evolving under the identity. Under $(rx)(gy)(bz)$ we
see
\begin{equation} 
  \bar\Os[\tsf{bx}]_h \mapsto \bar\Os[\tsf{rz}]_h \mapsto \bar\Os[\tsf{bx}]_h
  \mapsto \cdots 
  \label{eq:evolve1}
\end{equation}
while under $(rz)(gx)(by)$ it evolves as 
\begin{equation}
  \bar\Os[\tsf{bx}]_h \mapsto \bar \Os[\tsf{gy}]_h \mapsto \bar\Os[\tsf{bx}]_h
  \mapsto \cdots 
  \label{eq:evolve2}
\end{equation}
Finally, under $(rgb)(xzy)$ we get 
\begin{equation} 
  \bar\Os[\tsf{bx}]_h \mapsto \bar \Os[\tsf{rz}]_h \mapsto \bar\Os[\tsf{gy}]_h
  \mapsto \bar\Os[\tsf{bx}]_h \mapsto \cdots 
  \label{eq:evolve3}
\end{equation}
Notably, it returns to an eigenstate of $\bar\X_3$ only after $2$ Floquet
periods in $(rx)(gy)(bz)$ and $(rz)(gx)(by)$, and after $3$ periods in
$(rgb)(xzy)$. At other times the operator maps to $\bar{\textsf{Y}}$ or
$\bar\Z$, and the expectation of $\bar\X_3$ is $0$. Starting in a
$\plus1$-eigenstate of $\bar\X_1$, $\bar\X_2$, $\bar\X_3$, and $\bar\X_4$ (denoted
$\ket{\text{++++}}$), the squared expectation of $\bar\X_3$ should show
period-doubling oscillations between $1$ and $0$ at $\mathbf p = (0,1)$ and
$(1,0)$, and period-tripling behavior at $\mathbf p = (1,1)$. The Fourier
components of $G_{\bar X_3,\, \ket{\text{++++}}}$ are plotted in
Fig.~\ref{fig:ex1_g}, with nonvanishing values of $g(2)$ and $g(3)$ appearing
only in these predicted corners. 

\begin{figure}
  \begin{center}
    \includegraphics[width=\columnwidth]{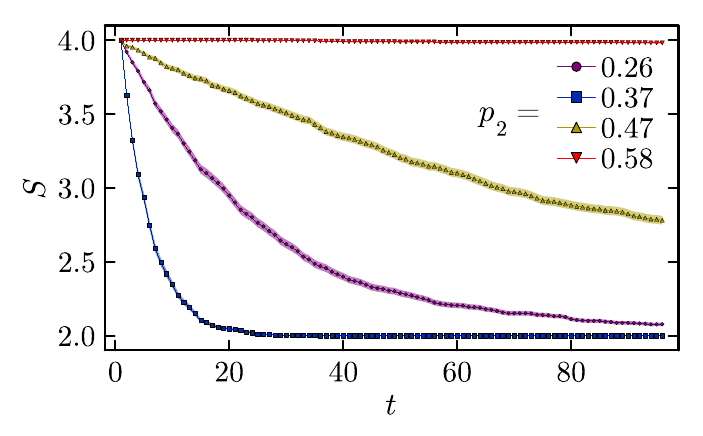}
  \end{center}
  \vskip -0.7cm
  \caption{Average von Neumann entropy $S = \rho\log\rho$ of a maximally-mixed
  logical state with $S_0=4$ and evolving under Eq.~\eqref{eq:ex1} at various
  values of $p_2$ with $p_1= 0$ (thus reducing to a $1$-component disorder
  model). Ribbon shows the standard error of the mean based on $N=508$
  repetitions. Near the critical value of $p_c = 0.347\ldots$, the entropy
  approaches the long-term value of $S_\infty = 2$. Away from there, the entropy
  remains at $S=4$.}
  \label{fig:ex1_entropy}
\end{figure}

\begin{figure}
  \begin{center}
    \includegraphics[width=0.99\columnwidth]{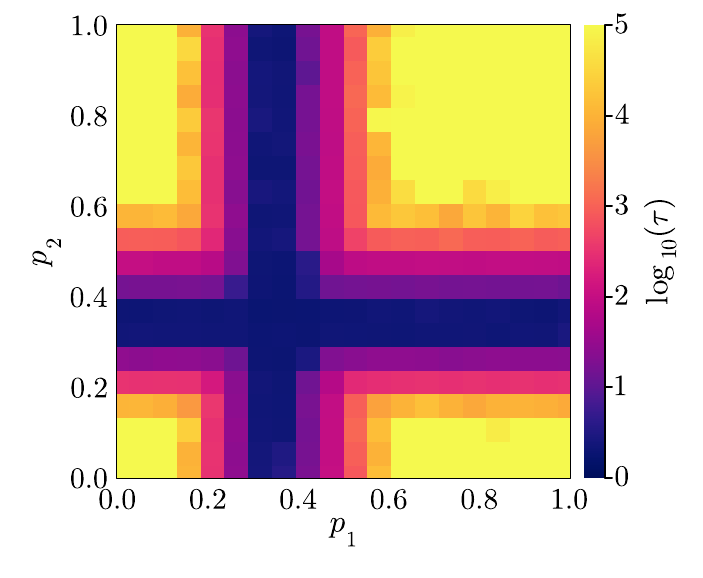}
  \end{center}
  \vskip -0.7cm
  \caption{Average purification timescale $\tau=1/\Gamma$ from
  Eq.~\eqref{eq:entropy} for a maximally-mixed logical state with $S_0=4,
  S_\infty = 2$ evolving under Eq.~\eqref{eq:ex1} at various values of $p_1,
  p_2$, up to $t \leq 96$. Exactly $2$ logical qubits are measured out when
  tuned near the critical lines at $p_c \sim 0.35$ (dark blue regions).}
  \label{fig:ex1_tau}
\end{figure}

\begin{figure}[t]
  \begin{center}
    \includegraphics[width=0.99\linewidth]{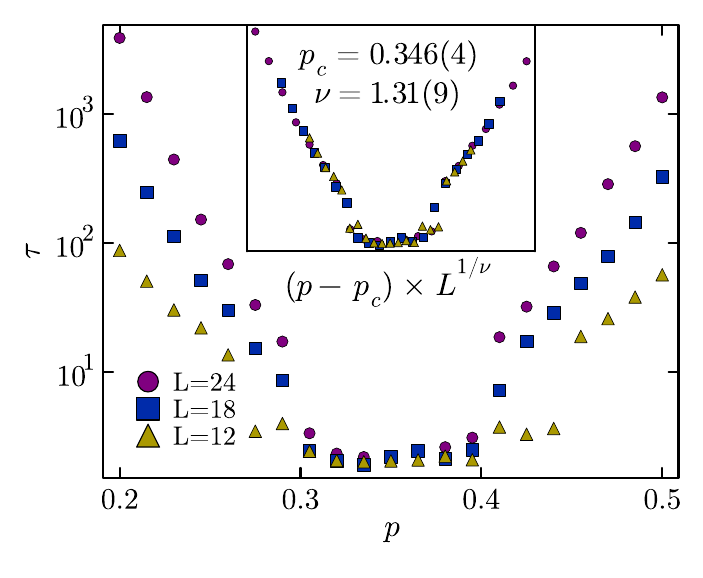}
  \end{center}
  \vskip -0.7cm
  \caption{Average purification timescale $\tau$ for a maximally-mixed logical
  state evolving under Eq.~\eqref{eq:ex1} with the trajectory $\mathbf p(p) =
  (0, p)$ near the critical point. We run the simulation at various values of
  linear system size, $L$. Inset shows the finite-size scaling collapse under
  the estimated parameters of $p_c = 0.346(4)$ and $\nu=1.31(9)$ using the
  functional form $(p-p_c)L^{1/\nu}$. These are consistent with the theoretical
  $p_c= 0.347\ldots$ and $\nu=1.\dot{3}$ for bond percolation on the triangular
  lattice.}
  \label{fig:ex1_crit}
\end{figure}

These automorphisms in Eq.~\eqref{eq:ex1_automorphisms} satisfy the conditions
in Section~\ref{sec:logically} and therefore it is possible for there to exist
a logically-protected trajectory through the parameter space. We first plot the
purification dynamics in Fig.~\ref{fig:ex1_entropy}, for illustrative values of
$\mathbf p$. As expected from Section~\ref{sec:competing}, two logical qubits
are measured out near the critical point, which we find to be around $p \sim
0.37$. Taking a parameter sweep of all $p_1, p_2 \in [0,1]$, we get
Fig.~\ref{fig:ex1_tau} that shows the average purification decay rate
$\tau=1/\Gamma$. Near the critical lines $p\sim0.35$, we get a finite decay
rate, with $\tau \rightarrow \infty$ elsewhere. To determine this critical
value more precisely, we take the trajectory $\mathbf p(p) = (0,p)$.
Figure~\ref{fig:ex1_crit} shows the average purification dynamics at different
values of linear system size $L$, and presents a scaling-collapse of the form
$(p-p_c)L^{1/\nu}$ that shows scale-invariant behaviour \cite{vuStable2024}.
Using finite size scaling methods \cite{staufferIntroduction2003}, we estimate
critical values of $p_c = 0.346(4)$ and $\nu=1.31(9)$, consistent with the
theoretical values for bond percolation on a triangular lattice
\cite{sykesExact1964}. The trajectory $\mathbf p(p) = (p,p)$ that traverses
along the diagonal of Fig.~\ref{fig:ex1_crit} will also exhibit the same
critical values.


\begin{thebibliography}{99}\makeatletter
\providecommand \@ifxundefined [1]{\@ifx{#1\undefined}
}\providecommand \@ifnum [1]{\ifnum #1\expandafter \@firstoftwo
 \else \expandafter \@secondoftwo
 \fi
}\providecommand \@ifx [1]{\ifx #1\expandafter \@firstoftwo
 \else \expandafter \@secondoftwo
 \fi
}\providecommand \natexlab [1]{#1}\providecommand \enquote  [1]{``#1''}\providecommand \bibnamefont  [1]{#1}\providecommand \bibfnamefont [1]{#1}\providecommand \citenamefont [1]{#1}\providecommand \href@noop [0]{\@secondoftwo}\providecommand \href [0]{\begingroup \@sanitize@url \@href}\providecommand \@href[1]{\@@startlink{#1}\@@href}\providecommand \@@href[1]{\endgroup#1\@@endlink}\providecommand \@sanitize@url [0]{\catcode `\\12\catcode `\$12\catcode `\&12\catcode `\#12\catcode `\^12\catcode `\_12\catcode `\%12\relax}\providecommand \@@startlink[1]{}\providecommand \@@endlink[0]{}\providecommand \url  [0]{\begingroup\@sanitize@url \@url }\providecommand \@url [1]{\endgroup\@href {#1}{\urlprefix }}\providecommand \urlprefix  [0]{URL }\providecommand \Eprint [0]{\href }\providecommand \doibase [0]{https://doi.org/}\providecommand \selectlanguage [0]{\@gobble}\providecommand \bibinfo  [0]{\@secondoftwo}\providecommand \bibfield  [0]{\@secondoftwo}\providecommand \translation [1]{[#1]}\providecommand \BibitemOpen [0]{}\providecommand \bibitemStop [0]{}\providecommand \bibitemNoStop [0]{.\EOS\space}\providecommand \EOS [0]{\spacefactor3000\relax}\providecommand \BibitemShut  [1]{\csname bibitem#1\endcsname}\let\auto@bib@innerbib\@empty
\bibitem [{\citenamefont {Steane}(1996)}]{steaneError1996}\BibitemOpen
  \bibfield  {author} {\bibinfo {author} {\bibfnamefont {A.~M.}\ \bibnamefont {Steane}},\ }\bibfield  {title} {\bibinfo {title} {Error {{Correcting Codes}} in {{Quantum Theory}}},\ }\href {https://doi.org/10.1103/PhysRevLett.77.793} {\bibfield  {journal} {\bibinfo  {journal} {Physical Review Letters}\ }\textbf {\bibinfo {volume} {77}},\ \bibinfo {pages} {793} (\bibinfo {year} {1996})}\BibitemShut {NoStop}\bibitem [{\citenamefont {Gottesman}(1997)}]{gottesmanStabilizer1997}\BibitemOpen
  \bibfield  {author} {\bibinfo {author} {\bibfnamefont {D.~E.}\ \bibnamefont {Gottesman}},\ }\emph {\bibinfo {title} {Stabilizer {{Codes}} and {{Quantum Error Correction}}}},\ \href {https://doi.org/10.7907/RZR7-DT72} {Ph.D. thesis},\ \bibinfo  {school} {California Institute of Technology} (\bibinfo {year} {1997})\BibitemShut {NoStop}\bibitem [{\citenamefont {Kitaev}(1997{\natexlab{a}})}]{kitaevQuantum1997}\BibitemOpen
  \bibfield  {author} {\bibinfo {author} {\bibfnamefont {A.~Y.}\ \bibnamefont {Kitaev}},\ }\bibfield  {title} {\bibinfo {title} {Quantum computations: Algorithms and error correction},\ }\href {https://doi.org/10.1070/RM1997v052n06ABEH002155} {\bibfield  {journal} {\bibinfo  {journal} {Russian Mathematical Surveys}\ }\textbf {\bibinfo {volume} {52}},\ \bibinfo {pages} {1191} (\bibinfo {year} {1997}{\natexlab{a}})}\BibitemShut {NoStop}\bibitem [{\citenamefont {Kitaev}(1997{\natexlab{b}})}]{kitaevQuantum1997a}\BibitemOpen
  \bibfield  {author} {\bibinfo {author} {\bibfnamefont {A.~{\relax Yu}.}\ \bibnamefont {Kitaev}},\ }\bibfield  {title} {\bibinfo {title} {Quantum {{Error Correction}} with {{Imperfect Gates}}},\ }in\ \href {https://doi.org/10.1007/978-1-4615-5923-8_19} {\emph {\bibinfo {booktitle} {Quantum {{Communication}}, {{Computing}}, and {{Measurement}}}}},\ \bibinfo {editor} {edited by\ \bibinfo {editor} {\bibfnamefont {O.}~\bibnamefont {Hirota}}, \bibinfo {editor} {\bibfnamefont {A.~S.}\ \bibnamefont {Holevo}},\ and\ \bibinfo {editor} {\bibfnamefont {C.~M.}\ \bibnamefont {Caves}}}\ (\bibinfo  {publisher} {Springer US},\ \bibinfo {address} {Boston, MA},\ \bibinfo {year} {1997})\ pp.\ \bibinfo {pages} {181--188}\BibitemShut {NoStop}\bibitem [{\citenamefont {Knill}\ and\ \citenamefont {Laflamme}(1997)}]{knillTheory1997}\BibitemOpen
  \bibfield  {author} {\bibinfo {author} {\bibfnamefont {E.}~\bibnamefont {Knill}}\ and\ \bibinfo {author} {\bibfnamefont {R.}~\bibnamefont {Laflamme}},\ }\bibfield  {title} {\bibinfo {title} {Theory of quantum error-correcting codes},\ }\href {https://doi.org/10.1103/PhysRevA.55.900} {\bibfield  {journal} {\bibinfo  {journal} {Physical Review A}\ }\textbf {\bibinfo {volume} {55}},\ \bibinfo {pages} {900} (\bibinfo {year} {1997})}\BibitemShut {NoStop}\bibitem [{\citenamefont {Preskill}(1998)}]{preskillFaultTolerant1998}\BibitemOpen
  \bibfield  {author} {\bibinfo {author} {\bibfnamefont {J.}~\bibnamefont {Preskill}},\ }\bibinfo {title} {Fault-{{Tolerant Quantum Computation}}},\ in\ \href {https://doi.org/10.1142/9789812385253_0008} {\emph {\bibinfo {booktitle} {Introduction to {{Quantum Computation}} and {{Information}}}}}\ (\bibinfo  {publisher} {WORLD SCIENTIFIC},\ \bibinfo {year} {1998})\ pp.\ \bibinfo {pages} {213--269}\BibitemShut {NoStop}\bibitem [{\citenamefont {Shor}(1997)}]{shorFaulttolerant1997}\BibitemOpen
  \bibfield  {author} {\bibinfo {author} {\bibfnamefont {P.~W.}\ \bibnamefont {Shor}},\ }\href@noop {} {\bibinfo {title} {Fault-tolerant quantum computation}} (\bibinfo {year} {1997}),\ \Eprint {https://arxiv.org/abs/quant-ph/9605011} {arXiv:quant-ph/9605011} \BibitemShut {NoStop}\bibitem [{\citenamefont {Knill}\ \emph {et~al.}(2000)\citenamefont {Knill}, \citenamefont {Laflamme},\ and\ \citenamefont {Viola}}]{knillTheory2000}\BibitemOpen
  \bibfield  {author} {\bibinfo {author} {\bibfnamefont {E.}~\bibnamefont {Knill}}, \bibinfo {author} {\bibfnamefont {R.}~\bibnamefont {Laflamme}},\ and\ \bibinfo {author} {\bibfnamefont {L.}~\bibnamefont {Viola}},\ }\bibfield  {title} {\bibinfo {title} {Theory of {{Quantum Error Correction}} for {{General Noise}}},\ }\href {https://doi.org/10.1103/PhysRevLett.84.2525} {\bibfield  {journal} {\bibinfo  {journal} {Physical Review Letters}\ }\textbf {\bibinfo {volume} {84}},\ \bibinfo {pages} {2525} (\bibinfo {year} {2000})}\BibitemShut {NoStop}\bibitem [{\citenamefont {Schuster}\ \emph {et~al.}(2024)\citenamefont {Schuster}, \citenamefont {Yin}, \citenamefont {Gao},\ and\ \citenamefont {Yao}}]{schusterPolynomialtime2024}\BibitemOpen
  \bibfield  {author} {\bibinfo {author} {\bibfnamefont {T.}~\bibnamefont {Schuster}}, \bibinfo {author} {\bibfnamefont {C.}~\bibnamefont {Yin}}, \bibinfo {author} {\bibfnamefont {X.}~\bibnamefont {Gao}},\ and\ \bibinfo {author} {\bibfnamefont {N.~Y.}\ \bibnamefont {Yao}},\ }\href@noop {} {\bibinfo {title} {A polynomial-time classical algorithm for noisy quantum circuits}} (\bibinfo {year} {2024}),\ \Eprint {https://arxiv.org/abs/2407.12768} {arXiv:2407.12768} \BibitemShut {NoStop}\bibitem [{\citenamefont {Wen}(1989)}]{wenVacuum1989}\BibitemOpen
  \bibfield  {author} {\bibinfo {author} {\bibfnamefont {X.~G.}\ \bibnamefont {Wen}},\ }\bibfield  {title} {\bibinfo {title} {Vacuum degeneracy of chiral spin states in compactified space},\ }\href {https://doi.org/10.1103/PhysRevB.40.7387} {\bibfield  {journal} {\bibinfo  {journal} {Physical Review B}\ }\textbf {\bibinfo {volume} {40}},\ \bibinfo {pages} {7387} (\bibinfo {year} {1989})}\BibitemShut {NoStop}\bibitem [{\citenamefont {Witten}(1989)}]{wittenQuantum1989}\BibitemOpen
  \bibfield  {author} {\bibinfo {author} {\bibfnamefont {E.}~\bibnamefont {Witten}},\ }\bibfield  {title} {\bibinfo {title} {Quantum field theory and the {{Jones}} polynomial},\ }\href {https://doi.org/10.1007/BF01217730} {\bibfield  {journal} {\bibinfo  {journal} {Communications in Mathematical Physics}\ }\textbf {\bibinfo {volume} {121}},\ \bibinfo {pages} {351} (\bibinfo {year} {1989})}\BibitemShut {NoStop}\bibitem [{\citenamefont {Poulin}(2005)}]{poulinStabilizer2005}\BibitemOpen
  \bibfield  {author} {\bibinfo {author} {\bibfnamefont {D.}~\bibnamefont {Poulin}},\ }\bibfield  {title} {\bibinfo {title} {Stabilizer {{Formalism}} for {{Operator Quantum Error Correction}}},\ }\href {https://doi.org/10.1103/PhysRevLett.95.230504} {\bibfield  {journal} {\bibinfo  {journal} {Physical Review Letters}\ }\textbf {\bibinfo {volume} {95}},\ \bibinfo {pages} {230504} (\bibinfo {year} {2005})},\ \Eprint {https://arxiv.org/abs/quant-ph/0508131} {arXiv:quant-ph/0508131} \BibitemShut {NoStop}\bibitem [{\citenamefont {Fu}\ and\ \citenamefont {Gottesman}(2024)}]{fuError2024}\BibitemOpen
  \bibfield  {author} {\bibinfo {author} {\bibfnamefont {X.}~\bibnamefont {Fu}}\ and\ \bibinfo {author} {\bibfnamefont {D.}~\bibnamefont {Gottesman}},\ }\href@noop {} {\bibinfo {title} {Error {{Correction}} in {{Dynamical Codes}}}} (\bibinfo {year} {2024}),\ \Eprint {https://arxiv.org/abs/2403.04163} {arXiv:2403.04163 [quant-ph]} \BibitemShut {NoStop}\bibitem [{\citenamefont {Davydova}\ \emph {et~al.}(2023)\citenamefont {Davydova}, \citenamefont {Tantivasadakarn},\ and\ \citenamefont {Balasubramanian}}]{davydovaFloquet2023}\BibitemOpen
  \bibfield  {author} {\bibinfo {author} {\bibfnamefont {M.}~\bibnamefont {Davydova}}, \bibinfo {author} {\bibfnamefont {N.}~\bibnamefont {Tantivasadakarn}},\ and\ \bibinfo {author} {\bibfnamefont {S.}~\bibnamefont {Balasubramanian}},\ }\bibfield  {title} {\bibinfo {title} {Floquet {{Codes}} without {{Parent Subsystem Codes}}},\ }\href {https://doi.org/10.1103/PRXQuantum.4.020341} {\bibfield  {journal} {\bibinfo  {journal} {PRX Quantum}\ }\textbf {\bibinfo {volume} {4}},\ \bibinfo {pages} {020341} (\bibinfo {year} {2023})}\BibitemShut {NoStop}\bibitem [{\citenamefont {Zhu}\ and\ \citenamefont {Trebst}(2023)}]{zhuQubit2023}\BibitemOpen
  \bibfield  {author} {\bibinfo {author} {\bibfnamefont {G.-Y.}\ \bibnamefont {Zhu}}\ and\ \bibinfo {author} {\bibfnamefont {S.}~\bibnamefont {Trebst}},\ }\href@noop {} {\bibinfo {title} {Qubit fractionalization and emergent {{Majorana}} liquid in the honeycomb {{Floquet}} code induced by coherent errors and weak measurements}} (\bibinfo {year} {2023}),\ \Eprint {https://arxiv.org/abs/2311.08450} {arXiv:2311.08450 [cond-mat, physics:quant-ph]} \BibitemShut {NoStop}\bibitem [{\citenamefont {Zhu}\ \emph {et~al.}(2023)\citenamefont {Zhu}, \citenamefont {Tantivasadakarn}, \citenamefont {Vishwanath}, \citenamefont {Trebst},\ and\ \citenamefont {Verresen}}]{zhuNishimoris2023}\BibitemOpen
  \bibfield  {author} {\bibinfo {author} {\bibfnamefont {G.-Y.}\ \bibnamefont {Zhu}}, \bibinfo {author} {\bibfnamefont {N.}~\bibnamefont {Tantivasadakarn}}, \bibinfo {author} {\bibfnamefont {A.}~\bibnamefont {Vishwanath}}, \bibinfo {author} {\bibfnamefont {S.}~\bibnamefont {Trebst}},\ and\ \bibinfo {author} {\bibfnamefont {R.}~\bibnamefont {Verresen}},\ }\bibfield  {title} {\bibinfo {title} {Nishimori's {{Cat}}: {{Stable Long-Range Entanglement}} from {{Finite-Depth Unitaries}} and {{Weak Measurements}}},\ }\href {https://doi.org/10.1103/PhysRevLett.131.200201} {\bibfield  {journal} {\bibinfo  {journal} {Physical Review Letters}\ }\textbf {\bibinfo {volume} {131}},\ \bibinfo {pages} {200201} (\bibinfo {year} {2023})}\BibitemShut {NoStop}\bibitem [{\citenamefont {Hastings}\ and\ \citenamefont {Haah}(2021)}]{hastingsDynamically2021}\BibitemOpen
  \bibfield  {author} {\bibinfo {author} {\bibfnamefont {M.~B.}\ \bibnamefont {Hastings}}\ and\ \bibinfo {author} {\bibfnamefont {J.}~\bibnamefont {Haah}},\ }\bibfield  {title} {\bibinfo {title} {Dynamically {{Generated Logical Qubits}}},\ }\href {https://doi.org/10.22331/q-2021-10-19-564} {\bibfield  {journal} {\bibinfo  {journal} {Quantum}\ }\textbf {\bibinfo {volume} {5}},\ \bibinfo {pages} {564} (\bibinfo {year} {2021})}\BibitemShut {NoStop}\bibitem [{\citenamefont {Vuillot}(2021)}]{vuillotPlanar2021}\BibitemOpen
  \bibfield  {author} {\bibinfo {author} {\bibfnamefont {C.}~\bibnamefont {Vuillot}},\ }\href@noop {} {\bibinfo {title} {Planar {{Floquet Codes}}}} (\bibinfo {year} {2021}),\ \Eprint {https://arxiv.org/abs/2110.05348} {arXiv:2110.05348 [quant-ph]} \BibitemShut {NoStop}\bibitem [{\citenamefont {Haah}\ and\ \citenamefont {Hastings}(2022)}]{haahBoundaries2022}\BibitemOpen
  \bibfield  {author} {\bibinfo {author} {\bibfnamefont {J.}~\bibnamefont {Haah}}\ and\ \bibinfo {author} {\bibfnamefont {M.~B.}\ \bibnamefont {Hastings}},\ }\bibfield  {title} {\bibinfo {title} {Boundaries for the {{Honeycomb Code}}},\ }\href {https://doi.org/10.22331/q-2022-04-21-693} {\bibfield  {journal} {\bibinfo  {journal} {Quantum}\ }\textbf {\bibinfo {volume} {6}},\ \bibinfo {pages} {693} (\bibinfo {year} {2022})},\ \Eprint {https://arxiv.org/abs/2110.09545} {arXiv:2110.09545 [quant-ph]} \BibitemShut {NoStop}\bibitem [{\citenamefont {Ellison}\ \emph {et~al.}(2023{\natexlab{a}})\citenamefont {Ellison}, \citenamefont {Sullivan},\ and\ \citenamefont {Dua}}]{ellisonFloquet2023}\BibitemOpen
  \bibfield  {author} {\bibinfo {author} {\bibfnamefont {T.~D.}\ \bibnamefont {Ellison}}, \bibinfo {author} {\bibfnamefont {J.}~\bibnamefont {Sullivan}},\ and\ \bibinfo {author} {\bibfnamefont {A.}~\bibnamefont {Dua}},\ }\href@noop {} {\bibinfo {title} {Floquet codes with a twist}} (\bibinfo {year} {2023}{\natexlab{a}}),\ \Eprint {https://arxiv.org/abs/2306.08027} {arXiv:2306.08027 [quant-ph]} \BibitemShut {NoStop}\bibitem [{\citenamefont {Bravyi}\ and\ \citenamefont {Kitaev}(1998)}]{bravyiQuantum1998}\BibitemOpen
  \bibfield  {author} {\bibinfo {author} {\bibfnamefont {S.~B.}\ \bibnamefont {Bravyi}}\ and\ \bibinfo {author} {\bibfnamefont {A.~Y.}\ \bibnamefont {Kitaev}},\ }\bibfield  {title} {\bibinfo {title} {Quantum codes on a lattice with boundary},\ }\href@noop {} {\bibfield  {journal} {\bibinfo  {journal} {arXiv:quant-ph/9811052}\ } (\bibinfo {year} {1998})},\ \Eprint {https://arxiv.org/abs/quant-ph/9811052} {arXiv:quant-ph/9811052} \BibitemShut {NoStop}\bibitem [{\citenamefont {Dennis}\ \emph {et~al.}(2002)\citenamefont {Dennis}, \citenamefont {Kitaev}, \citenamefont {Landahl},\ and\ \citenamefont {Preskill}}]{dennisTopological2002}\BibitemOpen
  \bibfield  {author} {\bibinfo {author} {\bibfnamefont {E.}~\bibnamefont {Dennis}}, \bibinfo {author} {\bibfnamefont {A.}~\bibnamefont {Kitaev}}, \bibinfo {author} {\bibfnamefont {A.}~\bibnamefont {Landahl}},\ and\ \bibinfo {author} {\bibfnamefont {J.}~\bibnamefont {Preskill}},\ }\bibfield  {title} {\bibinfo {title} {Topological quantum memory},\ }\href {https://doi.org/10.1063/1.1499754} {\bibfield  {journal} {\bibinfo  {journal} {Journal of Mathematical Physics}\ }\textbf {\bibinfo {volume} {43}},\ \bibinfo {pages} {4452} (\bibinfo {year} {2002})},\ \Eprint {https://arxiv.org/abs/quant-ph/0110143} {arXiv:quant-ph/0110143} \BibitemShut {NoStop}\bibitem [{\citenamefont {Kitaev}(2003)}]{kitaevFaulttolerant2003}\BibitemOpen
  \bibfield  {author} {\bibinfo {author} {\bibfnamefont {A.~Y.}\ \bibnamefont {Kitaev}},\ }\bibfield  {title} {\bibinfo {title} {Fault-tolerant quantum computation by anyons},\ }\href {https://doi.org/10.1016/S0003-4916(02)00018-0} {\bibfield  {journal} {\bibinfo  {journal} {Annals of Physics}\ }\textbf {\bibinfo {volume} {303}},\ \bibinfo {pages} {2} (\bibinfo {year} {2003})},\ \Eprint {https://arxiv.org/abs/quant-ph/9707021} {arXiv:quant-ph/9707021} \BibitemShut {NoStop}\bibitem [{\citenamefont {Kitaev}(2006)}]{kitaevAnyons2006}\BibitemOpen
  \bibfield  {author} {\bibinfo {author} {\bibfnamefont {A.}~\bibnamefont {Kitaev}},\ }\bibfield  {title} {\bibinfo {title} {Anyons in an exactly solved model and beyond},\ }\href {https://doi.org/10.1016/j.aop.2005.10.005} {\bibfield  {journal} {\bibinfo  {journal} {Annals of Physics}\ }\textbf {\bibinfo {volume} {321}},\ \bibinfo {pages} {2} (\bibinfo {year} {2006})}\BibitemShut {NoStop}\bibitem [{\citenamefont {Kesselring}\ \emph {et~al.}(2018)\citenamefont {Kesselring}, \citenamefont {Pastawski}, \citenamefont {Eisert},\ and\ \citenamefont {Brown}}]{kesselringBoundaries2018}\BibitemOpen
  \bibfield  {author} {\bibinfo {author} {\bibfnamefont {M.~S.}\ \bibnamefont {Kesselring}}, \bibinfo {author} {\bibfnamefont {F.}~\bibnamefont {Pastawski}}, \bibinfo {author} {\bibfnamefont {J.}~\bibnamefont {Eisert}},\ and\ \bibinfo {author} {\bibfnamefont {B.~J.}\ \bibnamefont {Brown}},\ }\bibfield  {title} {\bibinfo {title} {The boundaries and twist defects of the color code and their applications to topological quantum computation},\ }\href {https://doi.org/10.22331/q-2018-10-19-101} {\bibfield  {journal} {\bibinfo  {journal} {Quantum}\ }\textbf {\bibinfo {volume} {2}},\ \bibinfo {pages} {101} (\bibinfo {year} {2018})}\BibitemShut {NoStop}\bibitem [{\citenamefont {Kesselring}\ \emph {et~al.}(2024)\citenamefont {Kesselring}, \citenamefont {Magdalena De La~Fuente}, \citenamefont {Thomsen}, \citenamefont {Eisert}, \citenamefont {Bartlett},\ and\ \citenamefont {Brown}}]{kesselringAnyon2024}\BibitemOpen
  \bibfield  {author} {\bibinfo {author} {\bibfnamefont {M.~S.}\ \bibnamefont {Kesselring}}, \bibinfo {author} {\bibfnamefont {J.~C.}\ \bibnamefont {Magdalena De La~Fuente}}, \bibinfo {author} {\bibfnamefont {F.}~\bibnamefont {Thomsen}}, \bibinfo {author} {\bibfnamefont {J.}~\bibnamefont {Eisert}}, \bibinfo {author} {\bibfnamefont {S.~D.}\ \bibnamefont {Bartlett}},\ and\ \bibinfo {author} {\bibfnamefont {B.~J.}\ \bibnamefont {Brown}},\ }\bibfield  {title} {\bibinfo {title} {Anyon {{Condensation}} and the {{Color Code}}},\ }\href {https://doi.org/10.1103/PRXQuantum.5.010342} {\bibfield  {journal} {\bibinfo  {journal} {PRX Quantum}\ }\textbf {\bibinfo {volume} {5}},\ \bibinfo {pages} {010342} (\bibinfo {year} {2024})},\ \Eprint {https://arxiv.org/abs/2212.00042} {arXiv:2212.00042 [cond-mat, physics:quant-ph]} \BibitemShut {NoStop}\bibitem [{\citenamefont {Bombin}(2010{\natexlab{a}})}]{bombinTopological2010a}\BibitemOpen
  \bibfield  {author} {\bibinfo {author} {\bibfnamefont {H.}~\bibnamefont {Bombin}},\ }\bibfield  {title} {\bibinfo {title} {Topological {{Order}} with a {{Twist}}: {{Ising Anyons}} from an {{Abelian Model}}},\ }\href {https://doi.org/10.1103/PhysRevLett.105.030403} {\bibfield  {journal} {\bibinfo  {journal} {Physical Review Letters}\ }\textbf {\bibinfo {volume} {105}},\ \bibinfo {pages} {030403} (\bibinfo {year} {2010}{\natexlab{a}})},\ \Eprint {https://arxiv.org/abs/1004.1838} {arXiv:1004.1838} \BibitemShut {NoStop}\bibitem [{\citenamefont {Brown}\ \emph {et~al.}(2013)\citenamefont {Brown}, \citenamefont {Bartlett}, \citenamefont {Doherty},\ and\ \citenamefont {Barrett}}]{brownTopological2013}\BibitemOpen
  \bibfield  {author} {\bibinfo {author} {\bibfnamefont {B.~J.}\ \bibnamefont {Brown}}, \bibinfo {author} {\bibfnamefont {S.~D.}\ \bibnamefont {Bartlett}}, \bibinfo {author} {\bibfnamefont {A.~C.}\ \bibnamefont {Doherty}},\ and\ \bibinfo {author} {\bibfnamefont {S.~D.}\ \bibnamefont {Barrett}},\ }\bibfield  {title} {\bibinfo {title} {Topological {{Entanglement Entropy}} with a {{Twist}}},\ }\href {https://doi.org/10.1103/PhysRevLett.111.220402} {\bibfield  {journal} {\bibinfo  {journal} {Physical Review Letters}\ }\textbf {\bibinfo {volume} {111}},\ \bibinfo {pages} {220402} (\bibinfo {year} {2013})},\ \Eprint {https://arxiv.org/abs/1303.4455} {arXiv:1303.4455} \BibitemShut {NoStop}\bibitem [{\citenamefont {Aasen}\ \emph {et~al.}(2022)\citenamefont {Aasen}, \citenamefont {Wang},\ and\ \citenamefont {Hastings}}]{aasenAdiabatic2022}\BibitemOpen
  \bibfield  {author} {\bibinfo {author} {\bibfnamefont {D.}~\bibnamefont {Aasen}}, \bibinfo {author} {\bibfnamefont {Z.}~\bibnamefont {Wang}},\ and\ \bibinfo {author} {\bibfnamefont {M.~B.}\ \bibnamefont {Hastings}},\ }\bibfield  {title} {\bibinfo {title} {Adiabatic paths of {{Hamiltonians}}, symmetries of topological order, and automorphism codes},\ }\href {https://doi.org/10.1103/PhysRevB.106.085122} {\bibfield  {journal} {\bibinfo  {journal} {Physical Review B}\ }\textbf {\bibinfo {volume} {106}},\ \bibinfo {pages} {085122} (\bibinfo {year} {2022})}\BibitemShut {NoStop}\bibitem [{\citenamefont {Potter}\ and\ \citenamefont {Morimoto}(2017)}]{potterDynamically2017}\BibitemOpen
  \bibfield  {author} {\bibinfo {author} {\bibfnamefont {A.~C.}\ \bibnamefont {Potter}}\ and\ \bibinfo {author} {\bibfnamefont {T.}~\bibnamefont {Morimoto}},\ }\bibfield  {title} {\bibinfo {title} {Dynamically enriched topological orders in driven two-dimensional systems},\ }\href {https://doi.org/10.1103/PhysRevB.95.155126} {\bibfield  {journal} {\bibinfo  {journal} {Physical Review B}\ }\textbf {\bibinfo {volume} {95}},\ \bibinfo {pages} {155126} (\bibinfo {year} {2017})}\BibitemShut {NoStop}\bibitem [{\citenamefont {Po}\ \emph {et~al.}(2017)\citenamefont {Po}, \citenamefont {Fidkowski}, \citenamefont {Vishwanath},\ and\ \citenamefont {Potter}}]{poRadical2017}\BibitemOpen
  \bibfield  {author} {\bibinfo {author} {\bibfnamefont {H.~C.}\ \bibnamefont {Po}}, \bibinfo {author} {\bibfnamefont {L.}~\bibnamefont {Fidkowski}}, \bibinfo {author} {\bibfnamefont {A.}~\bibnamefont {Vishwanath}},\ and\ \bibinfo {author} {\bibfnamefont {A.~C.}\ \bibnamefont {Potter}},\ }\bibfield  {title} {\bibinfo {title} {Radical chiral {{Floquet}} phases in a periodically driven {{Kitaev}} model and beyond},\ }\href {https://doi.org/10.1103/PhysRevB.96.245116} {\bibfield  {journal} {\bibinfo  {journal} {Physical Review B}\ }\textbf {\bibinfo {volume} {96}},\ \bibinfo {pages} {245116} (\bibinfo {year} {2017})}\BibitemShut {NoStop}\bibitem [{\citenamefont {Davydova}\ \emph {et~al.}(2024)\citenamefont {Davydova}, \citenamefont {Tantivasadakarn}, \citenamefont {Balasubramanian},\ and\ \citenamefont {Aasen}}]{davydovaQuantum2024}\BibitemOpen
  \bibfield  {author} {\bibinfo {author} {\bibfnamefont {M.}~\bibnamefont {Davydova}}, \bibinfo {author} {\bibfnamefont {N.}~\bibnamefont {Tantivasadakarn}}, \bibinfo {author} {\bibfnamefont {S.}~\bibnamefont {Balasubramanian}},\ and\ \bibinfo {author} {\bibfnamefont {D.}~\bibnamefont {Aasen}},\ }\bibfield  {title} {\bibinfo {title} {Quantum computation from dynamic automorphism codes},\ }\href {https://doi.org/10.22331/q-2024-08-27-1448} {\bibfield  {journal} {\bibinfo  {journal} {Quantum}\ }\textbf {\bibinfo {volume} {8}},\ \bibinfo {pages} {1448} (\bibinfo {year} {2024})}\BibitemShut {NoStop}\bibitem [{\citenamefont {Aasen}\ \emph {et~al.}(2023{\natexlab{a}})\citenamefont {Aasen}, \citenamefont {Haah}, \citenamefont {Bonderson}, \citenamefont {Wang},\ and\ \citenamefont {Hastings}}]{aasenFaultTolerant2023}\BibitemOpen
  \bibfield  {author} {\bibinfo {author} {\bibfnamefont {D.}~\bibnamefont {Aasen}}, \bibinfo {author} {\bibfnamefont {J.}~\bibnamefont {Haah}}, \bibinfo {author} {\bibfnamefont {P.}~\bibnamefont {Bonderson}}, \bibinfo {author} {\bibfnamefont {Z.}~\bibnamefont {Wang}},\ and\ \bibinfo {author} {\bibfnamefont {M.}~\bibnamefont {Hastings}},\ }\href@noop {} {\bibinfo {title} {Fault-{{Tolerant Hastings-Haah Codes}} in the {{Presence}} of {{Dead Qubits}}}} (\bibinfo {year} {2023}{\natexlab{a}}),\ \Eprint {https://arxiv.org/abs/2307.03715} {arXiv:2307.03715 [quant-ph]} \BibitemShut {NoStop}\bibitem [{\citenamefont {McLauchlan}\ \emph {et~al.}(2024)\citenamefont {McLauchlan}, \citenamefont {Geh{\'e}r},\ and\ \citenamefont {Moylett}}]{mclauchlanAccommodating2024}\BibitemOpen
  \bibfield  {author} {\bibinfo {author} {\bibfnamefont {C.}~\bibnamefont {McLauchlan}}, \bibinfo {author} {\bibfnamefont {G.~P.}\ \bibnamefont {Geh{\'e}r}},\ and\ \bibinfo {author} {\bibfnamefont {A.~E.}\ \bibnamefont {Moylett}},\ }\href@noop {} {\bibinfo {title} {Accommodating {{Fabrication Defects}} on {{Floquet Codes}} with {{Minimal Hardware Requirements}}}} (\bibinfo {year} {2024}),\ \Eprint {https://arxiv.org/abs/2405.15854} {arXiv:2405.15854 [quant-ph]} \BibitemShut {NoStop}\bibitem [{\citenamefont {Vu}\ \emph {et~al.}(2024)\citenamefont {Vu}, \citenamefont {Lavasani}, \citenamefont {Lee},\ and\ \citenamefont {Fisher}}]{vuStable2024}\BibitemOpen
  \bibfield  {author} {\bibinfo {author} {\bibfnamefont {D.}~\bibnamefont {Vu}}, \bibinfo {author} {\bibfnamefont {A.}~\bibnamefont {Lavasani}}, \bibinfo {author} {\bibfnamefont {J.~Y.}\ \bibnamefont {Lee}},\ and\ \bibinfo {author} {\bibfnamefont {M.~P.~A.}\ \bibnamefont {Fisher}},\ }\bibfield  {title} {\bibinfo {title} {Stable {{Measurement-Induced Floquet Enriched Topological Order}}},\ }\href {https://doi.org/10.1103/PhysRevLett.132.070401} {\bibfield  {journal} {\bibinfo  {journal} {Physical Review Letters}\ }\textbf {\bibinfo {volume} {132}},\ \bibinfo {pages} {070401} (\bibinfo {year} {2024})},\ \Eprint {https://arxiv.org/abs/2303.01533} {arXiv:2303.01533 [cond-mat, physics:quant-ph]} \BibitemShut {NoStop}\bibitem [{\citenamefont {Sriram}\ \emph {et~al.}(2023)\citenamefont {Sriram}, \citenamefont {Rakovszky}, \citenamefont {Khemani},\ and\ \citenamefont {Ippoliti}}]{sriramTopology2023}\BibitemOpen
  \bibfield  {author} {\bibinfo {author} {\bibfnamefont {A.}~\bibnamefont {Sriram}}, \bibinfo {author} {\bibfnamefont {T.}~\bibnamefont {Rakovszky}}, \bibinfo {author} {\bibfnamefont {V.}~\bibnamefont {Khemani}},\ and\ \bibinfo {author} {\bibfnamefont {M.}~\bibnamefont {Ippoliti}},\ }\bibfield  {title} {\bibinfo {title} {Topology, criticality, and dynamically generated qubits in a stochastic measurement-only {{Kitaev}} model},\ }\href {https://doi.org/10.1103/PhysRevB.108.094304} {\bibfield  {journal} {\bibinfo  {journal} {Physical Review B}\ }\textbf {\bibinfo {volume} {108}},\ \bibinfo {pages} {094304} (\bibinfo {year} {2023})},\ \Eprint {https://arxiv.org/abs/2207.07096} {arXiv:2207.07096 [cond-mat, physics:quant-ph]} \BibitemShut {NoStop}\bibitem [{\citenamefont {Li}\ \emph {et~al.}(2018)\citenamefont {Li}, \citenamefont {Chen},\ and\ \citenamefont {Fisher}}]{liQuantum2018}\BibitemOpen
  \bibfield  {author} {\bibinfo {author} {\bibfnamefont {Y.}~\bibnamefont {Li}}, \bibinfo {author} {\bibfnamefont {X.}~\bibnamefont {Chen}},\ and\ \bibinfo {author} {\bibfnamefont {M.~P.~A.}\ \bibnamefont {Fisher}},\ }\bibfield  {title} {\bibinfo {title} {Quantum {{Zeno}} effect and the many-body entanglement transition},\ }\href {https://doi.org/10.1103/PhysRevB.98.205136} {\bibfield  {journal} {\bibinfo  {journal} {Physical Review B}\ }\textbf {\bibinfo {volume} {98}},\ \bibinfo {pages} {205136} (\bibinfo {year} {2018})}\BibitemShut {NoStop}\bibitem [{\citenamefont {Li}\ \emph {et~al.}(2019)\citenamefont {Li}, \citenamefont {Chen},\ and\ \citenamefont {Fisher}}]{liMeasurementdriven2019}\BibitemOpen
  \bibfield  {author} {\bibinfo {author} {\bibfnamefont {Y.}~\bibnamefont {Li}}, \bibinfo {author} {\bibfnamefont {X.}~\bibnamefont {Chen}},\ and\ \bibinfo {author} {\bibfnamefont {M.~P.~A.}\ \bibnamefont {Fisher}},\ }\bibfield  {title} {\bibinfo {title} {Measurement-driven entanglement transition in hybrid quantum circuits},\ }\href {https://doi.org/10.1103/PhysRevB.100.134306} {\bibfield  {journal} {\bibinfo  {journal} {Physical Review B}\ }\textbf {\bibinfo {volume} {100}},\ \bibinfo {pages} {134306} (\bibinfo {year} {2019})}\BibitemShut {NoStop}\bibitem [{\citenamefont {Choi}\ \emph {et~al.}(2020)\citenamefont {Choi}, \citenamefont {Bao}, \citenamefont {Qi},\ and\ \citenamefont {Altman}}]{choiQuantum2020}\BibitemOpen
  \bibfield  {author} {\bibinfo {author} {\bibfnamefont {S.}~\bibnamefont {Choi}}, \bibinfo {author} {\bibfnamefont {Y.}~\bibnamefont {Bao}}, \bibinfo {author} {\bibfnamefont {X.-L.}\ \bibnamefont {Qi}},\ and\ \bibinfo {author} {\bibfnamefont {E.}~\bibnamefont {Altman}},\ }\bibfield  {title} {\bibinfo {title} {Quantum {{Error Correction}} in {{Scrambling Dynamics}} and {{Measurement-Induced Phase Transition}}},\ }\href {https://doi.org/10.1103/PhysRevLett.125.030505} {\bibfield  {journal} {\bibinfo  {journal} {Physical Review Letters}\ }\textbf {\bibinfo {volume} {125}},\ \bibinfo {pages} {030505} (\bibinfo {year} {2020})}\BibitemShut {NoStop}\bibitem [{\citenamefont {Jian}\ \emph {et~al.}(2020)\citenamefont {Jian}, \citenamefont {You}, \citenamefont {Vasseur},\ and\ \citenamefont {Ludwig}}]{jianMeasurementinduced2020}\BibitemOpen
  \bibfield  {author} {\bibinfo {author} {\bibfnamefont {C.-M.}\ \bibnamefont {Jian}}, \bibinfo {author} {\bibfnamefont {Y.-Z.}\ \bibnamefont {You}}, \bibinfo {author} {\bibfnamefont {R.}~\bibnamefont {Vasseur}},\ and\ \bibinfo {author} {\bibfnamefont {A.~W.~W.}\ \bibnamefont {Ludwig}},\ }\bibfield  {title} {\bibinfo {title} {Measurement-induced criticality in random quantum circuits},\ }\href {https://doi.org/10.1103/PhysRevB.101.104302} {\bibfield  {journal} {\bibinfo  {journal} {Physical Review B}\ }\textbf {\bibinfo {volume} {101}},\ \bibinfo {pages} {104302} (\bibinfo {year} {2020})}\BibitemShut {NoStop}\bibitem [{\citenamefont {Ippoliti}\ \emph {et~al.}(2021)\citenamefont {Ippoliti}, \citenamefont {Gullans}, \citenamefont {Gopalakrishnan}, \citenamefont {Huse},\ and\ \citenamefont {Khemani}}]{ippolitiEntanglement2021}\BibitemOpen
  \bibfield  {author} {\bibinfo {author} {\bibfnamefont {M.}~\bibnamefont {Ippoliti}}, \bibinfo {author} {\bibfnamefont {M.~J.}\ \bibnamefont {Gullans}}, \bibinfo {author} {\bibfnamefont {S.}~\bibnamefont {Gopalakrishnan}}, \bibinfo {author} {\bibfnamefont {D.~A.}\ \bibnamefont {Huse}},\ and\ \bibinfo {author} {\bibfnamefont {V.}~\bibnamefont {Khemani}},\ }\bibfield  {title} {\bibinfo {title} {Entanglement {{Phase Transitions}} in {{Measurement-Only Dynamics}}},\ }\href {https://doi.org/10.1103/PhysRevX.11.011030} {\bibfield  {journal} {\bibinfo  {journal} {Physical Review X}\ }\textbf {\bibinfo {volume} {11}},\ \bibinfo {pages} {011030} (\bibinfo {year} {2021})}\BibitemShut {NoStop}\bibitem [{\citenamefont {Fisher}\ \emph {et~al.}(2023)\citenamefont {Fisher}, \citenamefont {Khemani}, \citenamefont {Nahum},\ and\ \citenamefont {Vijay}}]{fisherRandom2023}\BibitemOpen
  \bibfield  {author} {\bibinfo {author} {\bibfnamefont {M.~P.}\ \bibnamefont {Fisher}}, \bibinfo {author} {\bibfnamefont {V.}~\bibnamefont {Khemani}}, \bibinfo {author} {\bibfnamefont {A.}~\bibnamefont {Nahum}},\ and\ \bibinfo {author} {\bibfnamefont {S.}~\bibnamefont {Vijay}},\ }\bibfield  {title} {\bibinfo {title} {Random {{Quantum Circuits}}},\ }\href {https://doi.org/10.1146/annurev-conmatphys-031720-030658} {\bibfield  {journal} {\bibinfo  {journal} {Annual Review of Condensed Matter Physics}\ }\textbf {\bibinfo {volume} {14}},\ \bibinfo {pages} {335} (\bibinfo {year} {2023})}\BibitemShut {NoStop}\bibitem [{\citenamefont {Skinner}\ \emph {et~al.}(2019)\citenamefont {Skinner}, \citenamefont {Ruhman},\ and\ \citenamefont {Nahum}}]{skinnerMeasurementInduced2019}\BibitemOpen
  \bibfield  {author} {\bibinfo {author} {\bibfnamefont {B.}~\bibnamefont {Skinner}}, \bibinfo {author} {\bibfnamefont {J.}~\bibnamefont {Ruhman}},\ and\ \bibinfo {author} {\bibfnamefont {A.}~\bibnamefont {Nahum}},\ }\bibfield  {title} {\bibinfo {title} {Measurement-{{Induced Phase Transitions}} in the {{Dynamics}} of {{Entanglement}}},\ }\href {https://doi.org/10.1103/PhysRevX.9.031009} {\bibfield  {journal} {\bibinfo  {journal} {Physical Review X}\ }\textbf {\bibinfo {volume} {9}},\ \bibinfo {pages} {031009} (\bibinfo {year} {2019})}\BibitemShut {NoStop}\bibitem [{\citenamefont {Bao}\ \emph {et~al.}(2020)\citenamefont {Bao}, \citenamefont {Choi},\ and\ \citenamefont {Altman}}]{baoTheory2020}\BibitemOpen
  \bibfield  {author} {\bibinfo {author} {\bibfnamefont {Y.}~\bibnamefont {Bao}}, \bibinfo {author} {\bibfnamefont {S.}~\bibnamefont {Choi}},\ and\ \bibinfo {author} {\bibfnamefont {E.}~\bibnamefont {Altman}},\ }\bibfield  {title} {\bibinfo {title} {Theory of the phase transition in random unitary circuits with measurements},\ }\href {https://doi.org/10.1103/PhysRevB.101.104301} {\bibfield  {journal} {\bibinfo  {journal} {Physical Review B}\ }\textbf {\bibinfo {volume} {101}},\ \bibinfo {pages} {104301} (\bibinfo {year} {2020})}\BibitemShut {NoStop}\bibitem [{\citenamefont {Ippoliti}\ and\ \citenamefont {Khemani}(2021)}]{ippolitiPostselectionFree2021}\BibitemOpen
  \bibfield  {author} {\bibinfo {author} {\bibfnamefont {M.}~\bibnamefont {Ippoliti}}\ and\ \bibinfo {author} {\bibfnamefont {V.}~\bibnamefont {Khemani}},\ }\bibfield  {title} {\bibinfo {title} {Postselection-{{Free Entanglement Dynamics}} via {{Spacetime Duality}}},\ }\href {https://doi.org/10.1103/PhysRevLett.126.060501} {\bibfield  {journal} {\bibinfo  {journal} {Physical Review Letters}\ }\textbf {\bibinfo {volume} {126}},\ \bibinfo {pages} {060501} (\bibinfo {year} {2021})}\BibitemShut {NoStop}\bibitem [{\citenamefont {Li}\ \emph {et~al.}(2023)\citenamefont {Li}, \citenamefont {Zou}, \citenamefont {Glorioso}, \citenamefont {Altman},\ and\ \citenamefont {Fisher}}]{liCross2023}\BibitemOpen
  \bibfield  {author} {\bibinfo {author} {\bibfnamefont {Y.}~\bibnamefont {Li}}, \bibinfo {author} {\bibfnamefont {Y.}~\bibnamefont {Zou}}, \bibinfo {author} {\bibfnamefont {P.}~\bibnamefont {Glorioso}}, \bibinfo {author} {\bibfnamefont {E.}~\bibnamefont {Altman}},\ and\ \bibinfo {author} {\bibfnamefont {M.~P.~A.}\ \bibnamefont {Fisher}},\ }\bibfield  {title} {\bibinfo {title} {Cross {{Entropy Benchmark}} for {{Measurement-Induced Phase Transitions}}},\ }\href {https://doi.org/10.1103/PhysRevLett.130.220404} {\bibfield  {journal} {\bibinfo  {journal} {Physical Review Letters}\ }\textbf {\bibinfo {volume} {130}},\ \bibinfo {pages} {220404} (\bibinfo {year} {2023})}\BibitemShut {NoStop}\bibitem [{\citenamefont {Li}\ and\ \citenamefont {Fisher}(2021)}]{liStatistical2021}\BibitemOpen
  \bibfield  {author} {\bibinfo {author} {\bibfnamefont {Y.}~\bibnamefont {Li}}\ and\ \bibinfo {author} {\bibfnamefont {M.~P.~A.}\ \bibnamefont {Fisher}},\ }\bibfield  {title} {\bibinfo {title} {Statistical mechanics of quantum error correcting codes},\ }\href {https://doi.org/10.1103/PhysRevB.103.104306} {\bibfield  {journal} {\bibinfo  {journal} {Physical Review B}\ }\textbf {\bibinfo {volume} {103}},\ \bibinfo {pages} {104306} (\bibinfo {year} {2021})}\BibitemShut {NoStop}\bibitem [{\citenamefont {Gullans}\ and\ \citenamefont {Huse}(2020)}]{gullansDynamical2020}\BibitemOpen
  \bibfield  {author} {\bibinfo {author} {\bibfnamefont {M.~J.}\ \bibnamefont {Gullans}}\ and\ \bibinfo {author} {\bibfnamefont {D.~A.}\ \bibnamefont {Huse}},\ }\bibfield  {title} {\bibinfo {title} {Dynamical {{Purification Phase Transition Induced}} by {{Quantum Measurements}}},\ }\href {https://doi.org/10.1103/PhysRevX.10.041020} {\bibfield  {journal} {\bibinfo  {journal} {Physical Review X}\ }\textbf {\bibinfo {volume} {10}},\ \bibinfo {pages} {041020} (\bibinfo {year} {2020})}\BibitemShut {NoStop}\bibitem [{\citenamefont {Nahum}\ and\ \citenamefont {Skinner}(2020)}]{nahumEntanglement2020}\BibitemOpen
  \bibfield  {author} {\bibinfo {author} {\bibfnamefont {A.}~\bibnamefont {Nahum}}\ and\ \bibinfo {author} {\bibfnamefont {B.}~\bibnamefont {Skinner}},\ }\bibfield  {title} {\bibinfo {title} {Entanglement and dynamics of diffusion-annihilation processes with {{Majorana}} defects},\ }\href {https://doi.org/10.1103/PhysRevResearch.2.023288} {\bibfield  {journal} {\bibinfo  {journal} {Physical Review Research}\ }\textbf {\bibinfo {volume} {2}},\ \bibinfo {pages} {023288} (\bibinfo {year} {2020})}\BibitemShut {NoStop}\bibitem [{\citenamefont {Nahum}\ \emph {et~al.}(2021)\citenamefont {Nahum}, \citenamefont {Roy}, \citenamefont {Skinner},\ and\ \citenamefont {Ruhman}}]{nahumMeasurement2021}\BibitemOpen
  \bibfield  {author} {\bibinfo {author} {\bibfnamefont {A.}~\bibnamefont {Nahum}}, \bibinfo {author} {\bibfnamefont {S.}~\bibnamefont {Roy}}, \bibinfo {author} {\bibfnamefont {B.}~\bibnamefont {Skinner}},\ and\ \bibinfo {author} {\bibfnamefont {J.}~\bibnamefont {Ruhman}},\ }\bibfield  {title} {\bibinfo {title} {Measurement and {{Entanglement Phase Transitions}} in {{All-To-All Quantum Circuits}}, on {{Quantum Trees}}, and in {{Landau-Ginsburg Theory}}},\ }\href {https://doi.org/10.1103/PRXQuantum.2.010352} {\bibfield  {journal} {\bibinfo  {journal} {PRX Quantum}\ }\textbf {\bibinfo {volume} {2}},\ \bibinfo {pages} {010352} (\bibinfo {year} {2021})}\BibitemShut {NoStop}\bibitem [{\citenamefont {Zabalo}\ \emph {et~al.}(2020)\citenamefont {Zabalo}, \citenamefont {Gullans}, \citenamefont {Wilson}, \citenamefont {Gopalakrishnan}, \citenamefont {Huse},\ and\ \citenamefont {Pixley}}]{zabaloCritical2020}\BibitemOpen
  \bibfield  {author} {\bibinfo {author} {\bibfnamefont {A.}~\bibnamefont {Zabalo}}, \bibinfo {author} {\bibfnamefont {M.~J.}\ \bibnamefont {Gullans}}, \bibinfo {author} {\bibfnamefont {J.~H.}\ \bibnamefont {Wilson}}, \bibinfo {author} {\bibfnamefont {S.}~\bibnamefont {Gopalakrishnan}}, \bibinfo {author} {\bibfnamefont {D.~A.}\ \bibnamefont {Huse}},\ and\ \bibinfo {author} {\bibfnamefont {J.~H.}\ \bibnamefont {Pixley}},\ }\bibfield  {title} {\bibinfo {title} {Critical properties of the measurement-induced transition in random quantum circuits},\ }\href {https://doi.org/10.1103/PhysRevB.101.060301} {\bibfield  {journal} {\bibinfo  {journal} {Physical Review B}\ }\textbf {\bibinfo {volume} {101}},\ \bibinfo {pages} {060301} (\bibinfo {year} {2020})}\BibitemShut {NoStop}\bibitem [{\citenamefont {Sommers}\ \emph {et~al.}(2024)\citenamefont {Sommers}, \citenamefont {Huse},\ and\ \citenamefont {Gullans}}]{sommersDynamically2024}\BibitemOpen
  \bibfield  {author} {\bibinfo {author} {\bibfnamefont {G.~M.}\ \bibnamefont {Sommers}}, \bibinfo {author} {\bibfnamefont {D.~A.}\ \bibnamefont {Huse}},\ and\ \bibinfo {author} {\bibfnamefont {M.~J.}\ \bibnamefont {Gullans}},\ }\href@noop {} {\bibinfo {title} {Dynamically generated concatenated codes and their phase diagrams}} (\bibinfo {year} {2024}),\ \Eprint {https://arxiv.org/abs/2409.13801} {arXiv:2409.13801 [cond-mat, physics:quant-ph]} \BibitemShut {NoStop}\bibitem [{\citenamefont {Behrends}\ \emph {et~al.}(2024)\citenamefont {Behrends}, \citenamefont {Venn},\ and\ \citenamefont {B{\'e}ri}}]{behrendsSurface2024}\BibitemOpen
  \bibfield  {author} {\bibinfo {author} {\bibfnamefont {J.}~\bibnamefont {Behrends}}, \bibinfo {author} {\bibfnamefont {F.}~\bibnamefont {Venn}},\ and\ \bibinfo {author} {\bibfnamefont {B.}~\bibnamefont {B{\'e}ri}},\ }\bibfield  {title} {\bibinfo {title} {Surface codes, quantum circuits, and entanglement phases},\ }\href {https://doi.org/10.1103/PhysRevResearch.6.013137} {\bibfield  {journal} {\bibinfo  {journal} {Physical Review Research}\ }\textbf {\bibinfo {volume} {6}},\ \bibinfo {pages} {013137} (\bibinfo {year} {2024})}\BibitemShut {NoStop}\bibitem [{\citenamefont {Lavasani}\ \emph {et~al.}(2021)\citenamefont {Lavasani}, \citenamefont {Alavirad},\ and\ \citenamefont {Barkeshli}}]{lavasaniMeasurementinduced2021}\BibitemOpen
  \bibfield  {author} {\bibinfo {author} {\bibfnamefont {A.}~\bibnamefont {Lavasani}}, \bibinfo {author} {\bibfnamefont {Y.}~\bibnamefont {Alavirad}},\ and\ \bibinfo {author} {\bibfnamefont {M.}~\bibnamefont {Barkeshli}},\ }\bibfield  {title} {\bibinfo {title} {Measurement-induced topological entanglement transitions in symmetric random quantum circuits},\ }\href {https://doi.org/10.1038/s41567-020-01112-z} {\bibfield  {journal} {\bibinfo  {journal} {Nature Physics}\ }\textbf {\bibinfo {volume} {17}},\ \bibinfo {pages} {342} (\bibinfo {year} {2021})},\ \Eprint {https://arxiv.org/abs/2004.07243} {arXiv:2004.07243 [cond-mat, physics:quant-ph]} \BibitemShut {NoStop}\bibitem [{\citenamefont {Bombin}\ and\ \citenamefont {{Martin-Delgado}}(2006)}]{bombinTopological2006}\BibitemOpen
  \bibfield  {author} {\bibinfo {author} {\bibfnamefont {H.}~\bibnamefont {Bombin}}\ and\ \bibinfo {author} {\bibfnamefont {M.~A.}\ \bibnamefont {{Martin-Delgado}}},\ }\bibfield  {title} {\bibinfo {title} {Topological {{Quantum Distillation}}},\ }\href {https://doi.org/10.1103/PhysRevLett.97.180501} {\bibfield  {journal} {\bibinfo  {journal} {Physical Review Letters}\ }\textbf {\bibinfo {volume} {97}},\ \bibinfo {pages} {180501} (\bibinfo {year} {2006})}\BibitemShut {NoStop}\bibitem [{\citenamefont {Bomb{\'i}n}(2015)}]{bombinGauge2015}\BibitemOpen
  \bibfield  {author} {\bibinfo {author} {\bibfnamefont {H.}~\bibnamefont {Bomb{\'i}n}},\ }\bibfield  {title} {\bibinfo {title} {Gauge color codes: Optimal transversal gates and gauge fixing in topological stabilizer codes},\ }\href {https://doi.org/10.1088/1367-2630/17/8/083002} {\bibfield  {journal} {\bibinfo  {journal} {New Journal of Physics}\ }\textbf {\bibinfo {volume} {17}},\ \bibinfo {pages} {083002} (\bibinfo {year} {2015})}\BibitemShut {NoStop}\bibitem [{\citenamefont {Yoshida}(2015)}]{yoshidaTopological2015}\BibitemOpen
  \bibfield  {author} {\bibinfo {author} {\bibfnamefont {B.}~\bibnamefont {Yoshida}},\ }\bibfield  {title} {\bibinfo {title} {Topological color code and symmetry-protected topological phases},\ }\href {https://doi.org/10.1103/PhysRevB.91.245131} {\bibfield  {journal} {\bibinfo  {journal} {Physical Review B}\ }\textbf {\bibinfo {volume} {91}},\ \bibinfo {pages} {245131} (\bibinfo {year} {2015})}\BibitemShut {NoStop}\bibitem [{\citenamefont {Bomb{\'i}n}\ \emph {et~al.}(2012)\citenamefont {Bomb{\'i}n}, \citenamefont {{Duclos-Cianci}},\ and\ \citenamefont {Poulin}}]{bombinUniversal2012}\BibitemOpen
  \bibfield  {author} {\bibinfo {author} {\bibfnamefont {H.}~\bibnamefont {Bomb{\'i}n}}, \bibinfo {author} {\bibfnamefont {G.}~\bibnamefont {{Duclos-Cianci}}},\ and\ \bibinfo {author} {\bibfnamefont {D.}~\bibnamefont {Poulin}},\ }\bibfield  {title} {\bibinfo {title} {Universal topological phase of two-dimensional stabilizer codes},\ }\href {https://doi.org/10.1088/1367-2630/14/7/073048} {\bibfield  {journal} {\bibinfo  {journal} {New Journal of Physics}\ }\textbf {\bibinfo {volume} {14}},\ \bibinfo {pages} {073048} (\bibinfo {year} {2012})}\BibitemShut {NoStop}\bibitem [{\citenamefont {Kubica}\ \emph {et~al.}(2015)\citenamefont {Kubica}, \citenamefont {Yoshida},\ and\ \citenamefont {Pastawski}}]{kubicaUnfolding2015}\BibitemOpen
  \bibfield  {author} {\bibinfo {author} {\bibfnamefont {A.}~\bibnamefont {Kubica}}, \bibinfo {author} {\bibfnamefont {B.}~\bibnamefont {Yoshida}},\ and\ \bibinfo {author} {\bibfnamefont {F.}~\bibnamefont {Pastawski}},\ }\bibfield  {title} {\bibinfo {title} {Unfolding the color code},\ }\href {https://doi.org/10.1088/1367-2630/17/8/083026} {\bibfield  {journal} {\bibinfo  {journal} {New Journal of Physics}\ }\textbf {\bibinfo {volume} {17}},\ \bibinfo {pages} {083026} (\bibinfo {year} {2015})}\BibitemShut {NoStop}\bibitem [{\citenamefont {Haah}(2021)}]{haahClassification2021}\BibitemOpen
  \bibfield  {author} {\bibinfo {author} {\bibfnamefont {J.}~\bibnamefont {Haah}},\ }\bibfield  {title} {\bibinfo {title} {Classification of translation invariant topological {{Pauli}} stabilizer codes for prime dimensional qudits on two-dimensional lattices},\ }\href {https://doi.org/10.1063/5.0021068} {\bibfield  {journal} {\bibinfo  {journal} {Journal of Mathematical Physics}\ }\textbf {\bibinfo {volume} {62}},\ \bibinfo {pages} {012201} (\bibinfo {year} {2021})}\BibitemShut {NoStop}\bibitem [{\citenamefont {Simon}(2023)}]{simonTopological2023}\BibitemOpen
  \bibfield  {author} {\bibinfo {author} {\bibfnamefont {S.~H.}\ \bibnamefont {Simon}},\ }\href@noop {} {\emph {\bibinfo {title} {Topological Quantum}}}\ (\bibinfo  {publisher} {Oxford University Press},\ \bibinfo {address} {Oxford, United Kingdom},\ \bibinfo {year} {2023})\BibitemShut {NoStop}\bibitem [{\citenamefont {Mermin}(1990)}]{merminSimple1990}\BibitemOpen
  \bibfield  {author} {\bibinfo {author} {\bibfnamefont {N.~D.}\ \bibnamefont {Mermin}},\ }\bibfield  {title} {\bibinfo {title} {Simple unified form for the major no-hidden-variables theorems},\ }\href {https://doi.org/10.1103/PhysRevLett.65.3373} {\bibfield  {journal} {\bibinfo  {journal} {Physical Review Letters}\ }\textbf {\bibinfo {volume} {65}},\ \bibinfo {pages} {3373} (\bibinfo {year} {1990})}\BibitemShut {NoStop}\bibitem [{\citenamefont {Peres}(1991)}]{peresTwo1991}\BibitemOpen
  \bibfield  {author} {\bibinfo {author} {\bibfnamefont {A.}~\bibnamefont {Peres}},\ }\bibfield  {title} {\bibinfo {title} {Two simple proofs of the {{Kochen-Specker}} theorem},\ }\href {https://doi.org/10.1088/0305-4470/24/4/003} {\bibfield  {journal} {\bibinfo  {journal} {Journal of Physics A: Mathematical and General}\ }\textbf {\bibinfo {volume} {24}},\ \bibinfo {pages} {L175} (\bibinfo {year} {1991})}\BibitemShut {NoStop}\bibitem [{\citenamefont {Barkeshli}\ \emph {et~al.}(2019)\citenamefont {Barkeshli}, \citenamefont {Bonderson}, \citenamefont {Cheng},\ and\ \citenamefont {Wang}}]{barkeshliSymmetry2019}\BibitemOpen
  \bibfield  {author} {\bibinfo {author} {\bibfnamefont {M.}~\bibnamefont {Barkeshli}}, \bibinfo {author} {\bibfnamefont {P.}~\bibnamefont {Bonderson}}, \bibinfo {author} {\bibfnamefont {M.}~\bibnamefont {Cheng}},\ and\ \bibinfo {author} {\bibfnamefont {Z.}~\bibnamefont {Wang}},\ }\bibfield  {title} {\bibinfo {title} {Symmetry fractionalization, defects, and gauging of topological phases},\ }\href {https://doi.org/10.1103/PhysRevB.100.115147} {\bibfield  {journal} {\bibinfo  {journal} {Physical Review B}\ }\textbf {\bibinfo {volume} {100}},\ \bibinfo {pages} {115147} (\bibinfo {year} {2019})}\BibitemShut {NoStop}\bibitem [{\citenamefont {Bombin}(2010{\natexlab{b}})}]{bombinTopological2010}\BibitemOpen
  \bibfield  {author} {\bibinfo {author} {\bibfnamefont {H.}~\bibnamefont {Bombin}},\ }\bibfield  {title} {\bibinfo {title} {Topological subsystem codes},\ }\href {https://doi.org/10.1103/PhysRevA.81.032301} {\bibfield  {journal} {\bibinfo  {journal} {Physical Review A}\ }\textbf {\bibinfo {volume} {81}},\ \bibinfo {pages} {032301} (\bibinfo {year} {2010}{\natexlab{b}})}\BibitemShut {NoStop}\bibitem [{\citenamefont {Bais}\ and\ \citenamefont {Slingerland}(2009)}]{baisCondensateinduced2009}\BibitemOpen
  \bibfield  {author} {\bibinfo {author} {\bibfnamefont {F.~A.}\ \bibnamefont {Bais}}\ and\ \bibinfo {author} {\bibfnamefont {J.~K.}\ \bibnamefont {Slingerland}},\ }\bibfield  {title} {\bibinfo {title} {Condensate-induced transitions between topologically ordered phases},\ }\href {https://doi.org/10.1103/PhysRevB.79.045316} {\bibfield  {journal} {\bibinfo  {journal} {Physical Review B}\ }\textbf {\bibinfo {volume} {79}},\ \bibinfo {pages} {045316} (\bibinfo {year} {2009})}\BibitemShut {NoStop}\bibitem [{\citenamefont {Hu}\ \emph {et~al.}(2022)\citenamefont {Hu}, \citenamefont {Huang}, \citenamefont {Hung},\ and\ \citenamefont {Wan}}]{huAnyon2022}\BibitemOpen
  \bibfield  {author} {\bibinfo {author} {\bibfnamefont {Y.}~\bibnamefont {Hu}}, \bibinfo {author} {\bibfnamefont {Z.}~\bibnamefont {Huang}}, \bibinfo {author} {\bibfnamefont {L.-Y.}\ \bibnamefont {Hung}},\ and\ \bibinfo {author} {\bibfnamefont {Y.}~\bibnamefont {Wan}},\ }\bibfield  {title} {\bibinfo {title} {Anyon condensation: Coherent states, symmetry enriched topological phases, {{Goldstone}} theorem, and dynamical rearrangement of symmetry},\ }\href {https://doi.org/10.1007/JHEP03(2022)026} {\bibfield  {journal} {\bibinfo  {journal} {Journal of High Energy Physics}\ }\textbf {\bibinfo {volume} {2022}},\ \bibinfo {pages} {26} (\bibinfo {year} {2022})}\BibitemShut {NoStop}\bibitem [{\citenamefont {Eli{\"e}ns}\ \emph {et~al.}(2014)\citenamefont {Eli{\"e}ns}, \citenamefont {Romers},\ and\ \citenamefont {Bais}}]{eliensDiagrammatics2014}\BibitemOpen
  \bibfield  {author} {\bibinfo {author} {\bibfnamefont {I.~S.}\ \bibnamefont {Eli{\"e}ns}}, \bibinfo {author} {\bibfnamefont {J.~C.}\ \bibnamefont {Romers}},\ and\ \bibinfo {author} {\bibfnamefont {F.~A.}\ \bibnamefont {Bais}},\ }\bibfield  {title} {\bibinfo {title} {Diagrammatics for {{Bose}} condensation in anyon theories},\ }\href {https://doi.org/10.1103/PhysRevB.90.195130} {\bibfield  {journal} {\bibinfo  {journal} {Physical Review B}\ }\textbf {\bibinfo {volume} {90}},\ \bibinfo {pages} {195130} (\bibinfo {year} {2014})}\BibitemShut {NoStop}\bibitem [{\citenamefont {Ellison}\ \emph {et~al.}(2023{\natexlab{b}})\citenamefont {Ellison}, \citenamefont {Chen}, \citenamefont {Dua}, \citenamefont {Shirley}, \citenamefont {Tantivasadakarn},\ and\ \citenamefont {Williamson}}]{ellisonPauli2023}\BibitemOpen
  \bibfield  {author} {\bibinfo {author} {\bibfnamefont {T.~D.}\ \bibnamefont {Ellison}}, \bibinfo {author} {\bibfnamefont {Y.-A.}\ \bibnamefont {Chen}}, \bibinfo {author} {\bibfnamefont {A.}~\bibnamefont {Dua}}, \bibinfo {author} {\bibfnamefont {W.}~\bibnamefont {Shirley}}, \bibinfo {author} {\bibfnamefont {N.}~\bibnamefont {Tantivasadakarn}},\ and\ \bibinfo {author} {\bibfnamefont {D.~J.}\ \bibnamefont {Williamson}},\ }\bibfield  {title} {\bibinfo {title} {Pauli topological subsystem codes from {{Abelian}} anyon theories},\ }\href {https://doi.org/10.22331/q-2023-10-12-1137} {\bibfield  {journal} {\bibinfo  {journal} {Quantum}\ }\textbf {\bibinfo {volume} {7}},\ \bibinfo {pages} {1137} (\bibinfo {year} {2023}{\natexlab{b}})}\BibitemShut {NoStop}\bibitem [{\citenamefont {Kong}(2014)}]{kongAnyon2014}\BibitemOpen
  \bibfield  {author} {\bibinfo {author} {\bibfnamefont {L.}~\bibnamefont {Kong}},\ }\bibfield  {title} {\bibinfo {title} {Anyon condensation and tensor categories},\ }\href {https://doi.org/10.1016/j.nuclphysb.2014.07.003} {\bibfield  {journal} {\bibinfo  {journal} {Nuclear Physics B}\ }\textbf {\bibinfo {volume} {886}},\ \bibinfo {pages} {436} (\bibinfo {year} {2014})}\BibitemShut {NoStop}\bibitem [{\citenamefont {Bais}\ \emph {et~al.}(2009)\citenamefont {Bais}, \citenamefont {Slingerland},\ and\ \citenamefont {Haaker}}]{baisTheory2009}\BibitemOpen
  \bibfield  {author} {\bibinfo {author} {\bibfnamefont {F.~A.}\ \bibnamefont {Bais}}, \bibinfo {author} {\bibfnamefont {J.~K.}\ \bibnamefont {Slingerland}},\ and\ \bibinfo {author} {\bibfnamefont {S.~M.}\ \bibnamefont {Haaker}},\ }\bibfield  {title} {\bibinfo {title} {Theory of {{Topological Edges}} and {{Domain Walls}}},\ }\href {https://doi.org/10.1103/PhysRevLett.102.220403} {\bibfield  {journal} {\bibinfo  {journal} {Physical Review Letters}\ }\textbf {\bibinfo {volume} {102}},\ \bibinfo {pages} {220403} (\bibinfo {year} {2009})}\BibitemShut {NoStop}\bibitem [{\citenamefont {Aasen}\ \emph {et~al.}(2023{\natexlab{b}})\citenamefont {Aasen}, \citenamefont {Haah}, \citenamefont {Li},\ and\ \citenamefont {Mong}}]{aasenMeasurement2023}\BibitemOpen
  \bibfield  {author} {\bibinfo {author} {\bibfnamefont {D.}~\bibnamefont {Aasen}}, \bibinfo {author} {\bibfnamefont {J.}~\bibnamefont {Haah}}, \bibinfo {author} {\bibfnamefont {Z.}~\bibnamefont {Li}},\ and\ \bibinfo {author} {\bibfnamefont {R.~S.~K.}\ \bibnamefont {Mong}},\ }\href@noop {} {\bibinfo {title} {Measurement {{Quantum Cellular Automata}} and {{Anomalies}} in {{Floquet Codes}}}} (\bibinfo {year} {2023}{\natexlab{b}}),\ \Eprint {https://arxiv.org/abs/2304.01277} {arXiv:2304.01277 [cond-mat, physics:math-ph, physics:quant-ph]} \BibitemShut {NoStop}\bibitem [{\citenamefont {Burnell}(2018)}]{burnellAnyon2018}\BibitemOpen
  \bibfield  {author} {\bibinfo {author} {\bibfnamefont {F.}~\bibnamefont {Burnell}},\ }\bibfield  {title} {\bibinfo {title} {Anyon {{Condensation}} and {{Its Applications}}},\ }\href {https://doi.org/10.1146/annurev-conmatphys-033117-054154} {\bibfield  {journal} {\bibinfo  {journal} {Annual Review of Condensed Matter Physics}\ }\textbf {\bibinfo {volume} {9}},\ \bibinfo {pages} {307} (\bibinfo {year} {2018})}\BibitemShut {NoStop}\bibitem [{\citenamefont {Bauer}(2024{\natexlab{a}})}]{bauerFloquet2024}\BibitemOpen
  \bibfield  {author} {\bibinfo {author} {\bibfnamefont {A.}~\bibnamefont {Bauer}},\ }\href@noop {} {\bibinfo {title} {The x+y {{Floquet}} code: {{A}} simple example for topological quantum computation in the path integral approach}} (\bibinfo {year} {2024}{\natexlab{a}}),\ \Eprint {https://arxiv.org/abs/2408.07265} {arXiv:2408.07265 [quant-ph]} \BibitemShut {NoStop}\bibitem [{\citenamefont {{de la Fuente}}\ \emph {et~al.}(2024)\citenamefont {{de la Fuente}}, \citenamefont {Old}, \citenamefont {{Townsend-Teague}}, \citenamefont {Rispler}, \citenamefont {Eisert},\ and\ \citenamefont {M{\"u}ller}}]{delafuenteXYZ2024}\BibitemOpen
  \bibfield  {author} {\bibinfo {author} {\bibfnamefont {J.~C.~M.}\ \bibnamefont {{de la Fuente}}}, \bibinfo {author} {\bibfnamefont {J.}~\bibnamefont {Old}}, \bibinfo {author} {\bibfnamefont {A.}~\bibnamefont {{Townsend-Teague}}}, \bibinfo {author} {\bibfnamefont {M.}~\bibnamefont {Rispler}}, \bibinfo {author} {\bibfnamefont {J.}~\bibnamefont {Eisert}},\ and\ \bibinfo {author} {\bibfnamefont {M.}~\bibnamefont {M{\"u}ller}},\ }\href@noop {} {\bibinfo {title} {The {{XYZ}} ruby code: {{Making}} a case for a three-colored graphical calculus for quantum error correction in spacetime}} (\bibinfo {year} {2024}),\ \Eprint {https://arxiv.org/abs/2407.08566} {arXiv:2407.08566 [cond-mat, physics:math-ph, physics:quant-ph]} \BibitemShut {NoStop}\bibitem [{\citenamefont {Dua}\ \emph {et~al.}(2024)\citenamefont {Dua}, \citenamefont {Tantivasadakarn}, \citenamefont {Sullivan},\ and\ \citenamefont {Ellison}}]{duaEngineering2024}\BibitemOpen
  \bibfield  {author} {\bibinfo {author} {\bibfnamefont {A.}~\bibnamefont {Dua}}, \bibinfo {author} {\bibfnamefont {N.}~\bibnamefont {Tantivasadakarn}}, \bibinfo {author} {\bibfnamefont {J.}~\bibnamefont {Sullivan}},\ and\ \bibinfo {author} {\bibfnamefont {T.~D.}\ \bibnamefont {Ellison}},\ }\bibfield  {title} {\bibinfo {title} {Engineering {{3D Floquet Codes}} by {{Rewinding}}},\ }\href {https://doi.org/10.1103/PRXQuantum.5.020305} {\bibfield  {journal} {\bibinfo  {journal} {PRX Quantum}\ }\textbf {\bibinfo {volume} {5}},\ \bibinfo {pages} {020305} (\bibinfo {year} {2024})}\BibitemShut {NoStop}\bibitem [{\citenamefont {Higgott}\ and\ \citenamefont {Breuckmann}(2023)}]{higgottConstructions2023}\BibitemOpen
  \bibfield  {author} {\bibinfo {author} {\bibfnamefont {O.}~\bibnamefont {Higgott}}\ and\ \bibinfo {author} {\bibfnamefont {N.~P.}\ \bibnamefont {Breuckmann}},\ }\href@noop {} {\bibinfo {title} {Constructions and performance of hyperbolic and semi-hyperbolic {{Floquet}} codes}} (\bibinfo {year} {2023}),\ \Eprint {https://arxiv.org/abs/2308.03750} {arXiv:2308.03750 [quant-ph]} \BibitemShut {NoStop}\bibitem [{\citenamefont {Sullivan}\ \emph {et~al.}(2023)\citenamefont {Sullivan}, \citenamefont {Wen},\ and\ \citenamefont {Potter}}]{sullivanFloquet2023}\BibitemOpen
  \bibfield  {author} {\bibinfo {author} {\bibfnamefont {J.}~\bibnamefont {Sullivan}}, \bibinfo {author} {\bibfnamefont {R.}~\bibnamefont {Wen}},\ and\ \bibinfo {author} {\bibfnamefont {A.~C.}\ \bibnamefont {Potter}},\ }\href@noop {} {\bibinfo {title} {Floquet codes and phases in twist-defect networks}} (\bibinfo {year} {2023}),\ \Eprint {https://arxiv.org/abs/2303.17664} {arXiv:2303.17664 [cond-mat, physics:quant-ph]} \BibitemShut {NoStop}\bibitem [{\citenamefont {Bauer}(2024{\natexlab{b}})}]{bauerTopological2024}\BibitemOpen
  \bibfield  {author} {\bibinfo {author} {\bibfnamefont {A.}~\bibnamefont {Bauer}},\ }\bibfield  {title} {\bibinfo {title} {Topological error correcting processes from fixed-point path integrals},\ }\href {https://doi.org/10.22331/q-2024-03-20-1288} {\bibfield  {journal} {\bibinfo  {journal} {Quantum}\ }\textbf {\bibinfo {volume} {8}},\ \bibinfo {pages} {1288} (\bibinfo {year} {2024}{\natexlab{b}})},\ \Eprint {https://arxiv.org/abs/2303.16405} {arXiv:2303.16405} \BibitemShut {NoStop}\bibitem [{\citenamefont {Bauer}(2024{\natexlab{c}})}]{bauerLowoverhead2024}\BibitemOpen
  \bibfield  {author} {\bibinfo {author} {\bibfnamefont {A.}~\bibnamefont {Bauer}},\ }\href@noop {} {\bibinfo {title} {Low-overhead non-{{Clifford}} topological fault-tolerant circuits for all non-chiral abelian topological phases}} (\bibinfo {year} {2024}{\natexlab{c}}),\ \Eprint {https://arxiv.org/abs/2403.12119} {arXiv:2403.12119 [cond-mat, physics:quant-ph]} \BibitemShut {NoStop}\bibitem [{\citenamefont {{Townsend-Teague}}\ \emph {et~al.}(2023)\citenamefont {{Townsend-Teague}}, \citenamefont {Magdalena De La~Fuente},\ and\ \citenamefont {Kesselring}}]{townsend-teagueFloquetifying2023}\BibitemOpen
  \bibfield  {author} {\bibinfo {author} {\bibfnamefont {A.}~\bibnamefont {{Townsend-Teague}}}, \bibinfo {author} {\bibfnamefont {J.}~\bibnamefont {Magdalena De La~Fuente}},\ and\ \bibinfo {author} {\bibfnamefont {M.~S.}\ \bibnamefont {Kesselring}},\ }\bibfield  {title} {\bibinfo {title} {Floquetifying the {{Colour Code}}},\ }\href {https://doi.org/10.4204/EPTCS.384.14} {\bibfield  {journal} {\bibinfo  {journal} {Electronic Proceedings in Theoretical Computer Science}\ }\textbf {\bibinfo {volume} {384}},\ \bibinfo {pages} {265} (\bibinfo {year} {2023})},\ \Eprint {https://arxiv.org/abs/2307.11136} {arXiv:2307.11136 [math-ph, physics:quant-ph]} \BibitemShut {NoStop}\bibitem [{\citenamefont {Bombin}\ \emph {et~al.}(2024)\citenamefont {Bombin}, \citenamefont {Litinski}, \citenamefont {Nickerson}, \citenamefont {Pastawski},\ and\ \citenamefont {Roberts}}]{bombinUnifying2024}\BibitemOpen
  \bibfield  {author} {\bibinfo {author} {\bibfnamefont {H.}~\bibnamefont {Bombin}}, \bibinfo {author} {\bibfnamefont {D.}~\bibnamefont {Litinski}}, \bibinfo {author} {\bibfnamefont {N.}~\bibnamefont {Nickerson}}, \bibinfo {author} {\bibfnamefont {F.}~\bibnamefont {Pastawski}},\ and\ \bibinfo {author} {\bibfnamefont {S.}~\bibnamefont {Roberts}},\ }\bibfield  {title} {\bibinfo {title} {Unifying flavors of fault tolerance with the {{ZX}} calculus},\ }\href {https://doi.org/10.22331/q-2024-06-18-1379} {\bibfield  {journal} {\bibinfo  {journal} {Quantum}\ }\textbf {\bibinfo {volume} {8}},\ \bibinfo {pages} {1379} (\bibinfo {year} {2024})}\BibitemShut {NoStop}\bibitem [{\citenamefont {Motamarri}\ \emph {et~al.}(2024)\citenamefont {Motamarri}, \citenamefont {McLauchlan},\ and\ \citenamefont {B{\'e}ri}}]{motamarriSymTFT2024}\BibitemOpen
  \bibfield  {author} {\bibinfo {author} {\bibfnamefont {V.}~\bibnamefont {Motamarri}}, \bibinfo {author} {\bibfnamefont {C.}~\bibnamefont {McLauchlan}},\ and\ \bibinfo {author} {\bibfnamefont {B.}~\bibnamefont {B{\'e}ri}},\ }\href@noop {} {\bibinfo {title} {{{SymTFT}} out of equilibrium: From time crystals to braided drives and {{Floquet}} codes}} (\bibinfo {year} {2024}),\ \Eprint {https://arxiv.org/abs/2312.17176} {arXiv:2312.17176 [cond-mat, physics:hep-th, physics:quant-ph]} \BibitemShut {NoStop}\bibitem [{\citenamefont {Gottesman}(1998)}]{gottesmanHeisenberg1998}\BibitemOpen
  \bibfield  {author} {\bibinfo {author} {\bibfnamefont {D.}~\bibnamefont {Gottesman}},\ }\bibfield  {title} {\bibinfo {title} {The {{Heisenberg Representation}} of {{Quantum Computers}}},\ }in\ \href@noop {} {\emph {\bibinfo {booktitle} {Proceedings of the {{XXII International Colloquium}} on {{Group Theoretical Methods}} in {{Physics}}}}},\ \bibinfo {editor} {edited by\ \bibinfo {editor} {\bibfnamefont {S.~P.}\ \bibnamefont {Corney}}, \bibinfo {editor} {\bibfnamefont {R.}~\bibnamefont {Delbourgo}},\ and\ \bibinfo {editor} {\bibfnamefont {P.~D.}\ \bibnamefont {Jarvis}}}\ (\bibinfo  {publisher} {International Press},\ \bibinfo {address} {Cambridge, MA},\ \bibinfo {year} {1998})\ pp.\ \bibinfo {pages} {32--43},\ \Eprint {https://arxiv.org/abs/quant-ph/9807006} {arXiv:quant-ph/9807006} \BibitemShut {NoStop}\bibitem [{\citenamefont {Chubb}\ and\ \citenamefont {Flammia}(2021)}]{chubbStatistical2021}\BibitemOpen
  \bibfield  {author} {\bibinfo {author} {\bibfnamefont {C.~T.}\ \bibnamefont {Chubb}}\ and\ \bibinfo {author} {\bibfnamefont {S.~T.}\ \bibnamefont {Flammia}},\ }\bibfield  {title} {\bibinfo {title} {Statistical mechanical models for quantum codes with correlated noise},\ }\href {https://doi.org/10.4171/aihpd/105} {\bibfield  {journal} {\bibinfo  {journal} {Annales de l'Institut Henri Poincar{\'e} D, Combinatorics, Physics and their Interactions}\ }\textbf {\bibinfo {volume} {8}},\ \bibinfo {pages} {269} (\bibinfo {year} {2021})}\BibitemShut {NoStop}\bibitem [{\citenamefont {Nielsen}\ and\ \citenamefont {Chuang}(2010)}]{nielsenQuantum2010}\BibitemOpen
  \bibfield  {author} {\bibinfo {author} {\bibfnamefont {M.~A.}\ \bibnamefont {Nielsen}}\ and\ \bibinfo {author} {\bibfnamefont {I.~L.}\ \bibnamefont {Chuang}},\ }\href@noop {} {\emph {\bibinfo {title} {Quantum Computation and Quantum Information}}},\ \bibinfo {edition} {10th}\ ed.\ (\bibinfo  {publisher} {Cambridge University Press},\ \bibinfo {address} {Cambridge ; New York},\ \bibinfo {year} {2010})\BibitemShut {NoStop}\bibitem [{\citenamefont {Loulidi}\ \emph {et~al.}(2024)\citenamefont {Loulidi}, \citenamefont {Nechita},\ and\ \citenamefont {Pellegrini}}]{loulidiPhysical2024}\BibitemOpen
  \bibfield  {author} {\bibinfo {author} {\bibfnamefont {F.}~\bibnamefont {Loulidi}}, \bibinfo {author} {\bibfnamefont {I.}~\bibnamefont {Nechita}},\ and\ \bibinfo {author} {\bibfnamefont {C.}~\bibnamefont {Pellegrini}},\ }\href {https://doi.org/10.48550/arXiv.2305.19766} {\bibinfo {title} {A physical noise model for quantum measurements}} (\bibinfo {year} {2024}),\ \Eprint {https://arxiv.org/abs/2305.19766} {arXiv:2305.19766 [quant-ph]} \BibitemShut {NoStop}\bibitem [{\citenamefont {Grassl}\ \emph {et~al.}(1997)\citenamefont {Grassl}, \citenamefont {Beth},\ and\ \citenamefont {Pellizzari}}]{grasslCodes1997}\BibitemOpen
  \bibfield  {author} {\bibinfo {author} {\bibfnamefont {M.}~\bibnamefont {Grassl}}, \bibinfo {author} {\bibfnamefont {{\relax Th}.}~\bibnamefont {Beth}},\ and\ \bibinfo {author} {\bibfnamefont {T.}~\bibnamefont {Pellizzari}},\ }\bibfield  {title} {\bibinfo {title} {Codes for the quantum erasure channel},\ }\href {https://doi.org/10.1103/PhysRevA.56.33} {\bibfield  {journal} {\bibinfo  {journal} {Physical Review A}\ }\textbf {\bibinfo {volume} {56}},\ \bibinfo {pages} {33} (\bibinfo {year} {1997})}\BibitemShut {NoStop}\bibitem [{\citenamefont {Gu}\ \emph {et~al.}(2023)\citenamefont {Gu}, \citenamefont {Retzker},\ and\ \citenamefont {Kubica}}]{guFaulttolerant2023}\BibitemOpen
  \bibfield  {author} {\bibinfo {author} {\bibfnamefont {S.}~\bibnamefont {Gu}}, \bibinfo {author} {\bibfnamefont {A.}~\bibnamefont {Retzker}},\ and\ \bibinfo {author} {\bibfnamefont {A.}~\bibnamefont {Kubica}},\ }\href {https://doi.org/10.48550/arXiv.2312.14060} {\bibinfo {title} {Fault-tolerant quantum architectures based on erasure qubits}} (\bibinfo {year} {2023}),\ \Eprint {https://arxiv.org/abs/2312.14060} {arXiv:2312.14060 [quant-ph]} \BibitemShut {NoStop}\bibitem [{\citenamefont {Venn}\ \emph {et~al.}(2023)\citenamefont {Venn}, \citenamefont {Behrends},\ and\ \citenamefont {B{\'e}ri}}]{vennCoherentError2023}\BibitemOpen
  \bibfield  {author} {\bibinfo {author} {\bibfnamefont {F.}~\bibnamefont {Venn}}, \bibinfo {author} {\bibfnamefont {J.}~\bibnamefont {Behrends}},\ and\ \bibinfo {author} {\bibfnamefont {B.}~\bibnamefont {B{\'e}ri}},\ }\bibfield  {title} {\bibinfo {title} {Coherent-{{Error Threshold}} for {{Surface Codes}} from {{Majorana Delocalization}}},\ }\href {https://doi.org/10.1103/PhysRevLett.131.060603} {\bibfield  {journal} {\bibinfo  {journal} {Physical Review Letters}\ }\textbf {\bibinfo {volume} {131}},\ \bibinfo {pages} {060603} (\bibinfo {year} {2023})}\BibitemShut {NoStop}\bibitem [{\citenamefont {Botzung}\ \emph {et~al.}(2023)\citenamefont {Botzung}, \citenamefont {Buchhold}, \citenamefont {Diehl},\ and\ \citenamefont {M{\"u}ller}}]{botzungRobustness2023}\BibitemOpen
  \bibfield  {author} {\bibinfo {author} {\bibfnamefont {T.}~\bibnamefont {Botzung}}, \bibinfo {author} {\bibfnamefont {M.}~\bibnamefont {Buchhold}}, \bibinfo {author} {\bibfnamefont {S.}~\bibnamefont {Diehl}},\ and\ \bibinfo {author} {\bibfnamefont {M.}~\bibnamefont {M{\"u}ller}},\ }\href@noop {} {\bibinfo {title} {Robustness and measurement-induced percolation of the surface code}} (\bibinfo {year} {2023}),\ \Eprint {https://arxiv.org/abs/2311.14338} {arXiv:2311.14338 [cond-mat, physics:quant-ph]} \BibitemShut {NoStop}\bibitem [{\citenamefont {Zhao}\ and\ \citenamefont {Wan}(2024)}]{zhaoNonabelian2024}\BibitemOpen
  \bibfield  {author} {\bibinfo {author} {\bibfnamefont {Y.}~\bibnamefont {Zhao}}\ and\ \bibinfo {author} {\bibfnamefont {Y.}~\bibnamefont {Wan}},\ }\href@noop {} {\bibinfo {title} {Nonabelian {{Anyon Condenstion}} in 2+1d topological orders: {{A String-Net Model Realization}}}} (\bibinfo {year} {2024}),\ \Eprint {https://arxiv.org/abs/2409.05852} {arXiv:2409.05852 [cond-mat]} \BibitemShut {NoStop}\bibitem [{\citenamefont {Zhang}\ \emph {et~al.}(2023)\citenamefont {Zhang}, \citenamefont {Aasen},\ and\ \citenamefont {Vijay}}]{zhangXcube2023}\BibitemOpen
  \bibfield  {author} {\bibinfo {author} {\bibfnamefont {Z.}~\bibnamefont {Zhang}}, \bibinfo {author} {\bibfnamefont {D.}~\bibnamefont {Aasen}},\ and\ \bibinfo {author} {\bibfnamefont {S.}~\bibnamefont {Vijay}},\ }\bibfield  {title} {\bibinfo {title} {X-cube {{Floquet}} code: {{A}} dynamical quantum error correcting code with a subextensive number of logical qubits},\ }\href {https://doi.org/10.1103/PhysRevB.108.205116} {\bibfield  {journal} {\bibinfo  {journal} {Physical Review B}\ }\textbf {\bibinfo {volume} {108}},\ \bibinfo {pages} {205116} (\bibinfo {year} {2023})}\BibitemShut {NoStop}\bibitem [{\citenamefont {Humphreys}(1996)}]{humphreysCourse1996}\BibitemOpen
  \bibfield  {author} {\bibinfo {author} {\bibfnamefont {J.~F.}\ \bibnamefont {Humphreys}},\ }\href@noop {} {\emph {\bibinfo {title} {A Course in Group Theory}}},\ Oxford Science Publications\ (\bibinfo  {publisher} {Oxford University Press},\ \bibinfo {address} {Oxford; New York},\ \bibinfo {year} {1996})\BibitemShut {NoStop}\bibitem [{\citenamefont {Stauffer}\ and\ \citenamefont {Aharony}(2003)}]{staufferIntroduction2003}\BibitemOpen
  \bibfield  {author} {\bibinfo {author} {\bibfnamefont {D.}~\bibnamefont {Stauffer}}\ and\ \bibinfo {author} {\bibfnamefont {A.}~\bibnamefont {Aharony}},\ }\href@noop {} {\emph {\bibinfo {title} {Introduction to Percolation Theory}}},\ \bibinfo {edition} {rev. 2. ed., transferred to digital print}\ ed.\ (\bibinfo  {publisher} {Taylor \& Francis},\ \bibinfo {address} {London},\ \bibinfo {year} {2003})\BibitemShut {NoStop}\bibitem [{\citenamefont {Sykes}\ and\ \citenamefont {Essam}(1964)}]{sykesExact1964}\BibitemOpen
  \bibfield  {author} {\bibinfo {author} {\bibfnamefont {M.~F.}\ \bibnamefont {Sykes}}\ and\ \bibinfo {author} {\bibfnamefont {J.~W.}\ \bibnamefont {Essam}},\ }\bibfield  {title} {\bibinfo {title} {Exact {{Critical Percolation Probabilities}} for {{Site}} and {{Bond Problems}} in {{Two Dimensions}}},\ }\href {https://doi.org/10.1063/1.1704215} {\bibfield  {journal} {\bibinfo  {journal} {Journal of Mathematical Physics}\ }\textbf {\bibinfo {volume} {5}},\ \bibinfo {pages} {1117} (\bibinfo {year} {1964})}\BibitemShut {NoStop}\bibitem [{\citenamefont {Krastanov}(2024)}]{QuantumClifford}\BibitemOpen
  \bibfield  {author} {\bibinfo {author} {\bibfnamefont {S.}~\bibnamefont {Krastanov}},\ }\href@noop {} {\bibinfo {title} {Quantumclifford.jl}},\ \bibinfo {howpublished} {\url{https://github.com/QuantumSavory/QuantumClifford.jl}} (\bibinfo {year} {2024})\BibitemShut {NoStop}\bibitem [{\citenamefont {Aaronson}\ and\ \citenamefont {Gottesman}(2004)}]{aaronsonImproved2004}\BibitemOpen
  \bibfield  {author} {\bibinfo {author} {\bibfnamefont {S.}~\bibnamefont {Aaronson}}\ and\ \bibinfo {author} {\bibfnamefont {D.}~\bibnamefont {Gottesman}},\ }\bibfield  {title} {\bibinfo {title} {Improved simulation of stabilizer circuits},\ }\href {https://doi.org/10.1103/PhysRevA.70.052328} {\bibfield  {journal} {\bibinfo  {journal} {Physical Review A}\ }\textbf {\bibinfo {volume} {70}},\ \bibinfo {pages} {052328} (\bibinfo {year} {2004})}\BibitemShut {NoStop}\bibitem [{\citenamefont {Fattal}\ \emph {et~al.}(2004)\citenamefont {Fattal}, \citenamefont {Cubitt}, \citenamefont {Yamamoto}, \citenamefont {Bravyi},\ and\ \citenamefont {Chuang}}]{fattalEntanglement2004}\BibitemOpen
  \bibfield  {author} {\bibinfo {author} {\bibfnamefont {D.}~\bibnamefont {Fattal}}, \bibinfo {author} {\bibfnamefont {T.~S.}\ \bibnamefont {Cubitt}}, \bibinfo {author} {\bibfnamefont {Y.}~\bibnamefont {Yamamoto}}, \bibinfo {author} {\bibfnamefont {S.}~\bibnamefont {Bravyi}},\ and\ \bibinfo {author} {\bibfnamefont {I.~L.}\ \bibnamefont {Chuang}},\ }\href@noop {} {\bibinfo {title} {Entanglement in the stabilizer formalism}} (\bibinfo {year} {2004}),\ \Eprint {https://arxiv.org/abs/quant-ph/0406168} {arXiv:quant-ph/0406168} \BibitemShut {NoStop}\end{thebibliography}
\end{document}